\documentclass[11pt,a4paper]{article}
\usepackage[margin=1in]{geometry}
\usepackage{lmodern}
\usepackage{amsthm}
\usepackage{cite}
\usepackage{caption}
\theoremstyle{plain}
\newtheorem{theorem}{Theorem}
\newtheorem{lemma}[theorem]{Lemma}
\newtheorem{corollary}[theorem]{Corollary}
\theoremstyle{definition}
\newtheorem{definition}[theorem]{Definition}
\newtheorem{construction}[theorem]{Construction}
\newtheorem{assumption}[theorem]{External building block}
\newcommand{\doi}[1]{\href{https://doi.org/#1}{\nolinkurl{doi:#1}}}
\usepackage[T1]{fontenc}
\usepackage{amsmath,amssymb,mathtools,bm}
\usepackage{microtype,booktabs,array,tabularx,longtable}
\usepackage{enumitem,needspace,listings}
\usepackage{graphicx,tikz,float}
\usetikzlibrary{arrows.meta,positioning,calc,fit,backgrounds}
\tikzset{
 bbjarr/.style={-{Latex[length=1.8mm]},line width=.45pt},
 bbjcmp/.style={dashed,line width=.5pt},
 bbjbox/.style={draw,line width=.45pt,fill=white,align=center,inner sep=5pt,font=\small},
 bbjstate/.style={align=center,inner sep=2pt,font=\small},
 bbjlab/.style={align=center,fill=white,inner sep=2pt,font=\footnotesize}
}
\usepackage{hyperref}
\hypersetup{hidelinks,bookmarksdepth=3,pdftitle={Rate 1/5 Non-Malleable Codes against Entangled Split-State Tampering},pdfauthor={Naresh Goud Boddu},pdfsubject={Classical messages and classical codewords; information-theoretic two-split non-malleability}}
\usepackage[nameinlink,noabbrev]{cleveref}
\crefname{theorem}{Theorem}{Theorems}
\crefname{lemma}{Lemma}{Lemmas}
\crefname{corollary}{Corollary}{Corollaries}
\crefname{definition}{Definition}{Definitions}
\crefname{section}{Section}{Sections}
\crefname{table}{Table}{Tables}
\crefname{construction}{Construction}{Constructions}
\crefname{assumption}{External building block}{External building blocks}
\newcommand{\bits}{\{0,1\}}
\newcommand{\F}{\mathbb F}
\newcommand{\E}{\mathbb E}
\newcommand{\Tr}{\operatorname{Tr}}
\newcommand{\Id}{\mathsf{id}}
\newcommand{\Enc}{\mathsf{Enc}}
\newcommand{\Dec}{\mathsf{Dec}}
\newcommand{\nmext}{\mathsf{2nmExt}}
\newcommand{\Copy}{\mathsf{Copy}}
\newcommand{\Sym}{\mathsf{Sym}}
\newcommand{\LeftShare}{L_{\mathrm{sh}}}
\newcommand{\RightShare}{R_{\mathrm{sh}}}
\newcommand{\sflag}{s_{\mathrm{flag}}}
\newcommand{\same}{\mathsf{same}}
\newcommand{\pref}{\operatorname{pref}_{\ell}}
\newcommand{\nz}{\operatorname{nz}}
\newcommand{\ind}{\mathbf 1}
\newcommand{\supp}{\operatorname{supp}}
\newcommand{\ket}[1]{\lvert#1\rangle}
\newcommand{\bra}[1]{\langle#1\rvert}
\newcommand{\normone}[1]{\left\lVert#1\right\rVert_1}

\newcommand{\epsp}{\varepsilon_{\rm pair}}
\newcommand{\epsn}{\varepsilon_{\rm NMC}}

\newcommand{\cK}{\mathcal K}
\newcommand{\cA}{\mathcal A}

\newcommand{\heading}[1]{\par\smallskip\noindent\textbf{#1}\enspace}
\newcolumntype{L}[1]{>{\raggedright\arraybackslash}p{#1}}
\allowdisplaybreaks[2]
\setlist[enumerate]{itemsep=2pt,topsep=4pt}
\setlist[itemize]{itemsep=2pt,topsep=4pt}
\begin{document}
\title{Rate 1/5 Non-Malleable Codes\\against Entangled Split-State Tampering}
\author{Naresh Goud Boddu\thanks{Email: \href{mailto:ngboddu93@nus.edu.sg}{\nolinkurl{ngboddu93@nus.edu.sg}}}\\[3pt]{\normalsize Department of Computer Science, National University of Singapore}}
\date{}
\maketitle

\begin{abstract}
We construct efficient information-theoretic non-malleable codes for 
classical messages that are secure against two noncommunicating local quantum tampering operations with arbitrary
pre-shared entanglement. For every sufficiently small fixed $\xi>0$ and
all sufficiently large first-share lengths $n$, the codes have rate at least
$1/5-\xi$, perfect correctness, and error $2^{-n^{\Omega(1)}}$. Security holds for every message, with a single
message-independent simulator for each attack. This resolves the
constant-rate question for worst-case classical messages in the entangled
two-split-state model.

Our construction builds on the permutation-based two-split construction
of Batra, Boddu, and Jain, which achieves rate approaching $1/5$ for
uniformly random messages. We retain their architecture but replace the
uniform message input to the permutation with a prescribed message
concatenated with fresh uniform padding. Our main contribution is a
worst-case security reduction for this modification.

\par\medskip\noindent\textbf{Keywords:} Non-malleable codes; split-state tampering;
quantum security; shared entanglement; constant rate.
\end{abstract}

\section{Introduction}
\label{sec:intro}

\subsection{Non-malleability under local access}
Learning a secret is not the only useful outcome of an attack. An adversary
may instead modify a ciphertext so that it decrypts to a value with an
adversarially chosen relationship to the original plaintext, without
learning either plaintext. For example, replacing a
one-time-pad ciphertext $c=m\oplus k$ by $c\oplus\Delta$ replaces its
plaintext by $m\oplus\Delta$. Perfect secrecy does not prevent this attack.
Likewise, hiding a bid in a commitment is insufficient if another participant
can commit to a related bid without knowing the original amount. Dolev, Dwork, and
Naor~\cite{DDN91} introduced non-malleable cryptography to formalize this
separation between hiding an object and controlling its relationship with
another object, studying encryption, commitments, and proofs of knowledge.
The goal is not to exclude accidental relations, but to rule out dependence
unavailable to an independent simulation. Bellare, Desai,
Pointcheval, and Rogaway~\cite{BDPR98} subsequently clarified the relationships
between non-malleability and indistinguishability under different
chosen-plaintext and chosen-ciphertext attack models. Cramer and
Shoup~\cite{CS98} gave a practical public-key encryption construction with
adaptive chosen-ciphertext security in the standard model, illustrating how
the concern with related ciphertexts leads to stronger security than hiding
alone.

\heading{From malleable ciphertexts to tamperable stored state.}
The same concern arises when a cryptographic device's \emph{internal state},
rather than a protocol message, can be modified. Security under one secret
key does not automatically describe executions under adversarially related
keys. Dziembowski, Pietrzak, and Wichs~\cite{DPW10} introduced non-malleable
codes (NMCs) to address the representation of such stored state. NMCs ensure that  any dishonest party tampering the internal state (corresponding to a secret key) leads to decrypted key being either the original secret key or something unrelated to it. Unlike a keyed encryption or authentication
primitive, an NMC has public encoding and decoding algorithms and needs no
separately protected secret key. Security instead requires a restriction on
codeword access.

The relaxation from error correction or detection is essential. If an attacker
can overwrite a codeword with a valid encoding of a fixed value $m_0$, the
decoder cannot recover the previous message or reject that encoding without
violating correctness for $m_0$. An NMC permits this independent replacement.
For each allowed attack, one message-independent distribution chooses either
$\same$, interpreted as the original message, or a replacement value, possibly
failure. The same distribution must work for every input message. The original
work~\cite{DPW10} established feasibility for bitwise-independent tampering
and sufficiently small families of functions, initiating the study of which
structural restrictions admit efficient non-malleable codes.

One of the best-studied restrictions is the split-state model.\footnote{An unrestricted joint procedure can decode, change the message, and re-encode;
no public coding scheme prevents that attack.} In its two-state form,
the codeword consists of two strings $\LeftShare$ and $\RightShare$, and tampering acts
as $(f(\LeftShare),g(\RightShare))$. Each tampering procedure may change its whole
component, but the procedures cannot exchange information during tampering. Thus the restriction concerns locality of access, not the number of changed bits. It models separately accessible distributed state,
without assuming that every physical fault respects this separation. The survey of Aggarwal, Ball, and Obremski~\cite{ABOsurvey} develops this motivation and the resulting cryptographic applications. Two components are
the smallest nontrivial split under unrestricted local computation. Constant
rate gives constant-factor storage overhead for long protected state.

Liu and Lysyanskaya~\cite{LL12} provide security against split-state
leakage and tampering in the common-reference-string model. Dziembowski, Kazana, and Obremski~\cite{DKO13} give an
information-theoretic two-split construction for one-bit messages;
Aggarwal, Dodis, and Lovett~\cite{ADL14} obtain multi-bit encodings through
additive combinatorics. Cheraghchi and Guruswami~\cite{CGcoding14} connect
split-state non-malleable coding to seedless non-malleable extraction, providing a route
to construct non-malleable codes in the split-state model. Beyond storage, Goyal,
Pandey, and Richelson~\cite{GPR16} use split-state NMCs in non-malleable
commitments, while Goyal and Kumar~\cite{GK18} develop non-malleable secret
sharing beyond the two-out-of-two setting. Such applications motivate the
primitive but can require augmentation, leakage resilience, or other
properties beyond the non-malleability that is proved here. Sections~\ref{sec:history-codes} trace the broader
history and the connections to secret sharing, commitments, and
round-efficient multiparty computation.

\heading{Why allow shared entanglement before tampering?}
Information-theoretic classical split-state security already allows unbounded computation, so its
quantum extension is not merely protection against faster algorithms.
Shared classical coins produce mixtures of product response distributions.
Local measurements of pre-shared entanglement can produce correlations outside
that class without communicating. Aggarwal, Boddu, and Jain~\cite{ABJ24}
formalize this stronger split-state adversary; Batra, Boddu, and Jain
(BBJ)~\cite{BBJ23} develop the
high-rate non-malleable randomness encoder (NMRE) primitive and use it to achieve non-malleability for uniform messages in the split-state model on which we
build. Our messages and codewords remain classical; the proof must
preserve entangled correlations while obtaining one simulator for every message.

\subsection{Our result: worst-case security at rate approaching one-fifth}
\label{sec:main-result-overview}
Prior quantum-secure two-split NMCs protected every classical message at vanishing inverse-polynomial rate~\cite{ABJ24}, or achieved rate approaching $1/5$ only for a uniform
message~\cite[Section 7]{BBJ23}. Constant rate with negligible error for
\emph{every fixed message} against entangled two-split tampering remained
open. We resolve this combination of requirements.

Here and below, $n$ denotes the length of the first share.
\begin{theorem}[Main result, informal]\label[theorem]{thm:informal}
For every sufficiently small fixed $\xi>0$, there is an efficient classical
two-split-state non-malleable code of rate at least $1/5-\xi$ for all
sufficiently large $n$, with perfect correctness and error
$2^{-n^{\Omega(1)}}$, secure against every pair of noncommunicating local
quantum tampering operations with arbitrary pre-shared entanglement.
\end{theorem}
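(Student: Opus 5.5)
The plan is to instantiate the BBJ permutation-based architecture~\cite{BBJ23} with one modification. The input to the permutation step will be $m\| U$, where $m$ is the prescribed message and $U$ is fresh uniform padding of length $\xi' k$, in place of a uniform string. We then reduce worst-case security to the uniform-message security that BBJ already establish. Concretely, the encoder samples $U$, forms $z = m\| U$, and runs the BBJ NMRE-based encoder on $z$, treating $z$ as the uniform randomness the NMRE would output. The decoder inverts and outputs the prefix $\pref$ of the decoded $z$. Perfect correctness is inherited directly, since BBJ decoding is exact on honest encodings. The rate drops from $1/5-\xi/2$ to $(1/5-\xi/2)(1-\xi')\ge 1/5-\xi$ once $\xi'$ is small enough.

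For security, I would use the BBJ guarantee in its stronger form. It is not only that a uniform message is protected. For every entangled split-state adversary $\cA$, the joint state of the uniform $Z$ and the tampered output $\tilde Z$ is $\varepsilon$-close to a convex combination of two states. One is $(Z,Z)$, the $\same$ branch. The other is $(Z,\tilde Z')$ with $\tilde Z'$ independent of $Z$, coming from one simulator $\mathcal S_{\cA}$. I would state this as an external building block. The next step is to condition on the event that the prefix of $Z$ equals $m$. This event has probability $2^{-|m|}$, so naive conditioning blows the error up by $2^{|m|}$, which is unaffordable at constant rate.

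The main obstacle, and the core of the argument, is avoiding this $2^{|m|}$ loss. My first attempt is a flip in order of quantifiers. I would argue that the BBJ proof actually shows closeness conditioned on every fixing of the permutation-input prefix, provided the suffix $U$ keeps min-entropy $\xi' k$. The approach is to inspect the BBJ chain of hybrids. The message input enters only through the permutation applied to shares that come from the NMRE's $\nmext$-based sources. The tampering analysis relies on the min-entropy of those sources and not on the entropy of $z$. The only place where uniformity of $z$ is used is the hiding step: the ciphertext under the permutation must be independent of the tampering view. I would redo that step with $z$ having only $\xi' k$ bits of min-entropy. The tool is a leftover-hash/extractor-style argument in the quantum-proof min-entropy framework, which gives error $2^{-\Omega(\xi' k)}+2^{-n^{\Omega(1)}}=2^{-n^{\Omega(1)}}$.

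The simulator is then the BBJ simulator composed with $\pref$ on the non-$\same$ branch. It is message-independent because the BBJ simulator never sees $z$. As a fallback if the hybrid inspection fails, I would use a counting argument: show the error is $\varepsilon$ uniformly over prefixes by expressing the distinguishing advantage as a bounded quantum-proof extractor error for the source $m\|U$. Finally, parameters are set so that $k=\Theta(n)$ and all $\varepsilon$ terms are $2^{-n^{\Omega(1)}}$, which yields \Cref{thm:informal}.
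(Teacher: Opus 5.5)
\textbf{There is a genuine gap.} Your construction is essentially the paper's: a fixed $m$ padded with fresh uniform bits is fed to BBJ's NMRE-keyed affine permutation. Correctness and rate are fine. The gap is at the step you call the core. The claim that BBJ's proof ``shows closeness conditioned on every fixing of the prefix'' is asserted, not proved, and the mechanism you propose for it cannot work.

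You correctly locate where BBJ uses uniformity, but that step cannot be redone with min-entropy. BBJ need $C=\pi_Z(z)$ to be uniform and \emph{independent of the sources} $(X,Y)$. Then the right tamperer's input $(Y,C)$ is just $Y$ plus independent randomness, and the attack is a legitimate attack on the NMRE. With a fixed prefix this fails irreparably:
\begin{itemize}
\item We always have $f_Z(C)=m$, so $(Z,C)$ is at distance $1-2^{-\ell}$ from $(Z,U_s)$.
\item No leftover-hash argument on the source $m\|U$ can make the $s$-bit ciphertext close to uniform given the key, since that source has only $r<s$ bits of min-entropy. Your fallback, which phrases the advantage as an extractor error for $m\|U$, has the same defect.
\item Consequently the right side holds roughly $\ell$ bits of information about $Z=\nmext(X,Y)$, hence about $X$. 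BBJ's guarantee then no longer applies to the experiment.
\end{itemize}
Your error bound $2^{-\Omega(\xi' k)}$ is a symptom of the missing mechanism. It does not depend on $d$ at all. Any argument of this kind must pay for what the right tamperer outputs before the key is idealized.

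The paper's fix is structural, not a local patch of BBJ's hybrids.
\begin{enumerate}
\item It works in the uniform-ciphertext hybrid, where the real encoding of $m$ is exactly the event $f_Z(C)=m$.
\item It reorders the right attack so that $Y'$ is produced from $Y$ alone, via a square-root instrument of the ciphertext-averaged POVM. An exact continuation map, made invertible by smoothing, later regenerates $C,C',E_2'$. This holds as a map identity, so it also regenerates the entangled side state, which outcome selection would otherwise disturb.
\item It applies BBJ once to this coarse state.
\item Only then does it perform message selection, as a mean-one CP rescaling $\Phi_m$.
\end{enumerate}
The $2^{\ell}$ loss you identify is avoided by the bounded-deficit comparison of Theorem~\ref{thm:weighting}, not by conditioning. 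Since $0\preceq(I-H)_+\preceq I$, the BBJ error transfers additively rather than multiplied by $q$.

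The ideal deficit is then bounded by a second-moment computation using pairwise uniformity of the affine family. This is where your leftover-hash instinct belongs, but with a different seed and source. The seed is the \emph{idealized key}, and the source is the \emph{uniform ciphertext} given the coarse view $(Y,Y',E_2)$. Because $Y'$ takes only $2^d$ values, the surplus is $r-d$, which forces the padding to exceed $d$.
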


The construction modifies the permutation-based two-split architecture of
BBJ~\cite{BBJ23}. Let $m\in\bits^\ell$. Independently sample
$X\in\bits^n$, $Y\in\bits^d$, and padding $R\in\bits^r$. Set
$s=\ell+r$ and use $Z=\nmext(X,Y)\in\bits^{2s}$ to specify an affine
permutation $\pi_Z$ on $s$-bit strings:
\begin{equation}
 C=\pi_Z(m\|R),\qquad \Enc(m)=\bigl(X,(Y,C)\bigr).
 \label{eq:introcode}
\end{equation}
Write $X'$ and $(Y',C')$ for the tampered shares. The decoder recomputes
$Z'=\nmext(X',Y')$, inverts the permutation, and outputs the first
$\ell$ bits as $M'$. The new ingredient is the \emph{worst-case reduction}, not the use of a permutation key derived from a
non-malleable randomness encoder.

For a quantum secure non-malleable randomness encoder $\nmext$ with error $\eta$, our theorem gives
$O(\sqrt\eta+2^{-(r-d)/5})$ error when $r>d$, where $r$ is the length of the padding and $d$ is the length of source $Y$.  Instantiating $\nmext$ with BBJ's construction and setting $r=2d$, the asymptotic rate of our construction (for a sufficiently small fixed $\delta>0$) is
\begin{equation}
 \frac{1-10\delta}{5+2\delta}.
 \label{eq:intro-rate}
\end{equation}
The above rate counts the sum of the two unequal share lengths. A balanced-share corollary in Section~\ref{sec:balanced} approaches $1/8$ with the same security error.

\subsection{Closest comparison: what is inherited from BBJ, and what changes}
\label{sec:bbj-comparison}
BBJ's two-split construction already derives a permutation key from its non-malleable 
randomness encoder, stores the sources as $X,Y$, and places the permuted
uniform message along with $Y$. They establish rate approaching $1/5$ for uniform classical messages
against the same entangled local adversaries~\cite[Section 7]{BBJ23}. Table~\ref{tab:construction-comparison} identifies the change in
message handling and analysis.

\begin{table}[H]
\caption{The closest construction-level comparison.}
\label{tab:construction-comparison}
\centering\small\setlength{\tabcolsep}{3pt}
\begin{tabularx}{\textwidth}{@{}L{3.1cm}>{\raggedright\arraybackslash}X>{\raggedright\arraybackslash}X@{}}
\toprule
Feature & BBJ, two-split code~\cite[Section 7]{BBJ23} & This work\\
\midrule
Architecture & Uses an NMRE to select a permutation & Same architecture\\[4pt] \\
Input to permutation & Uniform message & Fixed message $m$ padded with uniform $R$\\[4pt] \\
Reordered experiment & Uses uniformity of the whole message & First produce $Y'$ from $Y$ (for uniform $C$), then recover $C,C'$ \\[4pt]\\
Message analysis & Joint analysis for a uniformly random message & Mean-one selection with two exact normalizations and a bounded deficit\\[4pt]\\
Conclusion & Uniform-message security; rate near $1/5$ & Worst-case security; rate near $1/5$; error $O(\sqrt\eta+2^{-(r-d)/5})$\\
\bottomrule
\end{tabularx}
\end{table}

\heading{The precise point where uniformity is needed.}
The opening of BBJ's proof of Theorem 7 identifies its original and reordered
experiments using that the whole message is uniform and the keyed map is a
bijection~\cite[Theorem 7, proof; Figures 11 and 13]{BBJ23}. For each fixed
permutation key, permuting a uniform string leaves a uniform ciphertext. A fixed
message does not allow that substitution. In our padded code it instead
selects a key-dependent set of $2^r$ ciphertexts from $2^s$ possibilities.
The real encoding is the uniform-ciphertext experiment conditioned on one
message, an event of probability $2^{-\ell}$. Applying an average-case error
bound after this conditioning can multiply the inner error by $2^\ell$. This is a limitation of that route to a stronger theorem, not a gap in BBJ's stated uniform-message result.

Padding creates enough residual randomness for a different comparison. We
apply BBJ before recovering the long ciphertext, then analyze the message
selection by its deficit on average rather than its maximum loss due to conditioning. The continuation
identity preserves residual quantum state, and the permutation second order moment turns
$r-d$ into a positive security surplus. Thus padding alone is not a proof: it is padding together with this new conversion that yields the worst-case statement.

\subsection{Proof overview I: classical adversaries}
\label{sec:classical-overview}
We first explain the classical argument, proved fully in
Appendix~\ref{app:classical}. Let the message length be $\ell$ and the padding randomness length be $r$. Write $s=\ell+r$, $N=2^s$, $q=2^\ell$,
and $T=2^r$, so $N=qT$; $f_z$ is the inverse permutation followed by taking the prefix.
Consider deterministic tampering $X'=f(X)$ and
$(Y',C')=(g_1(Y,C),g_2(Y,C))$. The proof for shared classical coins averages
the resulting conditional output distributions over their source-independent tapes.

\heading{The obstacle is a rare message event.}
In the message-free hybrid, $X,Y,C$ are independently uniform. For every permutation key $Z = \nmext(X,Y)$, 
exactly $T$ ciphertexts decode to $m$, so
$\Pr[f_Z(C)=m\mid X,Y]=1/q$. Conditioning this hybrid on that event $f_Z(C)=m$ reproduces the real encoding of $m$, with its original source marginal
unchanged. Conditioning an \emph{approximate} hybrid statement can nevertheless cost a
factor $q$. Padding leaves this probability unchanged but enlarges the size of $C$. We need a comparison that exploits its extra randomness.

\heading{Produce $Y'$ before recovering the ciphertexts $C,C'$.}
Let $A_{y,y'}=\{c:g_1(y,c)=y'\}$ and $h_{y,y'}=|A_{y,y'}|$.
The hybrid's conditional distributions are
\[
 P^{\rm hyb}_{Y'\mid Y}(y'\mid y)=\frac{h_{y,y'}}N,\qquad
 P^{\rm hyb}_{C\mid Y,Y'}(c\mid y,y')
       =\frac{\ind[c\in A_{y,y'}]}{h_{y,y'}}
\]
when $h_{y,y'}>0$. Their product equals
$N^{-1}\ind[g_1(y,c)=y']$, the original joint conditional probability
$\Pr^{\rm hyb}[Y'=y',C=c\mid Y=y]$. Consequently the same joint distribution is
generated by sampling $Y'$ from $P^{\rm hyb}_{Y'\mid Y}$, then $C$ from $P^{\rm hyb}_{C\mid Y,Y'}$, and setting $C'=g_2(Y,C)$. The first stage $Y \rightarrow Y'$ is
a valid attack on $Y$ alone; it has no message-correlated ciphertext input. 

At this stage, we invoke the security guarantee of BBJ's non-malleable
randomness encoder, $\nmext$, retaining the register
$\Lambda=(Y,Y',S,\overline Z')$, where $S$ indicates
$(X',Y')=(X,Y)$ and $\overline Z'$ masks the tampered key on $S=1$. These registers are computable before recovering $C$. Let $\mu$ be the distribution
of $(Z,\Lambda)$. We call $(Z,\Lambda)$ the \emph{coarse register};
continuation map appends $C,C'$ to form the \emph{refined register}. BBJ's
security guarantee compares the original key with an independent ideal
affine key on $\cK=\F_N^\times\times\F_N$, giving
\[
 \sigma=U_{\cK}\times\mu_\Lambda,\qquad
 \Delta(\mu,\sigma)\le\alpha:=\eta+N^{-1}.
\]
The $N^{-1}$ cost is to restrict a fresh uniform key to nonzero slope.
Continuation map depends only on $y,y'$.

\heading{Use a bounded deficit instead of the largest likelihood ratio.}
For a fixed message and $h_{y,y'}>0$, selecting the ciphertexts $c\in A_{y,y'}$ satisfying $f_z(c)=m$ and multiplying by $q$ gives the
coarse register $(z,\lambda)$ and the total output scaling factor as a result of conditioning
\[
 w_m(z,\lambda)=\frac q{h_{y,y'}}
                   \sum_{c\in A_{y,y'}}\ind[f_z(c)=m].
\]
Both $\E_\mu w_m$ and $\E_\sigma w_m$ equal one. The actual equality ($\E_\mu w_m=1$) uses the fact that for every fixed $z$, $\vert\{c:f_z(c)=m\}\vert=T$ for every $m$. The ideal equality ($\E_\sigma w_m=1$) uses $\Pr_{z\leftarrow U_{\cK}}[f_z(c)=m]=1/q$ for each fixed $c$. For each fixed pair $(z,\lambda)$, $w_m(z,\lambda)$ is the total
scaling factor assigned by message selection.

Although $w_m$ can reach $q$~(for example, this can happen if tampering is such that $h_{y,y'}=1$ for some $(y,y')$), its \emph{deficit} $(1-w_m)_+$ on average lies in
$[0,1]$. For any distribution $P$ with $\E_Pw=1$, the scaled distribution $Pw$ obeys
\[
 \Delta(Pw,P)=\tfrac12\E_P|w-1|=\E_P(1-w)_+.
\]
Excess mass above scaling factor one must equal the missing mass below it.
Thus $\Delta(\mu,\sigma)\le\alpha$ transfers deficits at cost at most
$\alpha$, rather than $q\alpha$. If $d_\sigma=\E_\sigma(1-w_m)_+$,
comparison along $\mu w_m,\mu,\sigma,\sigma w_m$ costs at most
$2\alpha+2d_\sigma$. The remaining step uses the same conditional output distribution on both scaled distributions, so total variation cannot increase. This is the scalar mean-one lemma, not ordinary data processing of the unnormalized message-selection map.

\heading{Padding makes the average deficit small.}
Under the product distribution $ \sigma=U_{\cK}\times\mu_\Lambda$, the conditional second moment satisfies
$\E_\sigma[(w_m-1)^2\mid\Lambda]\le q/h_{y,y'}$, and the conditional mean
is one. Thus
$d_\sigma\le\tfrac12\E_{(y,y')\leftarrow P^{\rm hyb}_{YY'}}\sqrt{q/h_{y,y'}}$.
Small posterior sets can have poor conditional bounds, but their hybrid
probabilities are $P^{\rm hyb}_{YY'}(y,y')=2^{-d}h_{y,y'}/N$.
Since the sets for each $y$ partition $N$ points into at most $2^d$ parts,
Cauchy--Schwarz gives
\[
 d_\sigma\le\frac{\sqrt q}{2N}
       \sum_y2^{-d}\sum_{y':h_{y,y'}>0}\sqrt{h_{y,y'}}
 \le\frac12\sqrt{\frac{q2^d}{N}}
 =\frac12\,2^{-(r-d)/2}.
\]
The surplus is $s-\ell-d=r-d$.

\heading{Convert the ideal distribution into one simulator.}
Compute the output symbol before discarding the original key. On $S=0$
use the actual tampered-key decoder output $f_{Z'}(C')$; on $S=1,C'=C$ output $\same$; on
$S=1,C'\ne C$ compute $f_Z(C')$. Copying this symbol reproduces the
real decoder on every execution. Only after the ideal-key comparison do we
use that distinct ciphertexts have uniformly distributed distinct inverse images. Conditional on the first prefix being $m$, the second prefix has
mass $(T-1)/(N-1)$ at $m$ and $T/(N-1)$ elsewhere. Replacing this distribution by uniform costs $\epsp=(q-1)/(q(N-1))$.

The ideal message selection leaves each refined register's hybrid scaling factor
unchanged. Therefore a simulator that runs the message-free hybrid and
uses fresh uniform output in the last case is fixed independently of $m$.
The full classical bound is
\[
 \epsn^{\rm cl}\le2(\eta+2^{-s})+2^{-(r-d)/2}+\epsp.
\]
This uses no quantum inequality. 

\subsection{Proof overview II: extending the comparison to entangled adversaries}
\label{sec:quantum-overview}
The encoder and simulator cases stay the same, but the continuation map and
scaling factor arguments change. An entangled attack leaves a state in the right auxiliary register correlated
with the other component; selecting an outcome can change that conditional state. Reproducing scalar probabilities is therefore
insufficient: we must reconstruct the complete conditional quantum state.

\heading{Reconstruct the conditional state, not only the event probability.}
The right attack sends the classical inputs $(Y,C)$ to $(Y',C')$ and
updates the state from $E_2$ to $E'_2$. Its completely positive (CP) map
$\mathcal V^{y,c}_{y',c'}$ gives the probability-scaled output state for
input values $(y,c)$ and output values $(y',c')$. Define its output probability operator and its uniform-ciphertext average by
\[
 M_{y'}^{y,c}=\sum_{c'}(\mathcal V^{y,c}_{y',c'})^*(I),\qquad
 G_{y'}^y=\frac1N\sum_cM_{y'}^{y,c}.
\]
Thus $\Tr G_{y'}^y\theta$ is the probability of $Y'=y'$ on input state $\theta_{E_2}$
in the hybrid. With smoothing\footnote{Smoothing is a known quantum information theoretic technique to ensure invertibility of positive semidefinite operators in the analysis.} probability $\zeta$, choose $Y'$
uniformly, keep $C'=C$, and retain the input state in $E_2$ instead of
running the original right operation. Use the same notation for the
resulting smoothed maps and operators. This changes the final output by
at most $\zeta$ and ensures $G_{y'}^y\succeq\zeta2^{-d}I$ on every input
direction.

The coarse CP map is $\mathcal C_{y'}^y(\theta)=\sqrt{G_{y'}^y}\theta\sqrt{G_{y'}^y}$.
It produces the correct event probability, but generally not the original
conditional state in $E_2$. With $G=G_{y'}^y$, the continuation is
\[
 \mathcal R^{y,y'}_{c,c'}(\theta)
   =N^{-1}\mathcal V^{y,c}_{y',c'}(G^{-1/2}\theta G^{-1/2}),\qquad
 \mathcal R^{y,y'}_{c,c'}\circ\mathcal C_{y'}^y
   =N^{-1}\mathcal V^{y,c}_{y',c'}.
\]
Adjacent inverse square roots cancel in order. For each $y,y'$, continuation
sums to a channel on all inputs, and the composition is an identity on all
matrices. It therefore survives tensoring with the identity on the left
attacker's register $E_1$: this replaces the scalar posterior identity.

\heading{Apply BBJ's security guarantee before recovering $C,C'$.}
The coarse procedure acts only on registers $Y,E_2$; its definition
has averaged over $C$ and has no original-key input. BBJ can therefore be
applied once, yielding an actual state $\mu_{Z\Lambda E_2}$ and
$\sigma=U_{\cK}\otimes\mu_{\Lambda E_2}$ within
$\alpha=\eta+N^{-1}$. The allowed register
$\Lambda=(Y,Y',S,\overline Z')$ and register $E_2$ retain their actual joint
state in $\sigma$.
The continuation's probability operators for $C=c$ satisfy
\[
 Q_c=\sum_{c'}(\mathcal R^{y,y'}_{c,c'})^*(I),\qquad
 \sum_cQ_c=I,\qquad Q_c\preceq\kappa I,
 \quad\kappa=\frac{2^{d-s}}{\zeta}.
\]
Only the $2^d$ outcomes are charged; continuation produces the long
$C'$ later, and no auxiliary-register dimension cost is incurred.

\heading{Replace scaling factor by a positive Hermitian operator.}
For a fixed message $m$, the map $\Phi_m$ continues the experiment, selects
$f_Z(C)=m$, multiplies by $q$, and computes the three-case symbol while
$Z$ is still available. It is a CP map but not a channel on arbitrary inputs.
Its output-scaling factor operator $H=\Phi_m^*(I)$ has blocks
$H_z^\lambda=q\sum_{c:f_z(c)=m}Q_c^{y,y'}$.
The two normalizations survive exactly: on the actual input ($\mu_{Z\Lambda E_2}$), message selection
after continuation map reproduces the real fiber sampler\footnote{Uniformly from the set $\{ c :f_z(c)=m \}$ for every fixing of $z$.}; on the ideal input ($\sigma=U_{\cK}\otimes\mu_{\Lambda E_2}$), each refined ciphertext register survives with probability $1/q$. Thus $\Tr H\mu=\Tr H\sigma=1$ on the two complete comparison states.

Scaling factor need not control quantum coherences. We factor
$\Phi_m$ as $\theta\mapsto\sqrt H\theta\sqrt H$ followed by one channel.
A purification argument proves that a normalized scaled state is within
$\sqrt{2\Tr\theta(I-H)_+}$ of its unscaled state. Since
$0\preceq(I-H)_+\preceq I$, input distance $\alpha$ transfers this
deficit additively even if $\|H\|_\infty=q$. With
$d_\sigma=\Tr\sigma(I-H)_+$, the comparison becomes
\[
 \Delta(\Phi_m(\mu),\Phi_m(\sigma))
 \le\sqrt{2(d_\sigma+\alpha)}+\alpha+\sqrt{2d_\sigma}.
\]
The mean-one comparison, along with second order moments controls its error.

\heading{The same permutation counting controls noncommuting fluctuations.}
In the ideal product state, keep the full side state $\mu_{\Lambda E_2}$
and let the operators below be block diagonal in its classical label
$\Lambda$. Then, we have
\[
 \E_ZH_Z=I,\qquad
 \E_ZH_Z^2=\gamma I+(q-\gamma)\sum_cQ_c^2,
 \qquad\gamma=\frac{N-q}{N-1}.
\]
The expansion retains every ordered product $Q_cQ_{c'}$; their full sum is
$(\sum_cQ_c)^2=I$, without assuming they commute. Since
$Q_c^2\preceq\kappa Q_c$, the second moment around $I$ is at most
$q\kappa I$. Cauchy--Schwarz then gives
\[
 d_\sigma\le\tfrac12\sqrt{q\kappa},\qquad
 q\kappa=\frac{2^{d-r}}{\zeta}.
\]
Substituting into the scaling estimate gives
$\alpha+\sqrt{2\alpha}+2(2^{d-r}/\zeta)^{1/4}$.
The ideal symbol is compared with the same message-free simulator using
the classical two-point prefix distribution, at cost $\epsp$; comparing the
smoothed experiment with the original attack adds at most $\zeta$ to the
trace-distance bound. Balancing $\zeta$ with the fourth-root term gives
$\zeta\asymp2^{-(r-d)/5}$ and error
$O(\sqrt\eta+2^{-(r-d)/5})$.
No step changes the one-simulator-for-all-messages quantifier.
Figures~\ref{fig:coarse-bbj}--\ref{fig:comparison-bbj} in
Appendix~\ref{app:diagrams} display these state transitions and mark the
message-selection operation explicitly.

\subsection{Related work: rate, error, and adversary strength}
\label{sec:related}
Classical two-split NMCs progressed through multi-bit and extractor-based
constructions~\cite{ADL14,Li19}. Li~\cite{Li19} gave constant rate with
constant error; Aggarwal and Obremski~\cite{AO20} achieved constant rate with
negligible error. Aggarwal, Kanukurthi, Obbattu, Obremski, and
Sekar (AKOOS)~\cite{AKOOS22} gave near rate-$1/3$ construction. Crucially, Li's subsequent construction~\cite[Theorem
1.14]{Li23} has efficient encoding and decoding, constant rate, and error
$2^{-\Omega(\ell)}$ for an $\ell$-bit message. Our stretched-exponential
error does \emph{not} improve that classical error exponent.

The classical balanced two-split capacity $1/2$ is existential~\cite{CGcapacity}.
BBJ also gives a three-split worst-case code near rate $1/3$~\cite{BBJ23};
its extra share distinguishes it from the question resolved here.
The earlier quantum-secure, vanishing-rate worst-case code~\cite{ABJ24}
and BBJ's uniform-message two-split code do not give the combination of
properties in the last row of Table~\ref{tab:prior}. Rate-$1$ results for bounded polynomial-depth
tampering in an auxiliary-input random-oracle model~\cite{FKKS26} address
different restrictions and setup. An $\Omega(1)$ rate guarantee alone does not determine how the
achieved rate compares with $1/3$ or $1/5$.

\begin{table}[H]
\caption{Selected classical-message results. All constructions are
efficient except the capacity row. Rates are asymptotic; ``near'' means
arbitrarily close from below, and $c_{\rm AO},c_{\rm Li19},c_{\rm Li23}>0$ denote unspecified positive constants without a numerical comparison to our rate. \textbf{C}: arbitrary local classical tampering in the indicated
split-state model. \textbf{Q}: arbitrary local quantum tampering in that
model, with arbitrary pre-shared entanglement.
$\ell$ denotes message length.}
\label{tab:prior}
\centering\small\setlength{\tabcolsep}{3pt}
\begin{tabularx}{\textwidth}{@{}L{4.0cm}cL{1.8cm}>{\raggedright\arraybackslash}Xc@{}}
\toprule
Work & Splits & Rate & Messages; error & Attack\\
\midrule
Cheraghchi--Guruswami~\cite{CGcapacity} & 2 & Near $1/2$ & Worst case; negligible; nonexplicit & C\\[4pt]
Li~\cite{Li19} & 2 & $c_{\rm Li19}$ & Worst case; constant error & C\\[4pt]
Aggarwal--Obremski~\cite{AO20} & 2 & $c_{\rm AO}$ & Worst case; negligible & C\\[4pt]
AKOOS~\cite{AKOOS22} & 2 & Near $1/3$ & Worst case; $2^{-\ell^{\Omega(1)}}$; & C\\[4pt]
Li~\cite[Thm.~1.14]{Li23} & 2 & $c_{\rm Li23}$ & Worst case; $2^{-\Omega(\ell)}$ & C\\[4pt]
Batra--Boddu--Jain~\cite{BBJ23} & \textbf{3} & Near $1/3$ & Worst case; $2^{-\ell^{\Omega(1)}}$ & Q\\[4pt]
Batra--Boddu--Jain~\cite{BBJ23} & 2 & Near $1/5$ & Uniform only; $2^{-\ell^{\Omega(1)}}$ & Q\\[4pt]
\textbf{This work} & \textbf{2} & \textbf{Near $1/5$} & \textbf{Worst case; $2^{-\ell^{\Omega(1)}}$} & \textbf{Q}\\
\bottomrule
\end{tabularx}
\end{table}

\subsection{Historical development and broader context of non-malleability}\label{sec:history-codes}\label{sec:history-sharing}\label{sec:history-commitments}\label{sec:history-mpc}\label{sec:history-quantum}

Information-theoretic NMCs restrict codeword access~\cite{DDN91,DPW10,ABOsurvey},
complementing algebraic manipulation detection~\cite{AMD08} and split-state
coding~\cite{LL12,DKO13,ADL14}. Non-malleable extraction~\cite{DW09} led to non-malleable coding connections, many-tampering extensions, and improved constructions
~\cite{CGcoding14,CGL16,Li17,Li19,Li23}, while randomness encoders and
constant-rate four-state codes advanced this line~\cite{KOS18,KOS17}.
Further models cover tampering reductions, permutations, composition,
computational security, leakage, and continuous tampering
~\cite{ADKO15,AGMPP14,CL18,AAGMPR16,ADzKO15,FMNV14}; computational near-unit
rates remain compatible with information-theoretic capacity~\cite{CGcapacity}.
Our reduction instead handles prescribed-message encoding within BBJ's
quantum-secure permutation architecture~\cite{BBJ23}.

Non-malleable secret sharing extends classical privacy and reconstruction
~\cite{Shamir79,Blakley79} to tamper-resilient reconstruction
~\cite{GK18,GKgeneral18}, with work on share size, repeated tampering,
leakage, access structures, adaptive or continuous tampering, bounded joint
tampering, local reconstruction, and short shares
~\cite{BS19,SV19,ADNOPRS19,FV19,BFV19,BFOSV20,LCGSW19,KOST21,CKOS22}.
Non-malleable commitments control related committed values~\cite{DDN91,PR05}
under several notions~\cite{PR05concurrent,LPV08}, with constructions from
setup and constant-round protocols through simulation-extractability,
amplification, and one-way-function assumptions
~\cite{DIO98,FF00,DG03,Barak02,PR05,PR05concurrent,LPV08,LP09,LP11}. Non-malleable codes also support commitments~\cite{AGMPP14,GPR16} and later round-efficient
concurrent protocols~\cite{COSV16,COSV17,GR19}.

Multi party computation (MPC) is developed from foundational computational and information-theoretic
protocols~\cite{Yao82,GMW87,BGW88,CCD88} to constant-round
constructions~\cite{BMR90}. Non-malleability and amplification support input
control and round efficiency~\cite{PR05,LP11,Wee10,GPR16}, while
leakage-resilient sharing supports leakage-tolerant MPC~\cite{SV19}.
Composability requires stronger simulation tools
~\cite{Canetti01,CF01,CLOS02,LP12,Kiy14}; plain-model four-round MPC and
oblivious transfer based work sharpen tradeoffs among rounds, setup, black-box use, and
assumptions~\cite{BHP17,GS18,BL18,CCGJO20,IKSS23,COSW25,HMS26}. These applications require guarantees beyond those proved here.

Post-quantum interactive non-malleability and entanglement-assisted local
tampering are distinct. Post-quantum commitments and secure computation
achieve low round complexity under computational assumptions
~\cite{BLS22,LPY23,CLPY25}, with some MPC applications additionally using
quantum fully homomorphic encryption and quantum-hard learning with errors assumption. For non-malleable coding, Aggarwal, Boddu and Jain~\cite{ABJ24} and BBJ~\cite{BBJ23} protect classical data against entangled local tampering; other work treats quantum messages, secret sharing, or tamper detection under additional entanglement restrictions~\cite{BGJR24,BB25}. Our result keeps messages and codewords classical, allows arbitrary local quantum computation with pre-shared entanglement in the split-state model.

\heading{Organization.}
Section~\ref{sec:prelim} introduces the preliminaries and the precise BBJ
guarantee used in the proof. Section~\ref{sec:construction} gives the construction
and finite-parameter security theorem; Sections~\ref{sec:rearrangement}--\ref{sec:security} prove it.
Section~\ref{sec:parameters} gives rates and a balanced-share corollary.
Appendix~\ref{app:classical} is the complete classical proof;
Appendix~\ref{app:quantum-facts} contains standard quantum facts, and
Appendix~\ref{app:diagrams} gives the register diagrams.

\section{Preliminaries}
\label{sec:prelim}

\subsection{Classical distributions and quantum states in registers}
All logarithms are binary. Quantum registers are finite-dimensional;
arbitrary pre-shared entanglement means that any finite local dimensions
are allowed, with a security error independent of those dimensions.
We write $(\LeftShare,\RightShare)$ for codeword shares and reserve $R$ for random
padding. Classical variables use uppercase letters and
their values lowercase: tampering maps $X,(Y,C)$ to $X',(Y',C')$.
Following BBJ, $E_1,E_2$ name the auxiliary quantum registers on the two
sides, and $E'_1,E'_2$ their outputs. The square-root coarse operation acts
on $E_2$ itself; continuation produces $E'_2$. The state in a register changes
even when its register name is reused. The local quantum channels $\cA_1,\cA_2$ are the
channel forms of BBJ's local isometries $U,V$. We retain $Z$ for the extractor
output and $C$ for the ciphertext (BBJ uses $R,Z$, respectively), so that $R$
continues to denote our random padding. Appendix~\ref{app:diagrams} gives the
notation correspondence and the register diagrams.
$U_k$ is uniform on $k$ bits; $U_A$ is uniform on a finite set $A$.

A classical channel is a conditional distribution
$P_{B\mid A}(b\mid a)=\Pr[B=b\mid A=a]$. With input distribution $P_A$, it gives
$P_{AB}(a,b)=P_A(a)P_{B\mid A}(b\mid a)$; changing only the input distribution
leaves the channel unchanged. Superscript ${\rm hyb}$ denotes the
uniform-ciphertext hybrid. Conditional distributions at zero-probability input
values may be completed arbitrarily.

A quantum register $E_2$ is described by its state, not by a classical
conditional probability. The full classical--quantum state is
\[
 \rho_{AE_2}=\sum_a P_A(a)\ket a\bra a\otimes\rho_{E_2\mid A=a}
          =\sum_a\ket a\bra a\otimes\omega_{E_2}^a,
 \qquad \omega_{E_2}^a=P_A(a)\rho_{E_2\mid A=a}.
\]
Each block $\omega_{E_2}^a$ has trace $P_A(a)$, and the sum of these
traces is one. A zero-probability block is zero, regardless of its chosen conditional
state. We work with these joint states instead of repeatedly normalizing
individual events. Write $\Delta(\rho,\sigma)=\tfrac12\|\rho-\sigma\|_1$,
which is total variation on diagonal classical states. We identify a
classical distribution $P_A$ with the diagonal density matrix
$\sum_a P_A(a)\ket a\bra a$ whenever it is used as a quantum state.
The notation $P_J$ denotes the law of a classical random variable $J$;
a randomized experiment appearing in a distributional comparison denotes
its output law. 

A completely positive (CP) map has Kraus form
$\Phi(A)=\sum_j L_j A L_j^\dagger$. Its adjoint satisfies
$\Tr[B\Phi(A)]=\Tr[\Phi^*(B)A]$, so
$\Phi^*(I)=\sum_j L_j^\dagger L_j$ determines the output trace.
A quantum channel is completely positive and trace-preserving (CPTP);
for a CP map $\Phi$, this is equivalent to $\Phi^*(I)=I$.
An instrument is a family of CP maps whose sum is a channel; each of its CP maps registers an outcome's probability and its update of the state in the auxiliary output register.
Complete positivity alone does not imply trace preservation or nonincrease. In particular,
our rescaled message-selection maps need not be physical outcome CP maps.
Appendix~\ref{app:quantum-facts} proves the following tools and the
Kraus representation.
\begin{lemma}[Trace-distance tools]\label[lemma]{lem:tools}
For normalized states,
\[
 \Delta(\rho,\sigma)=\max_{0\preceq B\preceq I}|\Tr B(\rho-\sigma)|.
\]
CP trace-nonincreasing maps reduce the trace norm of Hermitian inputs.
The trace norm of a classical block-diagonal matrix is the sum of its
blocks' trace norms. Tensoring with a fixed normalized state preserves
trace distance. For a Hermitian $A$ and normalized state $\theta$,
\[
 \Tr\theta|A|\le\sqrt{\Tr\theta A^2}.
\]
For pure unit vectors $u,v$, their density matrices have distance
$\sqrt{1-|\langle u,v\rangle|^2}$; the distance between any reductions is
no greater.
\end{lemma}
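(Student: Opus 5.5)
We prove the six claims of the lemma one at a time, using only the spectral theorem, the Jordan decomposition of Hermitian matrices, and Kraus representations. \emph{Variational formula.} Write $\rho-\sigma=P-Q$ with $P,Q\succeq0$ supported on orthogonal subspaces (Jordan decomposition). Since $\Tr(\rho-\sigma)=0$, we have $\Tr P=\Tr Q=\tfrac12\|\rho-\sigma\|_1$. For $0\preceq B\preceq I$,
\[
\Tr B(\rho-\sigma)\le\Tr BP\le\Tr P,
\]
and symmetrically $-\Tr B(\rho-\sigma)\le\Tr Q$. Equality holds for $B$ the projector onto the support of $P$.

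\emph{Contractivity.} Let $\Phi$ be CP and trace-nonincreasing, and $A$ Hermitian with $A=P-Q$ as above. Then $\Phi(P),\Phi(Q)\succeq0$, so
\[
\|\Phi(A)\|_1\le\Tr\Phi(P)+\Tr\Phi(Q)\le\Tr P+\Tr Q=\|A\|_1 .
\]

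\emph{Block-diagonal matrices.} If $A=\sum_a\ket a\bra a\otimes A_a$, then $|A|=\sum_a\ket a\bra a\otimes|A_a|$, because the singular value decomposition can be taken blockwise. Taking traces gives $\|A\|_1=\sum_a\|A_a\|_1$.

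\emph{Tensoring.} For a fixed normalized state $\tau$, $|A\otimes\tau|=|A|\otimes\tau$, hence $\|A\otimes\tau\|_1=\|A\|_1\Tr\tau=\|A\|_1$.

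\emph{Cauchy--Schwarz.} Let $A=\sum_i\lambda_i\Pi_i$ be the spectral decomposition and set $p_i=\Tr\theta\Pi_i$. The $p_i$ form a probability distribution, and $\Tr\theta|A|=\sum_ip_i|\lambda_i|$ while $\Tr\theta A^2=\sum_ip_i\lambda_i^2$. The claim then follows from Jensen's inequality $\bigl(\sum_ip_i|\lambda_i|\bigr)^2\le\sum_ip_i\lambda_i^2$. Equivalently, apply Cauchy--Schwarz to the inner product $\langle X,Y\rangle=\Tr\theta X^\dagger Y$ with $X=I$ and $Y=|A|$.

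\emph{Pure states.} We diagonalize $D=uu^\dagger-vv^\dagger$, which has rank at most two, inside $\mathrm{span}\{u,v\}$.
\begin{itemize}
\item $D$ is traceless, so its two eigenvalues are $\pm\lambda$.
\item $\Tr D^2=2-2|\langle u,v\rangle|^2$, so $\lambda=\sqrt{1-|\langle u,v\rangle|^2}$.
\end{itemize}
Therefore $\tfrac12\|D\|_1=\lambda$. For reductions, the partial trace is a channel, so the contractivity claim shows the distance between reductions is no greater.

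The only point needing care is the variational formula's use of $\Tr(\rho-\sigma)=0$, which is where normalization of $\rho$ and $\sigma$ is used. The remaining claims are routine.
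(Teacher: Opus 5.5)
Your proof is correct and follows essentially the same route as the paper's proof in Appendix~\ref{app:quantum-facts}. Both use the positive/negative spectral decomposition for the variational formula and for contractivity, blockwise spectra for the block-diagonal and tensoring claims, and the partial trace as a channel for reductions. The only cosmetic differences are two: you get $\Tr\theta|A|\le\sqrt{\Tr\theta A^2}$ from Jensen over the weights $p_i=\Tr\theta\Pi_i$ rather than from Hilbert--Schmidt Cauchy--Schwarz on $\sqrt\theta$ and $|A|\sqrt\theta$, and you obtain the pure-state eigenvalues from $\Tr D=0$ and $\Tr D^2$ rather than by writing out the explicit $2\times2$ matrix.
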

Here $A_+$ is the spectral positive part and $|A|=\sqrt{A^2}$. Neither is
an entrywise operation. If $B\succeq A$ and $\theta\succeq0$, then
$\Tr\theta B\ge\Tr\theta A$, since
$\Tr\theta(B-A)=\Tr[\sqrt\theta(B-A)\sqrt\theta]\ge0$.
We use this rule rather than assuming a product of positive matrices is
positive. Conjugation also preserves positive order.

\Needspace{22\baselineskip}
\heading{Notation at a glance.}
The following symbols are used throughout; their constructions and
state definitions are given in the indicated sections.
\begin{center}
\small
\begin{tabularx}{\linewidth}{@{}L{4.6cm}>{\raggedright\arraybackslash}X@{}}
\toprule
Symbol & Meaning\\\midrule
$\LeftShare,\RightShare$; $R$ & Left and right shares; fresh random padding\\[3pt]
$n,d$; $\ell,r,s=\ell+r$ & Source lengths; message, padding, and ciphertext lengths\\[3pt]
$q=2^\ell,T=2^r,N=2^s=qT$ & Message count, fiber size, and field size\\[3pt]
$s_0,2s$ & Full and retained $\nmext$ output lengths (Section~\ref{sec:bbj-interface})\\[3pt]
$g=r-d$ & Padding surplus (Section~\ref{sec:parameters})\\[3pt]
$\cK=\F_N^\times\times\F_N$ & Ideal key set (Section~\ref{sec:construction})\\[3pt]
$S,\overline Z'$; $\sflag$ & Source-equality flag and masked tampered key; a fixed flag value\\[3pt]
$\Lambda=(Y,Y',S,\overline Z')$ & retained side information\\[3pt]
$\mu,\sigma$ & Actual and ideal-key coarse states (Section~\ref{sec:rearrangement})\\[3pt]
$\nu$ & State of the continued side register on $\Lambda CC'E'_2$ (Section~\ref{sec:rearrangement})\\[3pt]
$\kappa=2^{d-s}/\zeta$ & Continuation map bound in \eqref{eq:Q}\\[3pt]
$\Phi_m$; $H=\Phi_m^*(I)$ & Message-selected output map and its output-scale operator (Section~\ref{sec:security})\\[3pt]
$\eta,\zeta,\alpha,\beta,\epsp$ & BBJ security error, smoothing parameter, input- and output-comparison bounds, and finite-permutation correction (Section~\ref{sec:construction})\\
\bottomrule
\end{tabularx}
\end{center}

\Needspace{10\baselineskip}
\subsection{Two-split quantum tampering and non-malleability}
\begin{definition}[Local quantum attack]\label[definition]{def:attack}
An attack on classical shares $(\LeftShare,\RightShare)$ consists of a bipartite state
$\psi_{E_1E_2}$, independent of the message and the encoder's randomness,
and two local channels
\[
 \cA_1:\LeftShare E_1\longrightarrow \LeftShare'E'_1,\qquad
 \cA_2:\RightShare E_2\longrightarrow \RightShare'E'_2.
\]
The channels have no communication link. Output shares are classical
strings in the prescribed alphabets. The local auxiliary-register dimensions are
arbitrary, and the security error must be independent of
their dimensions.
\end{definition}
A classical decoder can begin by measuring any purported quantum output
share in its fixed computational basis, so requiring classical returned
shares does not restrict the effective attacks against this decoder.
Classical source copies can be kept in the analysis without disturbing the
input.

\begin{definition}[Worst-case two-split NMC]\label[definition]{def:nmc}
A randomized encoder and a deterministic decoder with types
\[
 \Enc:\bits^\ell\longrightarrow\bits^{n_L}\times\bits^{n_R},\qquad
 \Dec:\bits^{n_L}\times\bits^{n_R}\longrightarrow\bits^\ell\cup\{\perp\}
\]
have perfect correctness when
$\Dec(\Enc(m))=m$ with probability one for every $m$. They are an
$\varepsilon$-secure NMC if, for every fixed attack $\cA$ from
\cref{def:attack}, there exists a distribution $D_{\cA}$ on
$\bits^\ell\cup\{\same,\perp\}$ such that
\begin{equation}
 \forall m\in\bits^\ell:\quad
 \Delta\bigl(\Dec(\cA(\Enc(m))),\Copy(m,D_{\cA})\bigr)\le\varepsilon.
 \label{eq:nmc-def}
\end{equation}
Here $\Copy(m,j)=m$ if $j=\same$ and $\Copy(m,j)=j$ otherwise.
For a distribution $D$, $\Copy(m,D)$ denotes the law of $\Copy(m,J)$ when
$J\sim D$. The rate is $\ell/(n_L+n_R)$, where $n_L,n_R$ are the lengths
of $\LeftShare,\RightShare$, respectively.
\end{definition}
Equation~\eqref{eq:nmc-def} compares decoded-message distributions; it
does not assert a joint simulation of the adversary's retained quantum
registers. The quantifier order $\forall\cA\,\exists D_{\cA}\,\forall m$ requires
one simulator distribution for all messages. The simulator may use the
attack's specified shared state; it is not an efficient classical simulation
of an unbounded quantum computation. Our decoder never outputs
$\perp$ on a prescribed-length word. 

\subsection{The precise BBJ security we invoke}
\label{sec:bbj-interface}
Let
$\nmext_0:\bits^n\times\bits^d\to\bits^{s_0}$ be deterministic.
We will retain $2s\le s_0$ fixed output coordinates and call the resulting
function $\nmext$. In the specification below write $k_Z$ for a general
output length. On independent uniform sources let
\[
 \begin{gathered}
 Z=\nmext(X,Y),\qquad Z'=\nmext(X',Y'),\qquad
 S=\ind[(X',Y')=(X,Y)],\\
 \overline Z'=\begin{cases} \same,&S=1,\\Z',&S=0.\end{cases}
 \end{gathered}
\]

\begin{definition}[Augmented non-malleability guarantee~\cite{BBJ23}]\label[definition]{def:inner}
For every local quantum attack on $X,Y$, with source-independent shared
entanglement, let $E'_2$ denote its retained right output register. The security guarantee with error $\eta$ is
\begin{equation}
 \Delta\bigl(\rho_{ZS\overline Z'YY'E'_2},
       U_{k_Z}\otimes\rho_{S\overline Z'YY'E'_2}\bigr)\le\eta.
 \label{eq:inner}
\end{equation}
\end{definition}
The flag $S$ is computed in the analysis using reserved original values. It is
not a message exchanged by the attackers. On $S=1$, we use the symbol $\same$ to denote $Z'=Z$. A
collision $Z'=Z$ on changed source inputs remains in the case $S=0$. 

\begin{assumption}[Batra--Boddu--Jain~\cite{BBJ23}]\label[assumption]{ass:bbj}
For every sufficiently small fixed $\delta>0$ allowed by the BBJ construction,
the construction is polynomial-time computable and satisfies
\cref{def:inner} with
\begin{equation}
 d=\delta n+O(1),\quad s_0=(1/2-\delta)n+O(1),\quad
 \eta=2^{-n^{\Omega(1)}}.
 \label{eq:bbj-params}
\end{equation}
The sources are independent and uniform; adversaries may use arbitrary
shared entanglement and local isometries. 
\end{assumption}

\heading{From BBJ's scaled guarantee to \cref{def:inner}.}
We use the right-residual scaled inequality of BBJ~\cite[Lemma 3]{BBJ23}.
All references to BBJ's lemma, theorem, and figure numbers refer to the
2023 preprint version~1 identified in the bibliography.
A mixed shared state can be purified by assigning its purifying register
to the left attacker. Each local channel can be dilated to an isometry:
for Kraus operators $L_j$, the map $v\mapsto\sum_j L_jv\otimes\ket j$
is isometric because $\sum_jL_j^\dagger L_j=I$. Tracing out the added
register recovers the channel. The right isometry may also keep copies
of its classical input $Y$ and classical output $Y'$ in its residual
register. Tracing out unused dilation registers therefore gives the
retained register $B=YY'E'_2$ required here. These are local operations;
they do not disclose $X$ or $X'$ to the right attacker.

Write $Z_0=\nmext_0(X,Y)$ and $Z'_0=\nmext_0(X',Y')$.
Let $p=\Pr[S=1]$, and let $\rho^1,\rho^0$ denote the normalized
same-input and changed-input branch states, respectively. BBJ's scaled
inequality, followed by the indicated local discarding, gives
\[
\begin{aligned}
 &p\normone{\rho^1_{Z_0B}-U_{s_0}\otimes\rho^1_B}\\
 &\quad +(1-p)\normone{\rho^0_{Z_0Z'_0B}
                    -U_{s_0}\otimes\rho^0_{Z'_0B}}
 \le 4(2^{-k}+\varepsilon_{\rm BBJ}),
\end{aligned}
\]
where $k$ and $\varepsilon_{\rm BBJ}$ are BBJ's parameters. A branch
of probability zero contributes zero and needs no conditional state.
Attach the classical flag $S$ and replace $Z'_0$ by $\same$ on $S=1$.
The trace norm of the resulting difference is exactly the sum of the
two disjoint branch norms above. Since our trace distance is half the
trace norm, this proves \cref{def:inner} with
$\eta=2(2^{-k}+\varepsilon_{\rm BBJ})$. For BBJ's allowed polynomially
growing $k$ and stretched-exponentially small $\varepsilon_{\rm BBJ}$,
this has the form in \eqref{eq:bbj-params}. Finally, applying the same
fixed coordinate projection to $Z_0$ and to the unmasked $Z'_0$ is
deterministic processing. It preserves the error and yields a uniform
$2s$-bit original key, so the truncated function also satisfies
\cref{def:inner}, with $k_Z=2s$. No left-residual or simultaneous
both-residual guarantee is used.

\section{Construction and exact permutation facts}
\label{sec:construction}
Fix positive integers $n,d,\ell,r$ and put
\begin{equation}
 s=\ell+r,\quad N=2^s,\quad q=2^\ell,\quad T=2^r,
 \qquad N=qT,
 \label{eq:length-names}
\end{equation}
with $2s\le s_0$. Work over $\F=\F_N$ in a fixed binary representation.
A concatenation $m\|R$ is interpreted as its field element, and the prefix
operation uses the same representation. 

\subsection{A total bit-seeded affine family}
For a $2s$-bit seed $z=(a,b)\in\F^2$, define
\begin{equation}
 a^\star=\nz(a)=\begin{cases}a,&a\ne0,\\1,&a=0,\end{cases}
 \quad \pi_z(u)=a^\star u+b,
 \quad f_z(c)=\pref\bigl((a^\star)^{-1}(c-b)\bigr).
 \label{eq:permutation}
\end{equation}
The ideal seed set is $\cK=\F^\times\times\F$. Uniform seeds and
$U_{\cK}$ are different distributions on the same seed alphabet.
Write
\[
 u_{\cK}(z):=\Pr_{Z\sim U_{\cK}}[Z=z]
           =\frac{\ind[z\in\cK]}{N(N-1)},\qquad z\in\F^2,
\]
for the ideal key's probability mass function on the raw seed alphabet.

\begin{lemma}[Seed distribution and two-point uniformity]\label[lemma]{lem:affine}
Every fixed seed defines a permutation. Moreover,
$\Delta(U_{2s},U_{\cK})=1/N$. Under $Z\leftarrow U_{\cK}$,
for fixed distinct ciphertexts $c,c'$, the ordered inverse pair
$(\pi_Z^{-1}(c),\pi_Z^{-1}(c'))$ is uniform over the $N(N-1)$ ordered
distinct field pairs.
\end{lemma}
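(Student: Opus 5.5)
The lemma has three claims, and I will prove them in order, each by direct counting over $\F=\F_N$.

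\emph{Each fixed seed gives a permutation.} Fix $z=(a,b)\in\F^2$. By \eqref{eq:permutation}, $a^\star=\nz(a)$ is always nonzero. So $u\mapsto a^\star u+b$ is an affine bijection of $\F$, with inverse $c\mapsto (a^\star)^{-1}(c-b)$. This also shows that $f_z$ is the prefix of the inverse permutation, as intended.

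\emph{The distance $\Delta(U_{2s},U_{\cK})$.} The uniform seed $U_{2s}$ puts mass $1/N^2$ on each of the $N^2$ points of $\F^2$. The ideal key puts mass $u_{\cK}(z)=1/(N(N-1))$ on $\cK$ and zero off it. Since $U_{\cK}$ is the uniform distribution on a subset of the uniform support, the total variation distance is the uniform mass outside that subset:
\[
\Delta(U_{2s},U_{\cK}) = \Pr_{U_{2s}}[z\notin\cK] = \frac{N}{N^2} = \frac1N.
\]
The $N$ excluded points are those with $a=0$. As a sanity check, the excess mass on $\cK$ is $N(N-1)\bigl(\tfrac{1}{N(N-1)}-\tfrac{1}{N^2}\bigr)=1-\tfrac{N-1}{N}=\tfrac1N$, which agrees.

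\emph{Two-point uniformity.} Fix distinct $c,c'$. On $\cK$ we have $a^\star=a$, so the map
\[
(a,b)\mapsto (u,u') = \bigl(a^{-1}(c-b),\,a^{-1}(c'-b)\bigr)
\]
sends $\cK$ into the set of ordered distinct pairs. The outputs are distinct because $u-u'=a^{-1}(c-c')\neq0$. Both sets have exactly $N(N-1)$ elements, so it suffices to show the map is injective, or equivalently to invert it. Given distinct $(u,u')$, the relations $c=au+b$ and $c'=au'+b$ force
\[
a=\frac{c-c'}{u-u'} \neq 0, \qquad b = c-au.
\]
So each target pair has exactly one preimage in $\cK$. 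A uniform $Z$ on $\cK$ therefore induces the uniform law on ordered distinct pairs.

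None of these steps is difficult. The only point needing care is that the two-point claim must use $U_{\cK}$ and not $U_{2s}$. For $a=0$ the substitution $a^\star=1$ makes the induced map non-injective, since the pair depends only on $b$. Restricting to $\cK$ is exactly what makes the bijection work, and this is why the $1/N$ cost appears in the first claim.
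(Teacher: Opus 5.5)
Your proof is correct and follows the paper's argument: $a^\star\ne0$ gives the affine bijection, $U_{\cK}$ is $U_{2s}$ conditioned on the complement of the slope-zero event of mass $1/N$ (which gives the distance), and two-point uniformity follows from the unique solution $a=(c-c')/(u-u')\ne0$, $b=c-au$. Your explicit bijection-by-cardinality framing and the remark on non-injectivity at $a=0$ only spell out these same steps in more detail.
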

\begin{proof}
Since $a^\star\ne0$, $\pi_z$ is a permutation with inverse
$\pi_z^{-1}(c)=(a^\star)^{-1}(c-b)$ for every $a,b$. The decoder $f_z$
keeps only the first $\ell$ bits of this inverse image. Uniform seeds have
$a=0$ with probability
$1/N$; conditioning on its complement produces $U_{\cK}$, with distance
exactly $1/N$. Equivalently, sum the absolute probability differences on
and off that event.
For any $u\ne u'$ and $c\ne c'$, the equations
$au+b=c$, $au'+b=c'$ have the unique solution
$a=(c-c')/(u-u')\ne0$ and $b=c-au$. Thus exactly one ideal key gives each
ordered inverse pair, proving uniformity.
\end{proof}

\begin{lemma}[Fibers and conditional prefixes]\label[lemma]{lem:fibers}
For every fixed seed and every message, the set
$\mathcal F_z(m)=\{c:f_z(c)=m\}$ has exactly $T$ elements, sampled
uniformly by $\pi_z(m\|U_r)$. For an ideal key, fixed message $m$, and
$c\ne c'$,
\begin{align}
 \Pr_Z[f_Z(c)=m]&=1/q,\label{eq:onepoint}\\
 \Pr_Z[f_Z(c)=m,f_Z(c')=m]&=\frac{T(T-1)}{N(N-1)}.
 \label{eq:twomembership}
\end{align}
The conditional distribution of $f_Z(c')$ given $f_Z(c)=m$ is
\begin{equation}
 p_m(u)=\begin{cases}(T-1)/(N-1),&u=m,\\T/(N-1),&u\ne m,\end{cases}
 \quad
 \Delta(p_m,U_\ell)=\epsp=\frac{q-1}{q(N-1)}.
 \label{eq:prefix-law}
\end{equation}
\end{lemma}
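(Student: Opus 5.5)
The plan is to reduce every claim to counting preimages under the bijection $\pi_z$ and, for the ideal key, to the two-point uniformity of \cref{lem:affine}.

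\textbf{Fibers.} Fix a seed $z$. By \cref{lem:affine}, $\pi_z$ is a permutation, and $f_z(c)=\pref(\pi_z^{-1}(c))$. Hence $f_z(c)=m$ holds exactly when $\pi_z^{-1}(c)$ lies in the prefix class $\{m\|u:u\in\bits^r\}$. This class has $T$ elements, so $\mathcal F_z(m)=\pi_z(\{m\|u\})$ also has exactly $T$ elements. Because $\pi_z$ is injective, the map $u\mapsto\pi_z(m\|u)$ is a bijection from $\bits^r$ onto $\mathcal F_z(m)$. Therefore $\pi_z(m\|U_r)$ is uniform on the fiber. This uses only the convention, fixed in \cref{sec:construction}, that concatenation and prefix refer to the same binary representation of $\F$.

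\textbf{One- and two-point probabilities.} Take $Z\leftarrow U_{\cK}$. For a single $c$, I will marginalize the ordered-pair uniformity of \cref{lem:affine}: pick any $c'\ne c$ (possible since $N\ge2$). Since the pair $(\pi_Z^{-1}(c),\pi_Z^{-1}(c'))$ is uniform over distinct ordered pairs, each first coordinate receives $N-1$ of the $N(N-1)$ pairs, so $\pi_Z^{-1}(c)$ is uniform on $\F$. Its prefix then equals $m$ with probability $T/N=1/q$, which is \eqref{eq:onepoint}. For $c\ne c'$, the event $f_Z(c)=f_Z(c')=m$ means both inverse images lie in the $T$-element class $\{m\|u\}$ and are distinct. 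There are $T(T-1)$ such ordered pairs among $N(N-1)$ equally likely pairs, giving \eqref{eq:twomembership}.

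\textbf{Conditional prefix law.} Divide by \eqref{eq:onepoint}. For $u=m$ this gives $p_m(m)=qT(T-1)/(N(N-1))=(T-1)/(N-1)$. For $u\ne m$, the same count gives $T\cdot T$ ordered pairs, so $p_m(u)=T/(N-1)$. As a check, the masses sum to $(T-1+(q-1)T)/(N-1)=1$.

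\textbf{Distance to uniform.} Since $p_m(m)<1/q$ and the other points carry excess mass, the total variation equals the deficit at $m$:
\[
 \frac1q-\frac{T-1}{N-1}=\frac{N-1-q(T-1)}{q(N-1)}=\frac{q-1}{q(N-1)},
\]
using $qT=N$. This is $\epsp$.

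None of these steps is hard. The only point needing care is the single-point marginal, which must be derived from the pair statement when the set of distinct ciphertext choices is nonempty. Alternatively, one can compute it directly: for fixed $c$ and $u$, the pairs $(a,b)\in\cK$ with $au+b=c$ number exactly $N-1$, one for each nonzero $a$.
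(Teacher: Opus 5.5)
Your proposal is correct and follows essentially the same route as the paper. Both use the bijection from the $T$ prefix-$m$ strings onto the fiber, count ordered distinct inverse pairs under the pair uniformity of \cref{lem:affine}, divide by $T/N=1/q$, and read off the total variation from the deficit at $m$. Your explicit derivation of the one-point marginal, either from the pair statement or by counting the $N-1$ keys $(a,c-au)$, spells out a step that the paper states in a single line.
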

\begin{proof}
The $T$ strings with prefix $m$ map bijectively to $\mathcal F_z(m)$.
Their uniform images therefore give the stated sampler for every fixed key.
For an ideal key, an inverse image is uniform on $N$ elements, of which
$T$ have prefix $m$. This proves \eqref{eq:onepoint}. There are $T(T-1)$
ordered distinct inverse pairs with both prefixes $m$, proving
\eqref{eq:twomembership}. For a different second prefix $u$ there are
$T^2$ such pairs. Divide the corresponding joint probabilities by $T/N$
to obtain the conditional distribution.
Its probabilities sum to one because $(T-1)+(q-1)T=N-1$.
At $u=m$ its deficit from $1/q$ is $(q-1)/(q(N-1))$; at each other
message its excess is $1/(q(N-1))$. Summing positive differences proves
the exact total variation formula.
\end{proof}
Joint fiber membership will control the message-selection second moment;
the conditional second prefix will control same-key ciphertext changes.

\subsection{Encoding and decoding}
\begin{construction}[Padding and a single affine permutation]\label[construction]{con:code}
To encode $m\in\bits^\ell$, sample $X\leftarrow U_n$, $Y\leftarrow U_d$,
and $R\leftarrow U_r$ independently. Compute the truncated output
$Z=(a,b)=\nmext(X,Y)$, form $C=\pi_Z(m\|R)$, and output
\begin{equation}
 \LeftShare=X,\qquad \RightShare=(Y,C).
 \label{eq:codeword}
\end{equation}
On $X',(Y',C')$, compute $Z'=\nmext(X',Y')$ and output $f_{Z'}(C')$.
Here $R$ is the random padding, while $\RightShare$ denotes the right share.
\end{construction}

\begin{lemma}[Correctness and complexity]\label[lemma]{lem:correctness}
The code has perfect correctness, uses exactly $n+d+r$ independent random
bits in encoding, and has rate $\ell/(n+d+s)$. Given the public field representation, encoding and decoding run in time polynomial in $n+d+s$.
\end{lemma}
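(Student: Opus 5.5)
We prove the three assertions of \cref{lem:correctness} in turn: correctness, the randomness count with the rate, and efficiency.

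\textbf{Correctness.} Fix $m$ and any sampled $X,Y,R$. An honest codeword is returned unchanged, so the decoder receives $X'=X$ and $(Y',C')=(Y,C)$. It therefore recomputes $Z'=\nmext(X,Y)=Z$, since $\nmext$ is deterministic (a fixed coordinate projection of the deterministic $\nmext_0$). By \cref{lem:affine}, $\pi_Z$ is a permutation for every raw seed, including $a=0$, thanks to $\nz$. Its inverse is $(a^\star)^{-1}(c-b)$, so
\[
f_Z(C)=\pref\bigl(\pi_Z^{-1}(\pi_Z(m\|R))\bigr)=\pref(m\|R)=m.
\]
The last equality uses that concatenation and prefix are taken in the same fixed binary representation of $\F_N$. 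This holds for every outcome of the randomness, which gives probability one.

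\textbf{Randomness and rate.} The encoder samples exactly $X\leftarrow U_n$, $Y\leftarrow U_d$ and $R\leftarrow U_r$, and every later step is deterministic. This gives $n+d+r$ bits. The shares have lengths $n_L=n$ and $n_R=d+s$, since $C\in\F_N$ is an $s$-bit string. By \cref{def:nmc} the rate is therefore $\ell/(n+d+s)$.

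\textbf{Complexity.} Computing $\nmext$ takes polynomial time in $n$ (and $d\le n$) by \cref{ass:bbj}, and truncating to $2s\le s_0$ coordinates is trivial. Field arithmetic in $\F_{2^s}$ under a fixed public irreducible polynomial takes time polynomial in $s$. This covers multiplication, addition, inversion of $a^\star\neq0$ via extended Euclid, and evaluating $\nz$. Prefix extraction is also polynomial. Summing these costs gives time polynomial in $n+d+s$.

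The only delicate point is the totality of the seed family at $a=0$, and this is exactly what $\nz$ handles via \cref{lem:affine}.
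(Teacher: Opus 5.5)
Your proof is correct and follows the paper's argument. You recompute the same key, invert the total affine permutation (also valid at $a=0$ via $\nz$) to recover $m\|R$ and read off its prefix. You count the three sampled strings and the share lengths $n$ and $d+s$, and you get efficiency from \cref{ass:bbj} together with polynomial-time arithmetic in $\F_{2^s}$.
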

\begin{proof}
The decoder recomputes the same key and
$(a^\star)^{-1}(\pi_Z(m\|R)-b)=m\|R$, also when $a=0$. The first
$\ell$ bits are $m$ for every execution. $\nmext$ evaluation is polynomial time 
by \cref{ass:bbj}; field addition, multiplication, and inversion are
polynomial-time operations.\footnote{Using a publicly fixed irreducible degree-$s$ polynomial over $\F_2$
as modulus, field addition, multiplication, and inversion of nonzero
elements take polynomial time in $s$. The representation uses no
per-message randomness.} The only random bits are the three sampled
strings. The transmitted lengths are $n$ and $d+s$, proving the rate.
\end{proof}

\subsection{The exact uniform-ciphertext hybrid}
For a fixed attack, define a hybrid by sampling $X,Y,C$ independently
uniform, computing $Z$ for the analysis, and applying the two local
operations. Its latent message is $M=f_Z(C)$.
\begin{lemma}[Exact real--hybrid identity]\label[lemma]{lem:real-hybrid}
In this hybrid, $M$ is uniform and independent of $X,Y$. For every $m$, the
hybrid conditioned on $M=m$ is exactly the real encoding-and-tampering
experiment, including retained quantum output.
\end{lemma}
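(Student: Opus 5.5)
The plan is to compare the two experiments at the level of the classical variables $(X,Y,C)$: first we compute the hybrid's joint law of $(X,Y,C,M)$, and then we note that both experiments apply the same local channels to a classical input that depends only on $(X,Y,C)$. In the hybrid, $X,Y,C$ are independent and uniform, and $Z=\nmext(X,Y)$ is a deterministic function of $(X,Y)$.

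\textbf{Step 1: $M$ is uniform and independent of $(X,Y)$.} Fix $(x,y)$; this fixes $z$. Since $C$ is uniform and independent of $(X,Y)$,
\[
\Pr[M=m\mid X=x,Y=y]=\Pr[C\in\mathcal F_z(m)]=T/N=1/q
\]
for every $m$, by \cref{lem:fibers}. This value does not depend on $(x,y)$, so $M$ is uniform and independent of $(X,Y)$.

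\textbf{Step 2: the conditioned hybrid has the real input law.} Conditioning on $M=m$ leaves the law of $(X,Y)$ uniform, by Step 1. Given $(x,y)$ and $M=m$, $C$ is uniform on $\mathcal F_z(m)$. By \cref{lem:fibers}, this is exactly the law of $\pi_z(m\|U_r)$. Hence the conditional law of $(X,Y,C)$ given $M=m$ equals the law of $(X,Y,\pi_Z(m\|R))$ in \cref{con:code}. Equivalently, $P^{\rm hyb}_{XYC\mid M=m}(x,y,c)=2^{-n-d}\,\ind[f_z(c)=m]/T$, which is the real probability of $(x,y,c)$.

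\textbf{Step 3: lift to the quantum output.} The full hybrid state is
\[
\sum_{x,y,c}P(x,y,c)\,(\cA_1\otimes\cA_2)\bigl(\ket{x}\bra{x}\otimes\ket{y,c}\bra{y,c}\otimes\psi_{E_1E_2}\bigr),
\]
with the classical copies of $x,y,c$ and $m=f_z(c)$ retained. The shared state $\psi$ and the channels do not depend on the message or on the encoder's randomness (\cref{def:attack}). Therefore conditioning on the classical register $M=m$ means selecting the blocks with $f_z(c)=m$ and dividing by $1/q$. The result is the same mixture of identical operator blocks, now weighted by the real input law from Step 2.

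The main subtlety is phrasing ``conditioning'' rigorously for a classical--quantum state. We handle it by writing the state with unnormalized blocks, as in the preliminaries. Then the blocks match term by term, and the normalization $1/q$ is exact because it is constant in $(x,y)$.
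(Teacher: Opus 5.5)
Your proposal is correct and follows essentially the same route as the paper: fiber counting via \cref{lem:fibers} gives $\Pr[M=m\mid x,y]=1/q$, Bayes' rule gives the real law $2^{-n-d}T^{-1}\ind[f_{z(x,y)}(c)=m]$ (source distribution times fiber sampler), and the quantum statement follows because the same attack on the same shared state yields identical output blocks for each fixed $(x,y,c)$. Writing the conditioned classical--quantum state with unnormalized blocks and matching them term by term is also how the paper handles the quantum case.
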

\begin{proof}
For each $x,y$, exactly $T$ of the $N$ ciphertexts have the requested
prefix under its total permutation. Therefore $\Pr[M=m\mid x,y]=1/q$.
Write $z(x,y)=\nmext(x,y)$ and $p(x,y)=2^{-n-d}$. Bayes' rule gives
\begin{equation}
 \Pr[x,y,c\mid M=m]
 =\frac{p(x,y)}{T}\ind[f_{z(x,y)}(c)=m].
 \label{eq:real-law}
\end{equation}
This is the source distribution times the exact fiber sampler in
\cref{lem:fibers}. For each fixed triple $(x,y,c)$, the same attack on
the same shared state has the same conditional quantum output in both
experiments. Multiplying those matrices by \eqref{eq:real-law} and summing
proves equality of the entire states.
\end{proof}
For any output register $J$ jointly distributed with a uniform classical
$M\in\bits^\ell$, block-diagonal additivity gives
\[
 \Delta(\rho_{MJ},U_\ell\otimes\rho_J)
 =q^{-1}\sum_{m\in\bits^\ell}
       \Delta(\rho_{J\mid M=m},\rho_J).
\]
The following reduction controls each fixed-message output without the
generic factor $q$ associated with conditioning an average error bound.

\subsection{The main theorem}
We now state the main theorem. Its proof occupies Sections~\ref{sec:rearrangement}--\ref{sec:security}.

\begin{theorem}[Quantum-secure two-split code]\label[theorem]{thm:main}
Let $n,d,\ell,r$ be positive integers, $s=\ell+r$, and $2s\le s_0$.
Assume the truncated $\nmext$ satisfies \cref{def:inner} with error
$\eta$. For $0<\zeta<1$, put
\begin{align}
 \alpha&=\eta+2^{-s},\label{eq:alpha}\\
 \beta&=\alpha+\sqrt{2\alpha}
               +2\left(\frac{2^{d-r}}{\zeta}\right)^{1/4},
               \label{eq:beta}\\
 \epsp&=\frac{2^\ell-1}{2^\ell(2^s-1)}.
 \label{eq:pair-error}
\end{align}
Then \cref{con:code} is a perfectly correct, worst-case two-split quantum-secure
NMC of rate $\ell/(n+d+s)$ and error
\begin{equation}
 \epsn\le\zeta+\beta+\epsp.
 \label{eq:main-error}
\end{equation}
If $g=r-d>0$, choosing
$\zeta=2^{-\lceil g/5\rceil}$ gives, whenever $\alpha\le1$,
\begin{equation}
 \epsn=O\bigl(\sqrt\eta+2^{-g/5}\bigr).
 \label{eq:main-asymptotic}
\end{equation}
\end{theorem}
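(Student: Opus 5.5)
The proof follows the quantum overview of Section~\ref{sec:quantum-overview}, organized as a chain of hybrids for a fixed attack $\cA$ and a fixed message $m$. The simulator $D_{\cA}$ is built from $\cA$ alone, so it cannot depend on $m$. Correctness and the rate are already given by \cref{lem:correctness}, so only the error bound~\eqref{eq:main-error} needs proof.

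\emph{Step 1 (smoothing and rearrangement).} First replace the right channel by its $\zeta$-smoothed version: with probability $\zeta$ it outputs a uniform $Y'$, keeps $C'=C$, and leaves $E_2$ untouched. This costs at most $\zeta$ in trace distance and gives $G^y_{y'}\succeq\zeta2^{-d}I$. Next use \cref{lem:real-hybrid} to write the real experiment for $m$ as the uniform-ciphertext hybrid, restricted to $f_Z(C)=m$ and multiplied by $q$. Then factor each smoothed outcome map as the coarse map $\mathcal C^y_{y'}(\theta)=\sqrt G\theta\sqrt G$ followed by the continuation $\mathcal R^{y,y'}_{c,c'}$. Verify that $\mathcal R\circ\mathcal C=N^{-1}\mathcal V$ holds as an identity of CP maps, so it survives tensoring with $\mathrm{id}_{E_1}$. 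Verify also that $\sum_{c,c'}\mathcal R^*(I)=I$, and that $Q_c\preceq\kappa I$ with $\kappa=2^{d-s}/\zeta$; the latter follows from $M^{y,c}_{y'}\preceq I$ and the lower bound on $G$. The coarse stage is a legitimate local attack on $(X,Y)$. It therefore produces an actual state $\mu_{Z\Lambda E_2}$, and \cref{def:inner} together with \cref{lem:affine} yields $\Delta(\mu,\sigma)\le\alpha$ for $\sigma=U_{\cK}\otimes\mu_{\Lambda E_2}$.

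\emph{Step 2 (message-selection map and mean-one comparison).} Define the CP map $\Phi_m$. It applies the continuation, keeps only the outcomes with $f_Z(C)=m$, multiplies by $q$, and writes the three-case output symbol: $f_{Z'}(C')$ on $S=0$, $\same$ on $S=1$ with $C'=C$, and $f_Z(C')$ otherwise. It then discards everything else. Set $H=\Phi_m^*(I)$. By \cref{lem:fibers}, $\Phi_m(\mu)$ is exactly the smoothed real decoder output, and $\Tr H\mu=1$; by~\eqref{eq:onepoint}, $\Tr H\sigma=1$. Factor $\Phi_m=\Psi\circ(\sqrt H\cdot\sqrt H)$ with $\Psi$ a channel.

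Purification then gives $\Delta(\sqrt H\theta\sqrt H,\theta)\le\sqrt{2\Tr\theta(I-H)_+}$ whenever $\Tr H\theta=1$. Because $0\preceq(I-H)_+\preceq I$, we get $\Tr\mu(I-H)_+\le d_\sigma+\alpha$ with $d_\sigma=\Tr\sigma(I-H)_+$. The triangle inequality along $\Phi_m(\mu),\mu,\sigma,\Phi_m(\sigma)$ (after $\Psi$) then bounds the distance by $\sqrt{2(d_\sigma+\alpha)}+\alpha+\sqrt{2d_\sigma}$.

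\emph{Step 3 (second moment).} This is the main obstacle, because the $Q_c$ need not commute. Work blockwise in the classical label $\lambda$. Using~\eqref{eq:onepoint} and~\eqref{eq:twomembership}, expand $\E_Z H_Z^2=q^2\sum_{c,c'}\Pr[f_Z(c)=f_Z(c')=m]\,Q_cQ_{c'}$. Split off the diagonal terms and use $\sum_{c,c'}Q_cQ_{c'}=I$, keeping the ordered products throughout, to get $\E_ZH_Z^2=\gamma I+(q-\gamma)\sum_cQ_c^2$. Since $Q_c^2\preceq\kappa Q_c$, this gives $\E_Z(H_Z-I)^2\preceq q\kappa I$. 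Then
\[
 d_\sigma\le\tfrac12\Tr\sigma|H-I|\le\tfrac12\sqrt{q\kappa},
\]
using $(1-x)_+\le\tfrac12|x-1|$ together with the mean-one property and \cref{lem:tools}. Substituting, and using $\sqrt{a+b}\le\sqrt a+\sqrt b$, gives at most $\beta$ for Step 2.

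\emph{Step 4 (simulator).} On the ideal state $\sigma$, the $Z$-independent branches are untouched by selection in law. In the branch $S=1$, $C'\ne C$, the output $f_Z(C')$ conditioned on selection follows $p_m$, and replacing it by $U_\ell$ costs $\epsp$ by~\eqref{eq:prefix-law}. Define $D_{\cA}$ as the message-free procedure: run the smoothed coarse hybrid with an ideal key, continue, and output $f_{Z'}(C')$ on $S=0$, $\same$ on $S=1$ with $C'=C$, and a fresh uniform string otherwise. A check that the ideal $Z$ independent of $\Lambda E_2$ makes these laws independent of $m$ completes the bound $\zeta+\beta+\epsp$.

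\emph{Asymptotics.} Choosing $\zeta=2^{-\lceil g/5\rceil}$ balances $\zeta$ against $(2^{-g}/\zeta)^{1/4}$, and $\alpha\le1$ gives $\alpha\le\sqrt\alpha$. This yields~\eqref{eq:main-asymptotic}.
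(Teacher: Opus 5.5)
Your proposal is correct and follows the paper's proof essentially step for step: $\zeta$-smoothing, the square-root coarse instrument with its exact continuation, a single BBJ invocation on $\mu$, the mean-one comparison through $H=\Phi_m^*(I)$, the ordered-product second moment $\E_Z H_Z^2=\gamma I+(q-\gamma)\sum_cQ_c^2$, and the message-free simulator charged $\epsp$. Two small fixes are needed. First, $(1-x)_+\le\tfrac12|x-1|$ is false pointwise (take $x=0$); what you need is the identity $A_+=\tfrac12(|A|+A)$, whose signed term vanishes after averaging over $Z$ because $\E_ZH_Z=I$. Second, $\Phi_m(\mu)$ is the symbol law $P_{J_m}$ rather than the decoder output, so you must apply $\Copy(m,\cdot)$ (via \cref{lem:copy}) to both sides at the end.
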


\section{Reordering the attack before message conditioning}
\label{sec:rearrangement}

Fix an arbitrary quantum attack. We reorder its uniform-ciphertext
experiment so that BBJ's non-malleable randomness-encoder guarantee can be
applied to an intermediate state while retaining the right auxiliary
register $E_2$. The factorization is an equality of quantum operations.

\subsection{Smoothing and invertible probability operators}
Fix $0<\zeta<1$. Modify only the right operation: with probability
$1-\zeta$ run the original operation; with probability $\zeta$ choose
$Y'$ uniformly, keep $C'=C$, and retain the right input state in $E_2$. Call the
smoothed attack $\cA_\zeta$. Embed the two possible auxiliary output
registers in a common direct-sum space with a classical flag registering
the branch. In this quantum proof,
$P^{\rm hyb}$ refers to the uniform-ciphertext hybrid for $\cA_\zeta$.

Let $\mathcal V^{y,c}_{y',c'}:E_2\to E'_2$ be the CP outcome maps of
the smoothed right attack.
\begin{lemma}[Smoothing cost]\label[lemma]{lem:fallback}
For every input, the original and smoothed output states are within
$\zeta$ in trace distance, also after decoding. For all classical values
$y,c$ and $y'\in\bits^d$, the probability operator for the event $Y'=y'$,
after summing over $c'$, satisfies
\[
 M_{y'}^{y,c}:=\sum_{c'}(\mathcal V^{y,c}_{y',c'})^*(I)
             \succeq\zeta2^{-d}I.
\]
\end{lemma}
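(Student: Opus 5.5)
The proof splits into the two claims of \cref{lem:fallback}, treated separately.

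\emph{Smoothing cost.} Write the smoothed right operation as the mixture $(1-\zeta)\cA_2+\zeta\mathcal B$, where $\mathcal B$ is the fallback channel. The fallback outputs a uniform $Y'$, sets $C'=C$, and retains $E_2$. Both branches are embedded in the common direct-sum output space, with the classical branch flag then traced out or ignored. For any input state on $\RightShare E_2$ jointly with the left register $E_1$ and the left share, the actual and smoothed joint outputs differ by
\[
 \zeta\bigl((\Id\otimes\cA_2)(\rho)-(\Id\otimes\mathcal B)(\rho)\bigr),
\]
before and after the left channel is applied. The triangle inequality bounds the trace distance of this difference by $\zeta$, since each of the two states is normalized. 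Decoding, and any further processing to the decoded-message register, is a channel, so by \cref{lem:tools} the distance does not increase.

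One caveat concerns the direct-sum embedding. Comparing the original operation (output in its own space) with the smoothed one requires identifying the original output with the first summand via an isometric embedding. Isometric embeddings preserve trace distance, so this identification causes no loss.

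\emph{Operator lower bound.} The CP outcome maps of the smoothed attack decompose as
\[
 \mathcal V^{y,c}_{y',c'}=(1-\zeta)\,\mathcal V^{y,c,\mathrm{orig}}_{y',c'}+\zeta\,\mathcal W^{y,c}_{y',c'},
\]
each term mapped into its summand. The fallback map is
\[
 \mathcal W^{y,c}_{y',c'}(\theta)=2^{-d}\,\ind[c'=c]\,\theta,
\]
viewed as the identity embedding of $E_2$ into the second summand. Its adjoint satisfies $(\mathcal W^{y,c}_{y',c'})^*(I)=2^{-d}\ind[c'=c]\,I$. Because the outputs lie in orthogonal summands, the adjoint of the sum at $I$ is the sum of the adjoints. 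Summing over $c'$ gives
\[
 M^{y,c}_{y'}=(1-\zeta)M^{y,c,\mathrm{orig}}_{y'}+\zeta2^{-d}I.
\]
The original term is positive semidefinite as an adjoint of a CP map at $I$. Hence $M^{y,c}_{y'}\succeq\zeta2^{-d}I$.

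The only delicate point is making the classical-input outcome maps precise. I would fix each classical input $(y,c)$, view the right operation as a quantum instrument on $E_2$ indexed by the measured outputs $(y',c')$, and apply the above decomposition blockwise. I do not expect any further obstacle; the remainder is linearity of adjoints.
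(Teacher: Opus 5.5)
Your proposal is correct and follows essentially the same route as the paper: the smoothed output is the mixture $(1-\zeta)\rho+\zeta\theta$, so its distance from the embedded original is $\zeta\Delta(\theta,\rho)\le\zeta$ and decoding cannot increase it. For the operator bound, the fallback branch contributes exactly $\zeta2^{-d}I$ to $M^{y,c}_{y'}$ after summing over $c'$, and the original branch contributes a positive operator. (A minor wording point: the adjoint of a sum is the sum of adjoints by linearity alone, so orthogonality of the summands is not what justifies that step. Orthogonality is used only to see that the identity on the direct sum pulls back to the identity on each branch.)
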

\begin{proof}
The new state is $(1-\zeta)\rho+\zeta\theta$, compared with the original
embedded state $\rho$. Their distance is
$\zeta\Delta(\theta,\rho)\le\zeta$, and decoding cannot increase it.
The smoothing selects each $y'$ independently of the state in $E_2$ with probability
$2^{-d}$; its scaled probability operator is $\zeta2^{-d}I$.
After summing over $c'$, the original branch contributes a positive
operator. This proves the lower bound for the aggregated event $Y'=y'$.
\end{proof}
Smoothing is used only in the proof and simulator, to make these
aggregated probability operators and their averages invertible.

For a joint input $\rho_{YCE_2}=\sum_{y,c}\ket{y,c}\bra{y,c}\otimes\omega_{E_2}^{y,c}$,
the output with classical input copies retained is
\[
 \sum_{y,c,y',c'}\ket{y,c,y',c'}\bra{y,c,y',c'}\otimes
              \mathcal V^{y,c}_{y',c'}(\omega_{E_2}^{y,c}).
\]
This describes all outcomes together, including their output states in register $E'_2$.
Their probabilities are the traces of the displayed blocks; no division by an
outcome probability is needed. Define
\begin{equation}
 M_{y'}^{y,c}=\sum_{c'}(\mathcal V^{y,c}_{y',c'})^*(I),
 \qquad G_{y'}^y=\frac1N\sum_cM_{y'}^{y,c}.
 \label{eq:MG}
\end{equation}
For every $y,c$, the fine CP maps sum to a channel, so
\begin{equation}
 \sum_{y'} M_{y'}^{y,c}=I,\quad
 \zeta2^{-d}I\preceq M_{y'}^{y,c}\preceq I,
 \quad \sum_{y'}G_{y'}^y=I,\quad G_{y'}^y\succeq\zeta2^{-d}I.
 \label{eq:Gbounds}
\end{equation}
The scalar $\Tr G_{y'}^y\theta$ is the hybrid probability of the output
$y'$ on input state $\theta_{E_2}$. There is no dependence on the original key:
the actual right procedure receives only $y,c$ and the state in $E_2$.

\subsection{An exact continuation after the output \texorpdfstring{$Y'$}{Y'}}
\begin{lemma}[Continuation factorization]\label[lemma]{lem:continuation}
For each fixed $y$, the CP maps
\begin{equation}
 \mathcal C_{y'}^y(\theta)=\sqrt{G_{y'}^y}\theta\sqrt{G_{y'}^y}
 \label{eq:coarse}
\end{equation}
indexed by $y'$ form an instrument. For each fixed pair $(y,y')$, define
\begin{equation}
 \mathcal R^{y,y'}_{c,c'}(\theta)=\frac1N
 \mathcal V^{y,c}_{y',c'}\bigl((G_{y'}^y)^{-1/2}\theta(G_{y'}^y)^{-1/2}\bigr).
 \label{eq:continuation}
\end{equation}
For this fixed pair, the sum over $c,c'$ is a channel and
\begin{equation}
 \mathcal R^{y,y'}_{c,c'}\circ\mathcal C_{y'}^y
       =\frac1N\mathcal V^{y,c}_{y',c'}.
 \label{eq:composition}
\end{equation}
Its probability operators for $C=c$ obey
\begin{equation}
 Q_c^{y,y'}:=\sum_{c'}(\mathcal R^{y,y'}_{c,c'})^*(I),\qquad
 \sum_cQ_c^{y,y'}=I,\qquad
 0\preceq Q_c^{y,y'}\preceq\kappa I,
 \quad \kappa=\frac{2^{d-s}}{\zeta}.
 \label{eq:Q}
\end{equation}
\end{lemma}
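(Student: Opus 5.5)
The proof is a direct verification using the bounds \eqref{eq:Gbounds} together with Heisenberg-picture adjoints. Fix $y$ and abbreviate $G_{y'}=G_{y'}^y$.

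\textbf{The coarse maps form an instrument.} Each $\mathcal C_{y'}^y$ has the single Kraus operator $\sqrt{G_{y'}}$, so it is CP. Its adjoint satisfies $(\mathcal C_{y'}^y)^*(I)=G_{y'}$, and summing over $y'$ gives $\sum_{y'}G_{y'}=I$ by \eqref{eq:Gbounds}. The family is therefore an instrument.

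\textbf{The continuation maps.} Fix $(y,y')$. Since $G_{y'}\succeq\zeta2^{-d}I$, the inverse square root $G_{y'}^{-1/2}$ exists, is Hermitian, and satisfies $G_{y'}^{-1/2}\preceq(\zeta2^{-d})^{-1/2}I$.

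\begin{itemize}
\item \emph{Complete positivity.} Each $\mathcal R^{y,y'}_{c,c'}$ is the composition of a conjugation with a CP map $\mathcal V^{y,c}_{y',c'}$, followed by scaling by $1/N$. It is therefore CP.
\item \emph{Composition identity \eqref{eq:composition}.} For any operator $\theta$,
\[
G_{y'}^{-1/2}\sqrt{G_{y'}}\,\theta\sqrt{G_{y'}}\,G_{y'}^{-1/2}=\theta,
\]
because $G_{y'}^{-1/2}$ and $\sqrt{G_{y'}}$ are functions of the same operator and so commute and cancel. This holds on all matrices, not only on states.
\item \emph{Adjoint formula.} By linearity,
\[
Q_c^{y,y'}=\frac1N\,G_{y'}^{-1/2}M_{y'}^{y,c}G_{y'}^{-1/2}.
\]
Summing over $c$ and using $\frac1N\sum_c M_{y'}^{y,c}=G_{y'}$ gives
\[
\sum_c Q_c^{y,y'}=G_{y'}^{-1/2}G_{y'}G_{y'}^{-1/2}=I.
\]
\end{itemize}

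\textbf{The sum over $(c,c')$ is a channel.} The sum $\sum_{c,c'}\mathcal R^{y,y'}_{c,c'}$ is CP, and its adjoint at $I$ equals $\sum_c Q_c^{y,y'}=I$. By the equivalence recalled in the preliminaries, it is trace preserving and hence a channel.

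\textbf{The bound on $Q_c$.} Positivity $Q_c^{y,y'}\succeq0$ is immediate, since $M_{y'}^{y,c}\succeq0$ and conjugation preserves positive order. For the upper bound, use $M_{y'}^{y,c}\preceq I$ from \eqref{eq:Gbounds}. Conjugation preserves order, so
\[
Q_c^{y,y'}\preceq\frac1N G_{y'}^{-1}\preceq\frac1N\cdot\frac{2^d}{\zeta}I=\frac{2^{d-s}}{\zeta}I=\kappa I.
\]
The step $G_{y'}^{-1}\preceq(\zeta2^{-d})^{-1}I$ is operator anti-monotonicity of inversion. Here it is elementary, since the bound is scalar and can be checked in the eigenbasis of $G_{y'}$.

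\textbf{Main obstacle.} The only delicate point is interpreting \eqref{eq:composition} as an identity on all operators, not merely on states. This is what lets it survive tensoring with $E_1$. It follows from the cancellation being an operator identity inside the conjugation. The lower bound from \cref{lem:fallback} is what guarantees invertibility.
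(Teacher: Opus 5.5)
Your proposal is correct and follows the paper's proof essentially step for step. Both arguments obtain the instrument property from $\sum_{y'}G_{y'}^y=I$, trace preservation from $G^{-1/2}\bigl(\tfrac1N\sum_c M_{y'}^{y,c}\bigr)G^{-1/2}=I$, the composition identity by cancelling adjacent inverse square roots on arbitrary matrices, and the bound $Q_c^{y,y'}\preceq\tfrac1N G^{-1}\preceq\kappa I$ from $M_{y'}^{y,c}\preceq I$ together with the smoothing lower bound $G_{y'}^y\succeq\zeta2^{-d}I$.
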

\begin{proof}
Conjugation, positive scalar multiplication, and composition preserve
complete positivity. For each fixed $y$, the CP maps have probability
operators $G_{y'}^y$, which sum to $I$ over $y'$ and hence form a positive
operator-valued measure (POVM). For the continuation indexed by $(y,y')$, abbreviate
$G=G_{y'}^y$. Its total conjugate map is
\[
 \sum_{c,c'}(\mathcal R^{y,y'}_{c,c'})^*(I)
 =G^{-1/2}\left(\frac1N\sum_cM_{y'}^{y,c}\right)G^{-1/2}
 =G^{-1/2}GG^{-1/2}=I.
\]
Thus the continuation is trace-preserving on \emph{every} input state,
not merely on states produced by the preceding coarse map. Substituting
$\sqrt G\theta\sqrt G$ cancels adjacent inverse square roots and proves
\eqref{eq:composition}.

The individual POVM operator is
\[
 Q_c^{y,y'}=\frac{1}{N}G^{-1/2}M_{y'}^{y,c}G^{-1/2}
 \preceq \frac{1}{N}G^{-1}.
\]
Since every eigenvalue of $G$ is at least $\zeta2^{-d}$,
$G^{-1}\preceq2^d\zeta^{-1}I$, proving \eqref{eq:Q}. The sum assertion
follows from the trace-preserving calculation.
\end{proof}

The CP map is a chosen square-root realization of the POVM corresponding to the event
$Y'=y'$; its output state in $E_2$ need not equal the original attack's
state in $E'_2$ conditioned on that event. The continuation factorization corrects that difference.
The rearrangement reproduces the \emph{uniform mixture} over original
ciphertexts, not the original attack on an arbitrary fixed preexisting
$c$. In a one-dimensional auxiliary register, the same identity is Bayes' rule:
\[
 P^{\rm hyb}_{C\mid Y,Y'}(c\mid y,y')=
 \frac{P^{\rm hyb}_{Y'\mid Y,C}(y'\mid y,c)}
      {N P^{\rm hyb}_{Y'\mid Y}(y'\mid y)}.
\]
Here the smoothing ensures a positive denominator. The quantum formula
preserves the output state in $E'_2$ as well as this scalar probability.

Equation~\eqref{eq:composition} holds on all matrices, so tensoring it with
an identity map on any reference preserves it. Equivalently, expand a
bipartite matrix as $\sum_{i,j}\ket i\bra j\otimes T_{ij}$ and apply
the identity to each $T_{ij}$. This proves reconstruction of entangled
correlations, which equality of outcome probabilities alone would not prove.
The alphabet of $Y'$ contributes the factor $2^d$ in \eqref{eq:Q},
while averaging over the ciphertext contributes the factor $2^{-s}$.

\subsection{The actual intermediate state and its ideal key}
Let $\{\mathcal A^x_{x'}\}_{x'}$ describe the left outcome CP maps.
After applying them to the independent uniform source $X$ and shared state
$\psi$, and discarding the left private output, the joint state is
\[
 \rho_{XX'E_2}=2^{-n}\sum_{x,x'}\ket{x,x'}\bra{x,x'}\otimes\omega_{E_2}^{x,x'},
 \quad
 \omega_{E_2}^{x,x'}=\Tr_{E'_1}
                  [(\mathcal A^x_{x'}\otimes\Id_{E_2})(\psi)].
\]
The blocks include the left-outcome probabilities and
$\sum_{x'}\omega_{E_2}^{x,x'}=\psi_{E_2}$ for every $x$. Sample $Y\leftarrow U_d$ independently, append it, and
apply the complete coarse procedure. Compute $Z,S,\overline Z'$ as
classical functions. Writing $z=z(x,y)=\nmext(x,y)$,
$\sflag(x,y,x',y')=\ind[(x',y')=(x,y)]$, and
\[
 \lambda(x,y,x',y')=\bigl(y,y',\sflag(x,y,x',y'),\overline z'(x,y,x',y')\bigr)
\]
gives the joint state
\begin{equation}
 \mu_{Z\Lambda XX' E_2}
 =\sum_{x,y,x',y'}2^{-n-d}\ket{z,\lambda, x, x'}\bra{z,\lambda, x, x'}
       \otimes\mathcal C_{y'}^y(\omega_{E_2}^{x,x'}).
 \label{eq:mu}
\end{equation}
It is normalized and does not depend on a requested message. The sum over
coarse CP maps preserves the original source distribution. For the BBJ comparison we use the marginal
\[
 \mu:=\mu_{Z\Lambda E_2}
     =\Tr_{XX'}\mu_{Z\Lambda XX'E_2}.
\]
Thus the displayed copies of $X,X'$ are discarded; they are \emph{not}
extra registers disclosed in the BBJ security guarantee.

For every summand, continuing the state in its right register produces
\[
 2^{-n-d}\mathcal R^{y,y'}_{c,c'}\bigl(\mathcal C_{y'}^y(\omega_{E_2}^{x,x'})\bigr)
 =2^{-n-d}N^{-1}\mathcal V^{y,c}_{y',c'}(\omega_{E_2}^{x,x'}),
\]
exactly the corresponding hybrid block. All summands with the same
$z,\lambda$ have the same continuation control $y,y'$, so this remains
true after grouping and discarding $X,X'$. The flag and masked output can
be computed before continuation because they depend only on
$x,y,x',y'$. A message test depends on the later $c$ and is not included in
these coarse CP maps.

The local instruments $\{\mathcal A^x_{x'}\}_{x'}$ and $\{\mathcal C^y_{y'}\}_{y'}$ define a valid attack, so we can invoke the BBJ security of $\nmext$. Apply \eqref{eq:inner} once and use \cref{lem:affine} to obtain
\begin{equation}
 \sigma_{Z\Lambda E_2}=U_{\cK}\otimes\mu_{\Lambda E_2},\qquad
 \Delta(\mu,\sigma)\le\alpha:=\eta+N^{-1}.
 \label{eq:ideal}
\end{equation}
This product comparison state need not be a
consistent encoding; its independence permits ideal-key averaging. The message test has not yet been applied: it needs only the retained $Z$ and the $C$ produced by continuation factorization.

Let $\mathcal R$ denote continuation with its classical outcomes retained,
controlled by $y,y'$ inside $\lambda$. Define
\begin{equation}
 \nu=\mathcal R(\mu_{\Lambda E_2}).
 \label{eq:nu}
\end{equation}
It is exactly the smoothed uniform-ciphertext register $(\Lambda,C,C',E'_2)$.
In particular, it is normalized and independent of the requested message.
It can be sampled by a quantum procedure that runs the actual smoothed
hybrid, without implementing inverse square roots.

\section{Mean-one scaling and the permutation second moment}
\label{sec:weighting}
The message-selection operation will be a CP map but not trace-preserving on
arbitrary inputs: it selects a classical event and multiplies by $q$.
Ordinary channel data processing cannot be used. This section proves the comparison that takes its place, then supplies its ideal-state deficit.

\subsection{A positive scaling theorem}
\begin{lemma}[Factorization through the output-scale operator]\label[lemma]{lem:effect-factor}
For any finite-dimensional CP map $\Phi$ with $H=\Phi^*(I)$, there is one
channel $\Psi$ such that, on every input matrix $A$,
\begin{equation}
 \Phi(A)=\Psi(\sqrt H A\sqrt H).
 \label{eq:effect-factor}
\end{equation}
\end{lemma}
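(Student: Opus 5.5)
Take a Kraus representation $\Phi(A)=\sum_j L_jAL_j^\dagger$, so $H=\sum_jL_j^\dagger L_j$. The aim is to write each $L_j=K_j\sqrt H$, where the $K_j$ form the Kraus operators of a trace-nonincreasing map. That map is then completed to a channel by adding one Kraus operator supported on the kernel of $H$. This is the standard polar-type factorization. The only care needed is that $H$ may be singular, since we have made no smoothing assumption here.

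First, let $P$ be the projector onto $\operatorname{supp}H$ and let $H^{-1/2}$ denote the pseudo-inverse on that support. For every $j$ we have $L_j^\dagger L_j\preceq H$, so $\ker H\subseteq\ker L_j$. Hence $L_jP=L_j$, and therefore $L_j=L_jH^{-1/2}\sqrt H$. Define $K_j=L_jH^{-1/2}$. Then
\[
 \sum_jK_j^\dagger K_j=H^{-1/2}HH^{-1/2}=P .
\]
To make this trace-preserving, add $K_0=\ket{0}\bra{0}\otimes(I-P)$ into an orthogonal direct summand, or more simply $K_0=\ket{v}\bra{\cdot}$-style operators mapping $\ker H$ into any fixed output state. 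Concretely, take $\Psi_0(B)=\Tr[(I-P)B]\,\tau$ for a fixed normalized output state $\tau$. This is CP, with adjoint $\Psi_0^*(I)=I-P$. Setting $\Psi(B)=\sum_jK_jBK_j^\dagger+\Psi_0(B)$ gives $\Psi^*(I)=P+(I-P)=I$, so $\Psi$ is a channel.

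Finally, verify \eqref{eq:effect-factor} on an arbitrary matrix $A$, not only on positive inputs. We have $\sum_jK_j\sqrt HA\sqrt HK_j^\dagger=\sum_jL_jAL_j^\dagger=\Phi(A)$. The extra term vanishes because $(I-P)\sqrt H=0$, so $\Tr[(I-P)\sqrt HA\sqrt H]=0$. Both sides are linear in $A$, so the identity holds on all matrices. It therefore also holds after tensoring with an identity on a reference register, which is the form used later.

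The main point needing care is the kernel inclusion $\ker H\subseteq\ker L_j$. It follows because $\langle v,Hv\rangle=\sum_j\|L_jv\|^2$ vanishes only if every $L_jv=0$. This inclusion is exactly what justifies $L_jP=L_j$. Everything else is algebra. Independence of the Kraus representation is not needed, since the lemma asserts only existence of one channel $\Psi$.
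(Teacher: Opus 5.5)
Your proof is correct and follows essentially the same route as the paper. The paper also derives $L_j=L_jP$ from $\langle e,He\rangle=\sum_j\|L_je\|^2$, sets $L_jH^{-1/2}$ with $\sum_j(L_jH^{-1/2})^\dagger(L_jH^{-1/2})=P$, and completes to a channel with Kraus operators $\ket0\bra{e_i}$ over an orthonormal basis $\{e_i\}$ of $\ker H$. That completion is exactly your measure-and-prepare term $\Psi_0(B)=\Tr[(I-P)B]\,\tau$ with $\tau=\ket0\bra0$; only your abandoned first suggestion for $K_0$ is garbled, and the concrete $\Psi_0$ you settle on is fine.
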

\begin{proof}
Let $L_j$ be Kraus operators, so $H=\sum_j L_j^\dagger L_j$. Let $P$
project onto the support of $H$. If $e\in\ker H$, then
$0=\langle e,He\rangle=\sum_j\|L_je\|^2$, so $L_j=L_jP$.
Define $H^{-1/2}$ to be the inverse on its support and zero on its kernel,
and put $R_j=L_jH^{-1/2}$. Then
$\sum_jR_j^\dagger R_j=P$. Complete these to Kraus operators of a channel
by choosing an orthonormal basis $\{e_i\}$ of $\ker H$, a unit output
vector $\ket0$, and adding $S_i=\ket0\bra{e_i}$. Their conjugate maps sum to
$I-P$. Since $R_j\sqrt H=L_j$ and $S_i\sqrt H=0$, the resulting channel
satisfies \eqref{eq:effect-factor}. This also proves the identity for inputs
with off-diagonal blocks connecting $\supp(H)$ and $\ker(H)$ and for singular $H$.
\end{proof}

\begin{lemma}[Disturbance of normalized positive scaling]\label[lemma]{lem:weight-one}
Let $\theta$ be a state, $H\succeq0$, and $\Tr H\theta=1$.
Then $\theta_H=\sqrt H\theta\sqrt H$ is a state and, with
$d_\theta=\Tr\theta(I-H)_+$,
\begin{equation}
 \Delta(\theta_H,\theta)\le\sqrt{2d_\theta}.
 \label{eq:weight-one}
\end{equation}
\end{lemma}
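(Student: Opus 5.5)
Purify $\theta$ and turn the trace distance into a fidelity bound, using the pure-state formula and the reduction monotonicity from \cref{lem:tools}. First, $\theta_H$ is a state: it is positive as a conjugation of a positive matrix, and $\Tr\theta_H=\Tr H\theta=1$ by cyclicity.

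\textbf{Purification.} Let $\ket\psi$ be a purification of $\theta$ on system $\otimes$ reference, so $\theta=\Tr_{\rm ref}\ket\psi\bra\psi$. Put $\ket{\psi_H}=(\sqrt H\otimes I)\ket\psi$. Its reduction on the system is $\sqrt H\theta\sqrt H=\theta_H$, and its norm squared is $\Tr H\theta=1$, so it is a unit vector. By \cref{lem:tools},
\[
 \Delta(\theta_H,\theta)\le\sqrt{1-|\langle\psi,\psi_H\rangle|^2},
 \qquad
 \langle\psi,\psi_H\rangle=\Tr\theta\sqrt H=:t,
\]
where $t$ is real and nonnegative.

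\textbf{Reduction to a scalar inequality.} It suffices to show $t\ge 1-d_\theta$. If $d_\theta\ge 1$ there is nothing to prove, since the trace distance is at most $1\le\sqrt{2d_\theta}$. If $d_\theta<1$, then $t\ge1-d_\theta\ge0$ gives
\[
 1-t^2\le 1-(1-d_\theta)^2=2d_\theta-d_\theta^2\le 2d_\theta .
\]

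\textbf{Key step: the lower bound on $t$.} This is the only nonroutine point. Use the scalar inequality $\sqrt h\ge \min(h,1)=h-(h-1)_+$, valid for $h\ge0$: for $h\le1$ we have $\sqrt h\ge h$, and for $h\ge1$ we have $\sqrt h\ge1$. Applied through the spectral decomposition of $H$, this gives $\sqrt H\succeq H-(H-I)_+$. Since $H-I=(H-I)_+-(I-H)_+$, we get $H-(H-I)_+=I-(I-H)_+$. Hence $\sqrt H\succeq I-(I-H)_+$.

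Now apply the order rule stated after \cref{lem:tools}, which says $\Tr\theta B\ge\Tr\theta A$ whenever $B\succeq A$. It yields $t=\Tr\theta\sqrt H\ge 1-\Tr\theta(I-H)_+=1-d_\theta$. This completes the proof.

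The main obstacle is producing a bound that involves only the deficit $(I-H)_+$ and never $\|H\|_\infty$, which can be as large as $q$. The operator inequality $\sqrt H\succeq I-(I-H)_+$ achieves this, because the normalization $\Tr H\theta=1$ is absorbed entirely into the unit norm of $\psi_H$.
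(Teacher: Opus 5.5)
Your proof is correct and takes the paper's approach: purify $\theta$, note that $(\sqrt H\otimes I)\ket\psi$ is a unit vector with overlap $\Tr\theta\sqrt H$, derive $\sqrt H\succeq I-(I-H)_+$ from the scalar bound $\sqrt h\ge\min(h,1)$, and finish with the pure-state distance formula and monotonicity under partial trace. The only cosmetic difference is that you handle $d_\theta\ge1$ by a separate case, whereas the paper observes directly that $0\le d_\theta\le1$ because $0\preceq(I-H)_+\preceq I$.
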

\begin{proof}
Cyclicity gives $\Tr\theta_H=1$. Purify $\theta$ to a unit vector
$\Theta$. The vector $\Theta_H=(\sqrt H\otimes I)\Theta$ is also unit vector and has overlap $\langle\Theta,\Theta_H\rangle=\Tr\theta\sqrt H$.
For a scalar $h\ge0$, $\sqrt h\ge1-(1-h)_+$: when $h\le1$ this is
$\sqrt h\ge h$, and when $h\ge1$ it is $\sqrt h\ge1$.
Applying this in an eigenbasis of $H$ gives
$\sqrt H\succeq I-(I-H)_+$. Taking the trace against $\theta$ yields
$\Tr\theta\sqrt H\ge1-d_\theta$. Also $0\le d_\theta\le1$ because
$0\preceq(I-H)_+\preceq I$, and the overlap of two unit vectors is at
most one. Thus the overlap is real and nonnegative and lies in $[1-d_\theta,1]$.
The pure-state distance formula and partial-trace data processing imply
\[
 \Delta(\theta_H,\theta)
 \le\sqrt{1-(\Tr\theta\sqrt H)^2}
 \le\sqrt{1-(1-d_\theta)^2}\le\sqrt{2d_\theta}.
\]
\end{proof}

\begin{theorem}[Two-state mean-one comparison]\label[theorem]{thm:weighting}
Let $\Phi$ be a CP map and $H=\Phi^*(I)$. Suppose states $\rho,\sigma$ satisfy
$\Delta(\rho,\sigma)\le\alpha$ and
$\Tr H\rho=\Tr H\sigma=1$. Then
\begin{equation}
 \Delta(\Phi(\rho),\Phi(\sigma))
 \le\sqrt{2(d_\sigma+\alpha)}+\alpha+\sqrt{2d_\sigma},
 \quad d_\sigma=\Tr\sigma(I-H)_+.
 \label{eq:weight-two}
\end{equation}
\end{theorem}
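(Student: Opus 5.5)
By \cref{lem:effect-factor}, write $\Phi(A)=\Psi(\sqrt H A\sqrt H)$ with a single channel $\Psi$. Since $\Psi$ is CPTP and both scaled inputs are states (because $\Tr H\rho=\Tr H\sigma=1$), data processing gives
\[
\Delta(\Phi(\rho),\Phi(\sigma))\le\Delta(\rho_H,\sigma_H),\qquad \rho_H=\sqrt H\rho\sqrt H,\ \sigma_H=\sqrt H\sigma\sqrt H .
\]
It therefore suffices to bound $\Delta(\rho_H,\sigma_H)$. We never apply data processing to the unnormalized map $A\mapsto\sqrt H A\sqrt H$ itself, whose norm can be as large as $\|H\|_\infty$.

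Next, use the triangle inequality along the chain $\rho_H,\rho,\sigma,\sigma_H$:
\[
\Delta(\rho_H,\sigma_H)\le\Delta(\rho_H,\rho)+\Delta(\rho,\sigma)+\Delta(\sigma,\sigma_H).
\]
The middle term is at most $\alpha$ by hypothesis. The last term is at most $\sqrt{2d_\sigma}$ by \cref{lem:weight-one} applied to $\sigma$, which is allowed since $\Tr H\sigma=1$. For the first term, \cref{lem:weight-one} applied to $\rho$ gives $\sqrt{2d_\rho}$ with $d_\rho=\Tr\rho(I-H)_+$.

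The only remaining step, and the one that matters, is to show $d_\rho\le d_\sigma+\alpha$ without paying any factor of $\|H\|_\infty$. Put $B=(I-H)_+$. Since $0\preceq B\preceq I$, the variational formula in \cref{lem:tools} gives $|\Tr B(\rho-\sigma)|\le\Delta(\rho,\sigma)\le\alpha$. Hence $d_\rho\le d_\sigma+\alpha$, and monotonicity of the square root completes the bound $\sqrt{2(d_\sigma+\alpha)}+\alpha+\sqrt{2d_\sigma}$.

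The main obstacle is this transfer of the deficit. A direct comparison of $\Tr H\rho$ with $\Tr H\sigma$, or of $\rho_H$ with $\sigma_H$, would cost $\|H\|_\infty\alpha$, which can be as large as $q\alpha$. The argument avoids this by moving only the bounded operator $(I-H)_+$ across the distance $\alpha$, and by handling the possibly large part of $H$ entirely inside \cref{lem:weight-one}, where the normalization $\Tr H\theta=1$ absorbs it.
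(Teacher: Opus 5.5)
Your proposal is correct and follows the paper's proof essentially step for step. You factor $\Phi$ through $\sqrt H(\cdot)\sqrt H$ and the single channel $\Psi$ of \cref{lem:effect-factor}, go through the unscaled states with \cref{lem:weight-one} at both ends, and transfer the deficit with the bounded test $(I-H)_+$ to get $d_\rho\le d_\sigma+\alpha$ without any $\|H\|_\infty$ factor.
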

\begin{proof}
The variational bound for the bounded test $(I-H)_+$ gives
$d_\rho\le d_\sigma+\alpha$. Both $\rho_H,\sigma_H$ are normalized.
Compare through the unscaled states and use \cref{lem:weight-one}:
\[
 \Delta(\rho_H,\sigma_H)
 \le\Delta(\rho_H,\rho)+\Delta(\rho,\sigma)+\Delta(\sigma,\sigma_H)
 \le\sqrt{2(d_\sigma+\alpha)}+\alpha+\sqrt{2d_\sigma}.
\]
Use the same channel $\Psi$ from \cref{lem:effect-factor} at both endpoints
and apply data processing to obtain \eqref{eq:weight-two}.
\end{proof}
The bounded deficit $(I-H)_+$, rather than $\|H\|_\infty$, controls the
transfer of input error. Both output traces must equal one. For classical
diagonal states the sharper identity $\Delta(Pw,P)=\E_P(1-w)_+$ is
available; Appendix~\ref{app:classical} uses it in place of the quantum
square-root disturbance bound.

\subsection{A noncommuting random-fiber calculation}
\begin{lemma}[Permutation-fiber moments]\label[lemma]{lem:moments}
Let positive matrices $\{Q_c:c\in\F\}$ obey
$\sum_cQ_c=I$ and $Q_c\preceq\kappa I$. For a fixed message $m$ and
$Z\leftarrow U_{\cK}$, define
\[
 H_Z=q\sum_{c:f_Z(c)=m}Q_c,\qquad \gamma=\frac{N-q}{N-1}.
\]
Then
\begin{align}
 \E H_Z&=I,\label{eq:first-moment}\\
 \E H_Z^2&=\gamma I+(q-\gamma)\sum_cQ_c^2,\label{eq:second-exact}\\
 \E(H_Z-I)^2&\preceq q\kappa I.\label{eq:second-bound}
\end{align}
For every normalized $\theta$ independent of the ideal key,
\begin{equation}
 \E_Z\Tr\theta(I-H_Z)_+\le\frac12\sqrt{q\kappa}.
 \label{eq:deficit-bound}
\end{equation}
\end{lemma}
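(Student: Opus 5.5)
Write $\chi_c=\ind[f_Z(c)=m]$, so $H_Z=q\sum_c\chi_cQ_c$ with random scalar indicators and fixed (possibly noncommuting) matrices $Q_c$. All randomness sits in the scalars, so every moment computation is linear in the ordered products of the $Q_c$. We never need the $Q_c$ to commute.

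\textbf{First and second moments.} For the first moment, \cref{lem:fibers}, equation \eqref{eq:onepoint}, gives $\E\chi_c=1/q$. Hence $\E H_Z=\sum_cQ_c=I$, which is \eqref{eq:first-moment}.

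For the second moment, expand
\[
H_Z^2=q^2\sum_{c,c'}\chi_c\chi_{c'}Q_cQ_{c'},
\]
keeping the ordered products. The diagonal terms contribute $q^2\cdot q^{-1}\sum_cQ_c^2=q\sum_cQ_c^2$. For $c\ne c'$, \eqref{eq:twomembership} gives coefficient $q^2T(T-1)/(N(N-1))$. Using $N=qT$, this equals $q(T-1)/(N-1)=(N-q)/(N-1)=\gamma$. The off-diagonal sum is $\sum_{c\ne c'}Q_cQ_{c'}=(\sum_cQ_c)^2-\sum_cQ_c^2=I-\sum_cQ_c^2$. Adding the two parts gives \eqref{eq:second-exact}.

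Then $\E(H_Z-I)^2=\E H_Z^2-I=(\gamma-1)I+(q-\gamma)\sum_cQ_c^2$. Here $\gamma\le1$, and $0\le q-\gamma\le q$.

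\textbf{Bounding $\sum_cQ_c^2$.} Each $Q_c$ is positive with spectrum in $[0,\kappa]$, so functional calculus on the single operator $Q_c$ gives $Q_c^2\preceq\kappa Q_c$. Summing gives $\sum_cQ_c^2\preceq\kappa I$. Therefore $\E(H_Z-I)^2\preceq q\kappa I$, which is \eqref{eq:second-bound}.

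\textbf{Deficit bound.} Let $A=I-H_Z$, a Hermitian operator. Spectrally, $(A)_+=\tfrac12(|A|+A)$. Taking the trace against $\theta$ and averaging over $Z$, the linear term contributes $\tfrac12\Tr\theta\,\E A=0$ by the first moment, since $\theta$ does not depend on $Z$. This leaves
\[
\E\Tr\theta(I-H_Z)_+=\tfrac12\E\Tr\theta|A|.
\]
By \cref{lem:tools}, $\Tr\theta|A|\le\sqrt{\Tr\theta A^2}$ for each fixed $Z$. Jensen (concavity of the square root) then gives $\E\sqrt{\Tr\theta A^2}\le\sqrt{\Tr\theta\,\E A^2}\le\sqrt{q\kappa}$, using \eqref{eq:second-bound} and the order-trace rule stated after \cref{lem:tools}. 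This is \eqref{eq:deficit-bound}.

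The main obstacle is the off-diagonal bookkeeping in the second moment. One must keep ordered products, and then the identity $\sum_{c\ne c'}Q_cQ_{c'}=I-\sum_cQ_c^2$ holds without any commutation assumption. After that, the only delicate point is that $(I-H)_+$ is a spectral, not entrywise, positive part, which is why the $|A|$ decomposition is used.
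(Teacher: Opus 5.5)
Your proposal is correct and follows essentially the same route as the paper's proof. Both expand $H_Z^2$ over ordered pairs using the one- and two-point fiber probabilities, use $Q_c^2\preceq\kappa Q_c$ to bound $\sum_cQ_c^2$, and write $A_+=\tfrac12(|A|+A)$ so the linear term cancels by the first moment, then apply the state-scaled inequality followed by Jensen (the paper calls this step scalar Cauchy--Schwarz).
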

\begin{proof}
Let $I_c(z)=\ind[f_z(c)=m]$. The one-point probability in
\cref{lem:fibers} gives $\E H_Z=q\sum_c(1/q)Q_c=I$.
Expand the square with \emph{ordered} pairs:
\[
 H_z^2=q^2\sum_{c,c'}I_c(z)I_{c'}(z)Q_cQ_{c'}.
\]
On the diagonal $\E I_c^2=1/q$, giving coefficient $q$.
Off the diagonal \eqref{eq:twomembership} gives coefficient
$q^2T(T-1)/(N(N-1))=(N-q)/(N-1)=\gamma$.
Therefore
\[
 \E H_Z^2=q\sum_cQ_c^2+\gamma\sum_{c\ne c'}Q_cQ_{c'}.
\]
Distributivity implies
$\sum_{c,c'}Q_cQ_{c'}=(\sum_cQ_c)(\sum_{c'}Q_{c'})=I$,
without interchanging any products. Subtract the diagonal to prove
\eqref{eq:second-exact}. 

Since $N=qT\ge q\ge2$, $0\le\gamma\le1$ and
$0\le q-\gamma\le q$. The eigenvalues of $Q_c$ lie in $[0,\kappa]$,
so $Q_c^2\preceq\kappa Q_c$ by spectral calculus. Consequently
\[
 \E(H_Z-I)^2=(\gamma-1)I+(q-\gamma)\sum_cQ_c^2
 \preceq(\gamma-1)I+(q-\gamma)\kappa I\preceq q\kappa I.
\]
All scalar coefficients used to multiply positive-order inequalities are
nonnegative. This proves \eqref{eq:second-bound}.

For any Hermitian $A$, $A_+=(|A|+A)/2$. The mean of $I-H_Z$ is zero,
so the signed term cancels after expectation:
\[
 \E\Tr\theta(I-H_Z)_+=\tfrac12\E\Tr\theta|I-H_Z|.
\]
By the state-scaled inequality of \cref{lem:tools} and then scalar
Cauchy--Schwarz,
\[
 \tfrac12\E\Tr\theta|I-H_Z|
 \le\tfrac12\E\sqrt{\Tr\theta(I-H_Z)^2}
 \le\tfrac12\sqrt{\E\Tr\theta(I-H_Z)^2}
 \le\tfrac12\sqrt{q\kappa}.
\]
The last step uses $\Tr\theta=1$.
\end{proof}
The controlled version can be written on the full side state without
conditioning on $\Lambda$. For $\lambda=(y,y',\sflag,\overline z')$ with $\sflag\in\{0,1\}$, put
\[
 \widehat Q_c=\sum_\lambda\ket\lambda\bra\lambda\otimes Q_c^{y,y'},
 \qquad
 \widehat H_z=q\sum_{c:f_z(c)=m}\widehat Q_c.
\]
These matrices obey $\sum_c\widehat Q_c=I_{\Lambda E_2}$ and
$\widehat Q_c\preceq\kappa I_{\Lambda E_2}$. Thus Lemma~\ref{lem:moments}, applied
directly to $\mu_{\Lambda E_2}$, gives
$\E_Z\Tr\mu_{\Lambda E_2}(I-\widehat H_Z)_+\le\tfrac12\sqrt{q\kappa}$.
The whole side state remains intact; only $Z$ is averaged independently.

With the continuation bound \eqref{eq:Q},
\begin{equation}
 q\kappa=2^\ell\frac{2^{d-s}}{\zeta}
          =\frac{2^{d-r}}{\zeta}.
 \label{eq:surplus}
\end{equation}
This cancellation is why padding length $r>d$ is enough: the ciphertext
has $s=\ell+r$ bits, the message selection costs $\ell$, and only the second source length costs $d$.

\section{Worst-case security and the message-free simulator}
\label{sec:security}
We prove \cref{thm:main}, using the continuation and scaling lemmas
to compare each fixed-message output with one message-free simulator.

\subsection{Compute the symbol before forgetting the key}
Fix an original attack and its smoothed version $\cA_\zeta$. The states
$\mu,\sigma$ in Eq.~\eqref{eq:ideal}, the continuation, and the hybrid register
$\nu$ in Eq.~\eqref{eq:nu} are all fixed without choosing a message.
For a register value $\lambda=(y,y',\sflag,\overline z')$, define
\begin{equation}
 \Sym(z,\lambda,c,c')=
 \begin{cases}
 \same,&\sflag=1,\ c'=c,\\
 f_z(c'),&\sflag=1,\ c'\ne c,\\
 f_{\overline z'}(c'),&\sflag=0.
 \end{cases}
 \label{eq:j-symbol}
\end{equation}
On inconsistent labels, namely $\sflag=0,\overline z'=\same$ or
$\sflag=1,\overline z'\ne\same$, set $\Sym(z,\lambda,c,c')=\perp$ instead. This makes $\Sym$ a total classical function on its formal alphabet.

\begin{lemma}[Exact copy representation]\label[lemma]{lem:copy}
For a real encoding of $m$ under $\cA_\zeta$, let $J_m$ be the symbol
computed by \eqref{eq:j-symbol}. Then the actual decoded output is
$\Copy(m,J_m)$ on every execution.
\end{lemma}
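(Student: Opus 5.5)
The plan is a pointwise case analysis on a single execution of the real encoding of $m$ under $\cA_\zeta$. Fix the sampled values $x,y,c$ with $c=\pi_z(m\|R)$ and $z=\nmext(x,y)$, and the tampered outputs $x',y',c'$. By \cref{con:code}, the decoder outputs $f_{z'}(c')$ with $z'=\nmext(x',y')$. The register value $\lambda=(y,y',\sflag,\overline z')$ is computed from the true $(x,y,x',y')$. Hence the label is always consistent: $\sflag=1$ exactly when $\overline z'=\same$, and $\overline z'=z'$ otherwise. So the $\perp$ branch of \eqref{eq:j-symbol} never occurs in a real execution, and it suffices to check the three genuine cases of \eqref{eq:j-symbol}.

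\textbf{Case $\sflag=0$.} Here $\overline z'=z'$, so $J_m=f_{z'}(c')$. This is the decoder output. It is never $\same$, so $\Copy(m,J_m)=J_m$ equals the decoded output, with no further argument. A collision $z'=z$ with changed sources lies in this case and is handled identically.

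\textbf{Case $\sflag=1$.} Here $(x',y')=(x,y)$, so $z'=\nmext(x,y)=z$ and the decoder outputs $f_z(c')$.
\begin{itemize}
\item If in addition $c'\neq c$, then $J_m=f_z(c')\neq\same$, so $\Copy(m,J_m)=f_z(c')$, again the decoded output.
\item If $c'=c$, then $J_m=\same$ and $\Copy(m,J_m)=m$. The decoder outputs $f_z(c)=\pref\bigl((a^\star)^{-1}(\pi_z(m\|R)-b)\bigr)=m$. This is the same inversion computation as in \cref{lem:correctness}, and it is valid even when $a=0$ because $\nz$ makes $\pi_z$ a total permutation (\cref{lem:affine}).
\end{itemize}

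The only point needing care is the justification that in the real experiment the symbol is evaluated on the true original key $z$ and the true ciphertext $c$. In the real experiment these are available and are the same values the encoder used. This identity is deterministic per execution: it holds for every outcome of the smoothed attack, including the smoothing branch where $C'=C$ is kept. The smoothing branch merely falls under either $\sflag=1,c'=c$ or $\sflag=0$.

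No probabilistic or quantum step is needed. The retained quantum registers play no role because $\Sym$ and the decoder are classical functions of the same classical outcomes. I expect no real obstacle beyond the bookkeeping of the consistency of labels.
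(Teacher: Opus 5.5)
Your proposal is correct and follows the paper's own proof: the same three-way case split on $\sflag$ and $C'=C$, with $Z'=Z$ when $\sflag=1$ and perfect correctness giving $m$ in the unchanged-codeword case. Your extra remark that real executions never produce inconsistent labels, so the $\perp$ branch of $\Sym$ is never used, is a harmless point that the paper leaves implicit.
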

\begin{proof}
If $S=0$, the masked value is the actual tampered seed, and the last line
of \eqref{eq:j-symbol} is exactly the decoder output. If $S=1$, the
original and tampered $\nmext$ inputs coincide, so $Z'=Z$. When also $C'=C$,
correctness gives $m$, which is the copy interpretation of $\same$.
Otherwise $f_Z(C')$ is the actual decoder output and is explicitly the
second line.
\end{proof}
The second line of $\Sym$ uses the original key, so the following map computes
the symbol before discarding $Z$.

For a fixed message define $\chi_m(z,c)=\ind[f_z(c)=m]$ and let $\Pi_m$
project onto this event on the original $Z,C$. Let $\mathcal R^+$ be
continuation retaining $Z$, $\mathcal T_m(A)=q\Pi_m A\Pi_m$, and
$\mathcal J$ the CPTP map computing $\Sym$ and discarding other outputs. Then
$\Phi_m=\mathcal J\circ\mathcal T_m\circ\mathcal R^+$.
The first and last maps are CPTP; $\mathcal T_m$ is a CP map but not generally
trace-nonincreasing. For an input matrix $A$ on $Z\Lambda E_2$, define
its diagonal block on $E_2$ by
\[
 A_{z\lambda,z\lambda}:=
 (\bra{z,\lambda}\otimes I_{E_2})A
 (\ket{z,\lambda}\otimes I_{E_2}).
\]
With the classical labels dephased, the full formula for $\Phi_m$ is
\begin{equation}
 \Phi_m(A)=q\sum_{z,\lambda,c,c'}\chi_m(z,c)
  \Tr\!\left[\mathcal R^{y,y'}_{c,c'}(A_{z\lambda,z\lambda})\right]
  \ket{\Sym(z,\lambda,c,c')}\bra{\Sym(z,\lambda,c,c')}.
 \label{eq:Phi}
\end{equation}
Every step is CP, including positive scalar multiplication, so this defines
a CP map on all matrices. It need not be trace-preserving on arbitrary
inputs. Its output-scale operator $H=\Phi_m^*(I)$ is block diagonal in
$z,\lambda$, with blocks
\begin{equation}
 H_z^\lambda=q\sum_{c:f_z(c)=m}Q_c^{y,y'}.
 \label{eq:H-block}
\end{equation}
Indeed, the output trace on a positive input supported on one classical
block is
\[
 q\sum_{c,c'}\chi_m(z,c)\Tr\mathcal R^{y,y'}_{c,c'}(A_{E_2}),
\]
which equals $\Tr H_z^\lambda A_{E_2}$. Computing the key-dependent symbol
does not affect this trace. Also $0\preceq H_z^\lambda\preceq qI$,
as the selected $Q_c$ form a sub-sum of a family totaling $I$.

The selector acts only on the original $Z,C$, after continuation has recovered
$C$ while retaining $Z$. BBJ is applied before either operation. The factor
$q$ normalizes the two complete states below, not each transcript.
Figure~\ref{fig:conditioning-bbj} separates the CP selection from its
normalization; Figure~\ref{fig:comparison-bbj} shows the two comparisons.

\subsection{The two exact normalizations}
\begin{lemma}[Actual and ideal message-selected states]\label[lemma]{lem:normalized}
The map in Eq.~\eqref{eq:Phi} satisfies
\begin{equation}
 \Phi_m(\mu)=P_{J_m},\qquad
 \Tr\Phi_m(\mu)=\Tr\Phi_m(\sigma)=1.
 \label{eq:two-norms}
\end{equation}
Moreover, in the ideal selected experiment, the marginal of the refined
register before computing $\Sym$ and discarding $Z$ is exactly $\nu$ after
$Z$ is traced out, for every $m$.
\end{lemma}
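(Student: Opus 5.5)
The plan has three parts: an exact computation for the actual state $\mu$, a per-block fiber count for the ideal state $\sigma$, and a marginal check that the selected ideal refined register equals $\nu$.

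\textbf{Actual state.} I would first compute $\Phi_m(\mu)$ blockwise from \eqref{eq:mu} and \eqref{eq:Phi}. By the remark after \eqref{eq:mu}, each summand indexed by $(x,y,x',y')$ has fixed $z=\nmext(x,y)$ and $\lambda$. Applying $\mathcal R^{y,y'}_{c,c'}$ to $\mathcal C^y_{y'}(\omega^{x,x'}_{E_2})$ gives, by \eqref{eq:composition}, exactly $N^{-1}\mathcal V^{y,c}_{y',c'}(\omega^{x,x'}_{E_2})$. This holds blockwise, so it survives the grouping over $X,X'$. Multiplying by $q\chi_m(z,c)$ and using $q/N=1/T$ yields the weight
\[
 2^{-n-d}\,T^{-1}\ind[f_z(c)=m].
\]
This is the real law \eqref{eq:real-law} from \cref{lem:real-hybrid}, with the smoothed attack applied on the same shared state. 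The left CP maps are summed over $x'$ only through the retained $\omega$ blocks, and the joint left/right structure is preserved because the composition identity holds on all matrices. Hence the selected refined state is the real smoothed experiment for message $m$. Since $\mathcal J$ computes $\Sym$ from the retained registers, the output is $P_{J_m}$, which has trace one.

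\textbf{Ideal state.} For the trace of $\Phi_m(\sigma)$ I would use \eqref{eq:H-block}:
\[
 \Tr\Phi_m(\sigma)=\sum_\lambda \E_{Z\sim U_\cK}\Tr\bigl[H_Z^\lambda\,\mu_{E_2}^{\lambda}\bigr],
\]
where $\mu^\lambda_{E_2}$ denotes the (subnormalized) block of $\mu_{\Lambda E_2}$. By \eqref{eq:first-moment} of \cref{lem:moments}, $\E_Z H^\lambda_Z=I$, and the blocks sum to trace one. Equivalently, for each fixed $c$ one has $\Pr_Z[f_Z(c)=m]=1/q$ by \eqref{eq:onepoint}.

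\textbf{Marginal claim.} Before $\Sym$ is computed and $Z$ discarded, the ideal selected refined state is
\[
 \sum_{z}u_\cK(z)\,q\,\chi_m(z,c)\,\ket z\bra z\otimes\mathcal R(\mu_{\Lambda E_2})_{c\text{-block}}.
\]
This works because $\sigma$ is a product state and continuation does not touch $Z$. Tracing out $Z$ multiplies each $(\lambda,c,c')$ block by $q\cdot\Pr_Z[f_Z(c)=m]=1$. What remains is $\mathcal R(\mu_{\Lambda E_2})=\nu$ from \eqref{eq:nu}, which is independent of $m$.

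\textbf{Main obstacle.} The delicate point is the actual-state identity. It requires justifying that continuation controlled by $\lambda$ commutes with the grouping of summands sharing $(z,\lambda)$, including off-diagonal coherence in $E_2$ correlated with the discarded left registers. I would handle this by doing the computation before tracing out $X,X'$ and $E'_1$ and only then tracing out. This is legitimate because $\mathcal R$ and $\mathcal T_m$ act on classical labels together with $E_2$ alone. A secondary check is that the inconsistent-label branch of $\Sym$ carries zero weight under $\mu$, which is immediate from the definition of $\lambda$.
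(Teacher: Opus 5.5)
Your proposal is correct and follows essentially the same route as the paper: \eqref{eq:composition} turns each summand of \eqref{eq:mu} into the corresponding hybrid block, and the factor $q\chi_m$ then produces the real law \eqref{eq:real-law}; in the ideal case, $\mathcal R^+(\sigma)=U_{\cK}\otimes\nu$, and the count $q\Pr_Z[f_Z(c)=m]=1$ from \eqref{eq:onepoint} gives $\Tr_Z\widetilde\sigma_m=\nu$. The only differences are minor: you additionally obtain $\Tr\Phi_m(\sigma)=1$ from $\E_Z H_Z=I$ in \cref{lem:moments}, while the paper reads it off the marginal identity, and both rest on the same one-point count; and the grouping ``obstacle'' you flag is already resolved by linearity, because continuation on each $(z,\lambda)$ block depends only on $(y,y')$ and $E'_1$ is already traced out in $\omega^{x,x'}$.
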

\begin{proof}
By Eq.~\eqref{eq:composition}, continuing each summand of Eq.~\eqref{eq:mu}
gives the actual uniform-ciphertext hybrid with its original key retained.
The message test and multiplier $q$ therefore impose exactly its event
$M=m$, by Lemma~\ref{lem:real-hybrid}. For a fixed source pair, its input
coefficient becomes
$q\,2^{-n-d}N^{-1}\chi_m(z(x,y),c)=2^{-n-d}T^{-1}\chi_m(z(x,y),c)$,
which is exactly \eqref{eq:real-law}. There are $T$ surviving ciphertexts
per source pair. The attack CP maps sum to trace one on each input, so
the total selected trace is one. Computing $\Sym$ gives $J_m$ exactly.

For the ideal input, write the full continuation output as
\[
 \nu_{\Lambda CC'E'_2}=
 \sum_{\lambda,c,c'}\ket{\lambda,c,c'}\bra{\lambda,c,c'}
       \otimes\tau_{E'_2}^{\lambda,c,c'},\qquad
 \tau_{E'_2}^{\lambda,c,c'}=
       \mathcal R^{y,y'}_{c,c'}(\vartheta_{E_2}^\lambda),
\]
where $\mu_{\Lambda E_2}=\sum_\lambda\ket\lambda\bra\lambda
\otimes\vartheta_{E_2}^\lambda$. All blocks retain their probability scales.
Before message selection the joint state is $U_{\cK}\otimes\nu$.
Let $\Pi_m$ be the classical projector onto $f_Z(C)=m$; the selected state is
\[
 \widetilde\sigma_m=q\Pi_m(U_{\cK}\otimes\nu)\Pi_m.
\]
Tracing out $Z$, each $(\lambda,c,c')$ block is multiplied by
$q\sum_z u_{\cK}(z)\chi_m(z,c)=1$, by \eqref{eq:onepoint}. Consequently
$\Tr_Z\widetilde\sigma_m=\nu$ as a full state, including $E'_2$.
Its trace is one, and computing $\Sym$ preserves this trace. This proves the
ideal normalization without normalizing any individual transcript block.
\end{proof}
These are global trace identities, not separate normalizations of each
$(z,\lambda)$ block. Set $\Omega_m:=\Phi_m(\sigma)$; its remaining
message dependence is the finite-permutation correction analyzed below.

\begin{lemma}[Fixed-message output comparison]\label[lemma]{lem:comparison}
For every $m$, with the same states $\mu,\sigma$,
\begin{equation}
 \Delta\bigl(P_{J_m},\Omega_m\bigr)\le\beta.
 \label{eq:comparison}
\end{equation}
\end{lemma}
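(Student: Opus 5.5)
The plan is to apply \cref{thm:weighting} to $\Phi=\Phi_m$ with $\rho=\mu$ and the ideal state $\sigma$ from \eqref{eq:ideal}. By \cref{lem:normalized}, $\Phi_m(\mu)=P_{J_m}$ and $\Phi_m(\sigma)=\Omega_m$, and both have trace one. The input distance $\Delta(\mu,\sigma)\le\alpha$ holds by \eqref{eq:ideal}. The hypotheses of \cref{thm:weighting} are therefore met, and it yields
\[
 \Delta(P_{J_m},\Omega_m)\le\sqrt{2(d_\sigma+\alpha)}+\alpha+\sqrt{2d_\sigma},\qquad d_\sigma=\Tr\sigma(I-H)_+,
\]
with $H=\Phi_m^*(I)$ block diagonal as in \eqref{eq:H-block}. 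Since $\sigma$ is classical in $Z\Lambda$, the deficit is computed block by block. Because $\sigma=U_{\cK}\otimes\mu_{\Lambda E_2}$, we can write
\[
 d_\sigma=\E_{Z\leftarrow U_{\cK}}\Tr\mu_{\Lambda E_2}(I-\widehat H_Z)_+ ,
\]
where $\widehat H_z$ is the controlled operator defined after \cref{lem:moments}.

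The second step bounds this deficit. The operators $\widehat Q_c$ are positive, sum to the identity, and satisfy $\widehat Q_c\preceq\kappa I$ by \eqref{eq:Q}. The state $\mu_{\Lambda E_2}$ is independent of the ideal key, so the controlled form of \cref{lem:moments} gives $d_\sigma\le\tfrac12\sqrt{q\kappa}$. By \eqref{eq:surplus} this equals $\tfrac12(2^{d-r}/\zeta)^{1/2}$. One point needs checking: $\mathcal J$ and the $q\Pi_m$ selection act only on the dephased classical blocks, so the block formula $H_z^\lambda$ really is $\Phi_m^*(I)$ on the full input space. Off-diagonal blocks in $Z\Lambda$ are killed by dephasing, which is harmless since both inputs are block diagonal.

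The last step is arithmetic. Put $t=\tfrac12\sqrt{q\kappa}$. Subadditivity of the square root gives $\sqrt{2(t+\alpha)}\le\sqrt{2t}+\sqrt{2\alpha}$, so the bound is at most
\[
 \alpha+\sqrt{2\alpha}+2\sqrt{2t}=\alpha+\sqrt{2\alpha}+2(q\kappa)^{1/4},
\]
using $\sqrt{2t}=\sqrt{\sqrt{q\kappa}}=(q\kappa)^{1/4}$. Substituting $q\kappa=2^{d-r}/\zeta$ recovers exactly $\beta$ from \eqref{eq:beta}.

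The main obstacle is justifying the normalization hypotheses and the identification of $d_\sigma$ with the ideal-key average over the full side state $\mu_{\Lambda E_2}$ without conditioning on $\Lambda$. \Cref{lem:normalized} and the controlled version of \cref{lem:moments} already cover this, so the argument reduces to assembling these results carefully.
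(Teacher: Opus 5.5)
Your proposal is correct and follows the paper's proof essentially step for step. You apply \cref{thm:weighting} to $\Phi_m$ with the two normalizations of \cref{lem:normalized}, rewrite $d_\sigma$ blockwise as $\E_{Z\leftarrow U_{\cK}}\Tr\mu_{\Lambda E_2}(I-\widehat H_Z)_+$, bound it via the controlled form of \cref{lem:moments} together with \eqref{eq:surplus}, and finish with $\sqrt{u+v}\le\sqrt u+\sqrt v$. One minor attribution point: $\Phi_m(\sigma)=\Omega_m$ holds by definition, while \cref{lem:normalized} supplies $\Phi_m(\mu)=P_{J_m}$ and the trace-one conditions.
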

\begin{proof}
The input distance is at most $\alpha$ by \eqref{eq:ideal}.
Using the full-side operators $\widehat H_z$ from
Section~\ref{sec:weighting}, the block form \eqref{eq:H-block} is
$H=\sum_z\ket z\bra z\otimes\widehat H_z$. Since
$\sigma=U_{\cK}\otimes\mu_{\Lambda E_2}$, blockwise spectral calculus gives
\[
 d_\sigma=\Tr\sigma(I-H)_+
     =\E_{Z\leftarrow U_{\cK}}
        \Tr\mu_{\Lambda E_2}(I-\widehat H_Z)_+.
\]
Apply the controlled moment bound directly to the normalized joint state
$\mu_{\Lambda E_2}$. Equations~\eqref{eq:Q}, \eqref{eq:deficit-bound}, and
\eqref{eq:surplus} give
\begin{equation}
 d_\sigma\le\frac12\sqrt{\frac{2^{d-r}}{\zeta}}.
 \label{eq:actual-deficit}
\end{equation}
Apply \cref{thm:weighting} using both normalizations in
\cref{lem:normalized}. Since $\sqrt{u+v}\le\sqrt u+\sqrt v$ for
nonnegative $u,v$, the output distance is at most
\[
 \sqrt{2(d_\sigma+\alpha)}+\alpha+\sqrt{2d_\sigma}
 \le\alpha+\sqrt{2\alpha}+2\sqrt{2d_\sigma}
 \le\beta.
\]
The identity $\Phi_m(\mu)=P_{J_m}$ finishes the claim.
\end{proof}
The same intermediate states and bound apply to every $m$; no union bound
or $q$-fold amplification of the $\nmext$ error is used.

\subsection{One simulator for all messages}
Define $D_{\cA,\zeta}$ by the following procedure:
\begin{lstlisting}[mathescape=true]
Sample $X,Y,C$ independently and uniformly.
Run the original left attack and the $\zeta$-smoothed right attack.
Let $S=1$ exactly when $(X',Y')=(X,Y)$.
If $S=0$:
    output $f_{\nmext(X',Y')}(C')$.
Else if $C'=C$:
    output $\same$.
Else:
    output $\ell$ fresh independent fair bits.
\end{lstlisting}
This procedure has no message input and samples the actual smoothed
hybrid. It may retain the classical inputs it sampled to compute $S$.
It runs the specified attack and shared state, not the algebraic coarse
channel. There is no inverse square root or postselection to implement.

\begin{lemma}[Ideal simulation]\label[lemma]{lem:simulator}
For every $m$, $\Delta(\Omega_m,D_{\cA,\zeta})\le\epsp$.
\end{lemma}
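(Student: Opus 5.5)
The plan is to compute $\Omega_m=\Phi_m(\sigma)$ block by block on the refined register $(\Lambda,C,C')$. The aim is to show that, conditional on each block, it agrees with $D_{\cA,\zeta}$ in every case of \eqref{eq:j-symbol} except the case $\sflag=1,\ c'\neq c$, where the only discrepancy is the prefix law $p_m$ of \cref{lem:fibers} against $U_\ell$.

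First, I would write the ideal selected state before $\Sym$ is applied. As in the proof of \cref{lem:normalized}, it is
\[
 \widetilde\sigma_m=q\Pi_m(U_{\cK}\otimes\nu)\Pi_m,\qquad
 \widetilde\sigma_m=\sum_{z,\lambda,c,c'}q\,u_{\cK}(z)\chi_m(z,c)\ket{z,\lambda,c,c'}\bra{z,\lambda,c,c'}\otimes\tau^{\lambda,c,c'}_{E'_2}.
\]
Only the classical labels matter, since $\Sym$ discards $E'_2$. Let $p(\lambda,c,c')=\Tr\tau^{\lambda,c,c'}$. By \cref{lem:normalized}, the $(\lambda,c,c')$-marginal of $\widetilde\sigma_m$ is exactly $p$. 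Conditional on a block, the key has law $q\,u_{\cK}(z)\chi_m(z,c)$, which is $U_{\cK}$ conditioned on the event $f_Z(c)=m$ (this uses \eqref{eq:onepoint}). Thus $\Omega_m=\sum_{\lambda,c,c'}p(\lambda,c,c')\,L_{\lambda,c,c'}$, where $L_{\lambda,c,c'}$ is the law of $\Sym(Z,\lambda,c,c')$ under this conditioned key.

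Second, I would identify $D_{\cA,\zeta}$ in the same form. The procedure samples the actual smoothed uniform-ciphertext hybrid, and by \eqref{eq:nu} the register $\nu$ is exactly that hybrid's $(\Lambda,C,C',E'_2)$. So its classical labels $(Y,Y',S,\overline Z',C,C')$ have law $p$, and $S$ and $\overline Z'$ are computed from $X,Y,X',Y'$ exactly as in the simulator. Hence $D_{\cA,\zeta}=\sum p(\lambda,c,c')\,L'_{\lambda,c,c'}$ with the following conditional laws. When $\sflag=0$, $L'$ is the point mass at $f_{\overline z'}(c')$. When $\sflag=1$ and $c'=c$, it is the point mass at $\same$. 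When $\sflag=1$ and $c'\neq c$, it is $U_\ell$. Inconsistent labels have $p=0$ on both sides.

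Third, I would compare the two block by block. The first two cases do not involve $z$, so there $L=L'$ exactly. In the third case, \cref{lem:fibers} gives that $f_Z(c')$ under $U_{\cK}$ conditioned on $f_Z(c)=m$ has law $p_m$, and $\Delta(p_m,U_\ell)=\epsp$ by \eqref{eq:prefix-law}. Convexity of total variation over the common mixing weights $p$ then gives $\Delta(\Omega_m,D_{\cA,\zeta})\le\epsp\cdot\Pr_p[\sflag=1,C'\neq C]\le\epsp$.

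The main obstacle I expect is justifying that the mixing weights really coincide. This needs two facts: that the $Z$-marginal of the selected ideal state is exactly $\nu$ for every $m$ (this is \cref{lem:normalized}), and that the simulator's $S$ and $\overline Z'$ have the same joint law with $(Y,Y',C,C')$ as the labels in $\nu$. The second holds because $\nu$ was defined as continuation applied to $\mu_{\Lambda E_2}$, and by \eqref{eq:composition} this reproduces the hybrid on which the simulator runs.
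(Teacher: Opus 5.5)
Your proposal is correct and follows essentially the paper's own proof. Both use \cref{lem:normalized} to show that the ideal selected labels $(\lambda,c,c')$ have the message-free hybrid law, and both show that only the case $\sflag=1,\ c'\neq c$ differs, where $p_m$ is replaced by $U_\ell$. The paper writes the two mixtures explicitly and gets the exact distance $w_{\rm change}\epsp$, while you bound it by convexity over the common weights, which yields the same quantity.
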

\begin{proof}
Let $P^{\rm hyb}_{\Lambda CC'}(\lambda,c,c')=
\Tr\tau^{\lambda,c,c'}$ be the classical distribution obtained from $\nu$ by
discarding $E'_2$. By \cref{lem:normalized}, the ideal selected joint distribution is
\[
 \Pr_{\widetilde\sigma_m}[Z=z,\Lambda=\lambda,C=c,C'=c']
 =q\,u_{\cK}(z)\chi_m(z,c)
               P^{\rm hyb}_{\Lambda CC'}(\lambda,c,c').
\]
Summing over $z$ leaves $P^{\rm hyb}_{\Lambda CC'}$. On $S=0$ the symbol uses only $\overline Z',C'$, and on $S=1,C'=C$ it is $\same$. On $S=1,C'\ne C$, using \cref{lem:fibers} gives the same distribution $p_m$ from \eqref{eq:prefix-law} for every supported register. The simulator substitutes $U_\ell$ for that
distribution. Quantum blocks factor out of this classical key sum.

For an explicit calculation, let $w_{\rm same}$ and $w_{\rm change}$
be the hybrid probabilities of $S=1,C'=C$ and $S=1,C'\ne C$.
Let $\gamma_u$ be the hybrid probability of $S=0$ and tampered decoding
$u$. These nonnegative numbers sum to one. Then, writing $\delta_j$
for a point mass,
\begin{align}
 \Omega_m&=w_{\rm same}\delta_{\same}
               +w_{\rm change}p_m+\sum_u\gamma_u\delta_u,\notag\\
 D_{\cA,\zeta}&=w_{\rm same}\delta_{\same}
               +w_{\rm change}U_\ell+\sum_u\gamma_u\delta_u.
 \label{eq:sim-mixture}
\end{align}
Their distance is exactly $w_{\rm change}\epsp\le\epsp$.
All scaling factors come from the message-free hybrid, not from a choice depending
on $m$.
\end{proof}
The $\epsp$ term accounts for the slightly deficient mass at $m$ in
$p_m$; it is not reinterpreted as a negative $\same$ probability.

\begin{proof}[Proof of \cref{thm:main}]
Correctness, complexity, and rate follow from \cref{lem:correctness}.
Fix an arbitrary attack and $\zeta$, and use the simulator above. For
any $m$, \cref{lem:comparison,lem:simulator} give
$\Delta(P_{J_m},D_{\cA,\zeta})\le\beta+\epsp$.
Apply the same deterministic map $\Copy(m,\cdot)$ and use the exact
identity of \cref{lem:copy}. Thus the trace distance between the smoothed
real decoded-output distribution and $\Copy(m,D_{\cA,\zeta})$ is at most
$\beta+\epsp$. Comparing the original attack with the smoothed attack
adds at most $\zeta$ by \cref{lem:fallback}, proving
\eqref{eq:main-error}. The simulator was defined before $m$, so the same
distribution works for all messages.

Every CP map identity holds on arbitrary entangled inputs, and the
operator inequalities and normalized trace estimates contain no local
dimension parameter. Thus the error is uniform over the stipulated
entanglement dimensions. The quantitative simplification
\eqref{eq:main-asymptotic} is verified in \cref{sec:parameters} below.
\end{proof}

\section{Parameters, rate, and scope}
\label{sec:parameters}

\subsection{Finite error and the padding surplus}
Expanding \eqref{eq:main-error} gives
\begin{equation}
 \epsn\le\zeta+\alpha+\sqrt{2\alpha}
     +2\left(\frac{2^{-g}}\zeta\right)^{1/4}
     +\frac{q-1}{q(N-1)},
 \quad \alpha=\eta+2^{-s},\quad g=r-d.
 \label{eq:error-expanded}
\end{equation}
For $g>0$, let $k_{\rm sm}=\lceil g/5\rceil$ and $\zeta=2^{-k_{\rm sm}}$.
This is a directly samplable proof parameter: test whether $k_{\rm sm}$ fair bits
are all zero. The rounding inequalities give
\[
 \zeta\le2^{-g/5},\qquad
 (2^{-g}/\zeta)^{1/4}=2^{(-g+k_{\rm sm})/4}
             \le2^{1/4}2^{-g/5}.
\]
For $\alpha\le1$, $\alpha\le\sqrt\alpha$ and
$\sqrt\alpha\le\sqrt\eta+2^{-s/2}$.
Also $\epsp\le1/(N-1)\le2^{1-s}$. Since $0<g<s$, both small seed
terms are bounded by constant multiples of $2^{-g/5}$. Substitution proves
\eqref{eq:main-asymptotic}. This does not assert that the exponent $1/5$
in the \emph{error} is optimal. It arises from two square-root losses and from balancing the smoothing
error with the fourth-root term; it is distinct from the limiting code rate $1/5$.

\subsection{Instantiating the BBJ dimensions}
Fix a sufficiently small allowed constant $0<\delta<1/10$ and use
\eqref{eq:bbj-params}. Choose
\begin{equation}
 s=\lfloor s_0/2\rfloor,\qquad r=2d,\qquad\ell=s-r.
 \label{eq:chosen-lengths}
\end{equation}
For all sufficiently large $n$, $\ell\ge1$. At most one inner-output bit
is discarded. There is no requirement that $\ell$ divide $r$: the field has degree $s$,
and the message is simply a prefix in its binary representation.

Write $s_0=(1/2-\delta)n+e_n$ and $d=\delta n+f_n$ with bounded
rounding terms. For $j_n\in\{0,1\}$, $2s=s_0-j_n$. Therefore
\begin{align}
 s&=(1/4-\delta/2)n+(e_n-j_n)/2,\notag\\
 \ell&=\frac{1-10\delta}{4}n+O(1),\qquad
 g=d=\delta n+O(1),\label{eq:message-gap}\\
 L_{\rm tot}=n+d+s&=\frac{5+2\delta}{4}n+O(1).
 \label{eq:total-length}
\end{align}
Only $X,Y,C$ are transmitted; $Z$ is recomputed and $R$ is already
inside $C$. Dividing the message coefficient by the positive total-length
coefficient gives
\begin{equation}
 \lim_{n\to\infty}\frac{\ell}{L_{\rm tot}}
     =\frac{1-10\delta}{5+2\delta}<\frac15.
 \label{eq:rate-final}
\end{equation}
Its difference from $1/5$ is $52\delta/[5(5+2\delta)]$, tending to zero
as one chooses smaller allowed fixed constants. We choose $\delta$ first, with the
required margin, and only then take $n$ large. 

\begin{corollary}[Constant-rate family]\label[corollary]{cor:rate}
For every sufficiently small fixed $\xi>0$, suitable allowed BBJ parameters
give rate at least $1/5-\xi$ for sufficiently large $n$, perfect
correctness, polynomial-time online encoding and decoding, and
$2^{-n^{\Omega(1)}}$ error against every entangled two-split attack.
\end{corollary}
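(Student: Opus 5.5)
The corollary should follow directly from \cref{thm:main} once the parameter choices of \eqref{eq:chosen-lengths} are in place; nothing further about tampering needs to be proved. Given $\xi>0$, I would first fix an allowed BBJ constant $\delta\in(0,1/10)$, small enough that $52\delta/[5(5+2\delta)]<\xi/2$. By \eqref{eq:rate-final} the limiting rate $(1-10\delta)/(5+2\delta)$ then exceeds $1/5-\xi/2$. Next I set $s=\lfloor s_0/2\rfloor$, $r=2d$ and $\ell=s-r$. The bounded rounding terms in \eqref{eq:message-gap}--\eqref{eq:total-length} give $\ell/L_{\rm tot}=(1-10\delta)/(5+2\delta)+O(1/n)$. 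Hence for all sufficiently large $n$ the rate is at least $1/5-\xi$, and we also have $\ell\ge1$ and $2s\le s_0$. The quantifier order matters here: $\delta$ is chosen first, and only afterwards is $n$ taken large.

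For security, \cref{ass:bbj} together with the truncation argument of Section~\ref{sec:bbj-interface} shows that the truncated $\nmext$ satisfies \cref{def:inner} with $\eta=2^{-n^{\Omega(1)}}$. The hypotheses of \cref{thm:main} therefore hold. The padding surplus is $g=r-d=d=\delta n+O(1)>0$ for large $n$, and since $\eta\to0$ we have $\alpha=\eta+2^{-s}\le1$. Substituting into \eqref{eq:main-asymptotic} gives
\[
\epsn=O\bigl(\sqrt\eta+2^{-\delta n/5}\bigr)=2^{-n^{\Omega(1)}}.
\]
This uses that $\delta$ is a fixed constant, so that $2^{-\delta n/5}$ is absorbed into the stretched-exponential form. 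The bound holds for every entangled two-split attack and uses a single simulator for all messages, as in \cref{thm:main}.

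Perfect correctness and the polynomial running time come from \cref{lem:correctness}, combined with the polynomial-time computability in \cref{ass:bbj} and a fixed public irreducible polynomial of degree $s$. Finding such a polynomial is a one-time preprocessing step that does not depend on the message, which is why the statement says ``online''. I expect the only delicate step to be the bookkeeping of quantifiers and constants, namely $\delta$ versus $\xi$ and the hidden constant in the exponent $\Omega(1)$. If BBJ permits only a discrete set of allowed $\delta$, I would simply pick any allowed value below the required threshold, which the phrase ``sufficiently small fixed'' provides.
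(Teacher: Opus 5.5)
Your proposal is correct and follows essentially the same route as the paper: fix $\delta$ first so that the limiting rate $(1-10\delta)/(5+2\delta)$ exceeds $1/5-\xi$, and take $s=\lfloor s_0/2\rfloor$, $r=2d$, $\ell=s-r$ so that the surplus $g=d$ is linear in $n$. From there you read off the $O(\sqrt\eta+2^{-g/5})$ bound of \cref{thm:main}, with correctness and efficiency from \cref{lem:correctness}, and your extra checks ($g>0$, $\alpha\le1$, $2s\le s_0$, $\ell\ge1$, absorbing the constant prefactor into the stretched-exponential exponent) are the bookkeeping the paper's short proof states only briefly.
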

\begin{proof}
Choose fixed $\delta>0$ with the limit in \eqref{eq:rate-final} strictly
larger than $1/5-\xi$. Bounded rounding affects rate by $o(1)$.
The gap in \eqref{eq:message-gap} is linear. If the $\nmext$ error is
at most $2^{-c n^{a_{\rm err}}}$ for fixed $c,a_{\rm err}>0$, its square root is $2^{-(c/2)n^{a_{\rm err}}}$, while the padding error is $2^{-\Omega(n)}$.
Their sum is stretched-exponentially small, and hence negligible.
Indeed, each exponent eventually exceeds any fixed multiple of
$\log_2 n$; the fixed prefactor in \eqref{eq:main-asymptotic} can be
absorbed by reducing the exponential constant. Complexity and correctness
are \cref{lem:correctness}.
\end{proof}

\subsection{Equal-length components}
\label{sec:balanced}
The rate in \eqref{eq:rate-final} counts unequal component lengths. For
comparison with the balanced model, padding the shorter component gives
the following separate guarantee, without changing the error bound.

\begin{corollary}[Balanced-share code]\label[corollary]{cor:balanced}
For the fixed-$\delta$ family in \eqref{eq:chosen-lengths}, and sufficiently
large $n$, there is a perfectly correct entangled two-split NMC whose two
components each have length $n$, with the same security error and limiting
rate $(1-10\delta)/8$. Consequently, balanced rates can approach $1/8$
from below through sufficiently small allowed fixed $\delta>0$.
\end{corollary}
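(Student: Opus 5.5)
The plan is to pad the shorter component and reuse \cref{thm:main} unchanged. With the choices in \eqref{eq:chosen-lengths}, the left share is $X$ of length $n$ and the right share $(Y,C)$ has length $d+s=(1/4+\delta/2)n+O(1)$, which is at most $n$ for all sufficiently large $n$ when $\delta<1/10$.

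Define the balanced encoder as $\LeftShare=X$ and $\RightShare=(Y,C,0^{p})$ with $p=n-d-s\ge0$. The decoder ignores the last $p$ bits of the right share and runs the decoder of \cref{con:code}. Perfect correctness and polynomial-time complexity are then immediate from \cref{lem:correctness}.

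For security, fix an arbitrary attack $\cA$ on the padded code, consisting of a state $\psi_{E_1E_2}$ and local channels $\cA_1,\cA_2$. From it I build an attack $\widetilde\cA$ on the unpadded code. The left channel is unchanged. The right channel first appends the fixed string $0^p$ locally, then applies $\cA_2$, and finally truncates its classical output to the first $d+s$ bits. Appending a fixed string is tensoring with a fixed state, and truncating is a deterministic classical map, so $\widetilde\cA$ is still a local attack in the sense of \cref{def:attack}. It uses the same shared entanglement and has no communication link.

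Because the balanced decoder discards the tail, the decoded output of $\cA$ on the padded encoding of $m$ equals, for every execution, the decoded output of $\widetilde\cA$ on $\Enc(m)$. Hence the simulator $D_{\widetilde\cA}$ given by \cref{thm:main} serves $\cA$ with the identical error bound $\zeta+\beta+\epsp$, and it is still chosen before $m$. So the balanced code has the same security error.

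For the rate, the total length is $2n$ and $\ell=\frac{1-10\delta}{4}n+O(1)$ by \eqref{eq:message-gap}. The limiting rate is therefore $(1-10\delta)/8$, and letting the allowed fixed $\delta$ shrink makes it approach $1/8$ from below.

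The only step needing care is checking that the modified right operation is a legitimate local channel with classical outputs of the prescribed length. This is routine because padding and truncation act only on the right share's own classical register.
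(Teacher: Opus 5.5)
Your proposal is correct and follows essentially the same route as the paper's proof. You pad the right share with $n-(d+s)$ zero bits and let the decoder discard them. You then build the induced attack on the unpadded code: append the zeros locally, run the balanced right attack, and truncate. This lets you reuse the simulator from \cref{thm:main}, which is fixed before $m$, and the rate $(1-10\delta)/8$ follows from total length $2n$.
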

\begin{proof}
The original right length is
$d+s=(1/4+\delta/2)n+O(1)<n$ for large $n$. Set $b_n=n-(d+s)$.
Encode using \cref{con:code}, then append $b_n$ zero bits to the right
component. The new decoder discards these final bits and applies the old
decoder. Honest decoding is therefore unchanged.

Fix any entangled attack on the balanced code. Form an attack on the
original code by leaving its left procedure unchanged and making its right
procedure append those zeros, run the balanced right attack, and discard
the last $b_n$ output bits. These are local operations on the right alone,
using the same pre-shared state and no communication. For each message,
the original code under this induced attack has exactly the same decoded
output distribution as the balanced code under the given attack. The simulator
provided by \cref{thm:main} for the induced attack is consequently one
simulator for every message of the balanced code, with the same error.

Only storage length changes: the total is $2n$, while
$\ell=(1-10\delta)n/4+O(1)$. Dividing gives limiting rate
$(1-10\delta)/8$. Polynomial-time encoding and decoding are preserved.
\end{proof}

Our unbalanced rate, balanced rate, and error exponent are distinct
parameters. In particular, the stated BBJ instantiation does not improve
Li's exponential classical error~\cite{Li23}.

\section*{Acknowledgements}
\addcontentsline{toc}{section}{Acknowledgements}
\paragraph{Funding.}
The author thanks the National Research Foundation, Singapore (NRF), for funding the Senior Research Fellow position through the NRF--NUS Postdoctoral Award.

\paragraph{AI disclosure.}
 The author used generative-AI systems from OpenAI (ChatGPT) and Anthropic (Claude) during both research development and manuscript preparation. The author supplied multiple candidate non-malleable code constructions and used the systems interactively to explore possible security arguments. In one such interaction, an AI system suggested a proof strategy for an author-proposed non-malleable code construction yielding rate approximately \(1/14\). After independently studying this argument, the author recognized that the underlying strategy could be adapted to a modified version of the BBJ permutation construction, leading to the rate-\(1/5\) construction studied in this paper. AI systems were subsequently used for further proof discussions and for drafting and revising early versions of the manuscript. The author independently checked the proofs and calculations and substantially edited the resulting text through multiple revisions. The author takes full responsibility for the correctness, originality, and presentation of the final paper.

\label{end:main}

\Needspace{8\baselineskip}
\phantomsection\addcontentsline{toc}{section}{References}
\bibliography{references}
\bibliographystyle{alpha}   
\clearpage
\appendix
\renewcommand{\theHsection}{appendix.\Alph{section}}
\renewcommand{\theHsubsection}{\theHsection.\arabic{subsection}}
\section{A complete classical proof of the rate-one-fifth construction}
\label{app:classical}
This appendix proves the security of the construction against classical split-state
tampering without using the quantum scaling theorem, quantum channels,
or quantum conditional states. We restate the construction and every
finite-field and probabilistic fact needed in its proof. The only external
cryptographic input is an explicitly specified classical randomness encoder.
The classical specialization of BBJ supplies that input; its separate
extractor construction is not reproved here.

The classical proof is quantitatively stronger than simply restricting the
quantum theorem to classical attacks. It uses exact posterior distributions
and averages their sizes instead of adding the quantum proof's smoothing.
Its final error is
\[
 2(\eta+2^{-s})+2^{-(r-d)/2}
       +\frac{2^\ell-1}{2^\ell(2^s-1)}.
\]
The message and codeword lengths are the same as in the main construction.

\subsection{Probability conventions and the security goal}
For finite probability distributions $P,Q$ on one alphabet,
\[
 \Delta(P,Q)=\frac12\sum_x|P(x)-Q(x)|.
\]
If $a_x=P(x)-Q(x)$, then $\sum_xa_x=0$. The total positive and negative
parts have equal magnitude, each equal to $\Delta(P,Q)$. Therefore,
for every function $b:\mathcal X\to[0,1]$,
\begin{equation}
 |\E_Pb-\E_Qb|\le\Delta(P,Q).
 \label{eq:cl-bounded}
\end{equation}
Indeed, the largest possible positive sum $\sum_xb_xa_x$ is at most
$\sum_{a_x>0}a_x$, and the negative sum is treated the same way.

A classical channel is a conditional distribution
$P_{B\mid A}(b\mid a)=\Pr[B=b\mid A=a]$, with nonnegative probabilities
summing to one over $b$. Applying this same channel to input distributions $P_A,Q_A$
gives output distributions
$P_B(b)=\sum_a P_A(a)P_{B\mid A}(b\mid a)$ and
$Q_B(b)=\sum_a Q_A(a)P_{B\mid A}(b\mid a)$. Hence
\[
 \Delta(P_B,Q_B)\le\frac12\sum_a|P_A(a)-Q_A(a)|
                         \sum_bP_{B\mid A}(b\mid a)
                 =\Delta(P_A,Q_A).
\]
For joint distributions, forgetting a coordinate is one such classical channel.
Appending the same independent distribution preserves distance, by factoring its
probabilities out of the absolute-value sum. Finally, if distributions with disjoint
labels have the same scaling factors, their joint distance is the
probability-scaled sum of conditional distances.

Write $(\LeftShare,\RightShare)$ for the two codeword shares, reserving $R$ for
random padding. A classical deterministic split attack maps
$(\LeftShare,\RightShare)$ to $(f(\LeftShare),g(\RightShare))$.
Randomized attacks may have arbitrarily correlated classical random tapes
independent of the message and encoder coins, but may not communicate after
receiving the shares. Non-malleability requires one distribution $D_{f,g}$ on
$\bits^\ell\cup\{\same,\perp\}$ for all messages, with
\[
 \Delta\bigl(\Dec(f(\LeftShare),g(\RightShare)),\Copy(m,D_{f,g})\bigr)\le\varepsilon,
 \qquad (\LeftShare,\RightShare)\leftarrow\Enc(m).
\]
The definition of $\Copy$ is $\Copy(m,\same)=m$ and
$\Copy(m,u)=u$ for other symbols. For a distribution $D$,
$\Copy(m,D)$ is the law of $\Copy(m,J)$ for $J\sim D$.
We first analyze deterministic functions,
then average the corresponding joint distributions over shared random tapes.
For example, $(\LeftShare',\RightShare')=(f(\LeftShare),g(\RightShare))$ has conditional
distribution
\[
 P_{\LeftShare'\RightShare'\mid\LeftShare\RightShare}
 (l_{\rm sh}',r_{\rm sh}'\mid l_{\rm sh},r_{\rm sh})
 =\ind[l_{\rm sh}'=f(l_{\rm sh})]\ind[r_{\rm sh}'=g(r_{\rm sh})].
\] This factorization is used only for the
classical deterministic case. Superscript ${\rm hyb}$ below denotes the
uniform-ciphertext hybrid.

\subsection{The external classical non-malleable randomness encoder}
We use $\nmext$ for the truncated $2s$-bit primitive, as in the main text;
it is distinct from the full $s_0$-bit primitive $\nmext_0$.
Let
\[
 \nmext:\bits^n\times\bits^d\longrightarrow\bits^{2s}
\]
be polynomial-time computable. The classical premise is as follows.
For independent uniform $X,Y$, let $Z=\nmext(X,Y)$.
Apply any pair of local classical randomized operations to obtain $X',Y'$,
and define
\[
 S=\ind[(X',Y')=(X,Y)],\qquad
 \overline Z'=\begin{cases}\same,&S=1,\\\nmext(X',Y'),&S=0.\end{cases}
\]
Writing $\Lambda=(Y,Y',S,\overline Z')$, assume
\begin{equation}
 \Delta\bigl(P_{Z\Lambda},U_{2s}\otimes P_\Lambda\bigr)\le\eta.
 \label{eq:cl-inner}
\end{equation}
Only the stated register is exposed, not the full original $X$.

This is an assumption about a classical function and classical distributions.
It suffices for the proof below, regardless of how it is established.
The right-side scaled guarantee of BBJ~\cite[Lemma 3]{BBJ23}, restricted
to classical local operations and with the right source and output retained,
supplies it at
\begin{equation}
 d=\delta n+O(1),\quad 2s\le(1/2-\delta)n+O(1),\quad
 \eta=2^{-n^{\Omega(1)}}.
 \label{eq:cl-inner-params}
\end{equation}
A scaled same-input/changed-input statement gives
\eqref{eq:cl-inner} directly: label the two cases, mask the equal output
in the first, and sum the absolute differences in the two disjoint blocks.
A zero-probability case contributes zero. Output truncation is deterministic
processing and preserves the error. For randomized local maps, retain a classical random-tape variable $\Omega$ with its source-independent distribution. Apply the deterministic bound to
the conditional distributions given $\Omega$, and average with $P_\Omega$.
Forgetting this label cannot increase total variation. 

\subsection{The non-malleable code and the ideal permutation distribution}
Put $s=\ell+r$, $N=2^s$, $q=2^\ell$, and $T=2^r$, so $N=qT$.
Fix a binary representation of $\F_N$. For $z=(a,b)\in\F_N^2$ define
$a^\star=a$ if $a\ne0$, otherwise $a^\star=1$, and
\[
 \pi_z(u)=a^\star u+b,\qquad
 f_z(c)=\operatorname{pref}_\ell\bigl((a^\star)^{-1}(c-b)\bigr).
\]
To encode $m$, independently choose $X\leftarrow U_n$, $Y\leftarrow U_d$,
$R\leftarrow U_r$ and let
\begin{equation}
 Z=\nmext(X,Y),\quad C=\pi_Z(m\|R),\quad
 (\LeftShare,\RightShare)=(X,(Y,C)).
 \label{eq:cl-code}
\end{equation}
Decode the tampered strings by $f_{\nmext(X',Y')}(C')$.
The slope is always nonzero, so inverting honest $C$ gives $m\|R$ for
all source and padding choices. Thus correctness is perfect. The algorithms
use polynomial-time field operations, polynomial-time computable $\nmext$ and transmit $n+d+s$ bits. The rate is $\ell/(n+d+s)$.

Let $\cK=\F_N^\times\times\F_N$. The exact distance from uniform
raw seeds to $U_{\cK}$ is $1/N$: the excluded event $a=0$ has precisely
that probability. More explicitly, uniform raw seeds give mass $1/N^2$
to each pair, whereas $U_{\cK}$ gives $1/(N(N-1))$ to pairs with nonzero
slope. The total positive difference on allowed pairs is $1/N$, equalling the missing mass on excluded pairs. We write
$u_{\cK}(z)=\ind[z\in\cK]/(N(N-1))$ for the probability mass function
of $U_{\cK}$ on the raw seed alphabet $\F_N^2$.

\begin{lemma}[Classical affine counting]\label[lemma]{lem:cl-affine}
For every raw $z,m$, exactly $T$ ciphertexts satisfy $f_z(c)=m$, and
$\pi_z(m\|U_r)$ samples them uniformly. For fixed $c\ne c'$ and an
ideal uniform key,
\begin{align}
 \Pr[f_Z(c)=m]&=1/q,\notag\\
 \Pr[f_Z(c)=m,f_Z(c')=m]&=T(T-1)/(N(N-1)).
 \label{eq:cl-membership}
\end{align}
The conditional distribution of $f_Z(c')$ given $f_Z(c)=m$ is $p_m$ with
\begin{equation}
 p_m(u)=\begin{cases}(T-1)/(N-1),&u=m,\\T/(N-1),&u\ne m,
 \end{cases}\qquad
 \Delta(p_m,U_\ell)=\frac{q-1}{q(N-1)}=\epsp.
 \label{eq:cl-prefix}
\end{equation}
\end{lemma}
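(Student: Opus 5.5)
The plan is to prove this classical restatement directly by elementary counting, mirroring \cref{lem:affine,lem:fibers}, with no appeal to the quantum material. The argument has three steps.

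\textbf{Fiber size and sampler.} Fix a raw seed $z=(a,b)$. Since $a^\star\neq 0$, the map $\pi_z(u)=a^\star u+b$ is a bijection of $\F_N$ with inverse $c\mapsto(a^\star)^{-1}(c-b)$. Hence $f_z(c)=m$ exactly when $\pi_z^{-1}(c)$ has prefix $m$, that is, when $c=\pi_z(m\|t)$ for some $t\in\bits^r$. Injectivity of $\pi_z$ then gives exactly $T$ such ciphertexts. It also shows that $\pi_z(m\|U_r)$ is uniform on this set, because a bijection pushes the uniform distribution on the $T$ preimages to the uniform distribution on their images. This holds for every raw seed, including $a=0$, so no ideal-key assumption is needed in this part.

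\textbf{Ideal-key inverse pairs.} Fix $c\neq c'$ and ordered distinct $u\neq u'$. The system $au+b=c$, $au'+b=c'$ has the unique solution
\[
 a=\frac{c-c'}{u-u'},\qquad b=c-au .
\]
Here $a\neq 0$, so $(a,b)\in\cK$. The correspondence $z\mapsto(\pi_z^{-1}(c),\pi_z^{-1}(c'))$ is therefore a bijection from $\cK$ onto the $N(N-1)$ ordered distinct pairs. Consequently, under $U_{\cK}$ the pair is uniform on distinct pairs, and each coordinate alone is uniform on $\F_N$.

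\textbf{Counting the events.} The one-point probability is $T/N=1/q$, since $T$ field elements carry prefix $m$. For the two-point probability, the ordered distinct pairs with both prefixes equal to $m$ number $T(T-1)$, which gives \eqref{eq:cl-membership}. For $u\neq m$, the pairs with first prefix $m$ and second prefix $u$ number $T^2$, with probability $T^2/(N(N-1))$. Dividing each count by $\Pr[f_Z(c)=m]=T/N$ produces $p_m$ as displayed. As a check, $(T-1)+(q-1)T=N-1$, so the values sum to one.

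\textbf{Total variation.} For the distance to $U_\ell$, compare each mass with $1/q=T/N$. At $u=m$ the deficit is
\[
 \frac{T}{N}-\frac{T-1}{N-1}=\frac{N-T}{N(N-1)}=\frac{q-1}{q(N-1)}.
\]
Each of the other $q-1$ points has excess $\frac{T}{N-1}-\frac{T}{N}=\frac{1}{q(N-1)}$. Total variation equals the total positive part, which is $(q-1)/(q(N-1))=\epsp$.

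No step is a genuine obstacle. The only point needing care is keeping the first claim for \emph{every} raw seed, including $a=0$ handled through $a^\star$, while using the bijection with ordered pairs only for keys in $\cK$.
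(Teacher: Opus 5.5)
Your proposal is correct and follows the paper's proof essentially step for step. It uses the same ingredients: the bijection of the $T$ padded plaintexts onto the fiber for every raw seed, the unique solution $a=(c-c')/(u-u')\neq 0$, $b=c-au$ giving uniform ordered inverse pairs under $U_{\cK}$, the counts $T(T-1)$ and $T^2$ divided by $N(N-1)$ and then by $T/N$, and the deficit/excess comparison with $1/q$. Your explicit checks, the deficit $\frac{T}{N}-\frac{T-1}{N-1}=\frac{q-1}{q(N-1)}$ and the uniform marginal of each inverse, only spell out details the paper leaves implicit.
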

\begin{proof}
For each raw key, the $T$ padded plaintexts map bijectively to the desired
ciphertexts, proving the first claim. For any distinct plaintexts $u,u'$
and distinct ciphertexts $c,c'$, the system $au+b=c$, $au'+b=c'$ has
unique solution $a=(c-c')/(u-u')\ne0$, $b=c-au$. Thus the inverse pair
under an ideal uniform key is uniform over all $N(N-1)$ distinct ordered
pairs. The first inverse has prefix $m$ with probability $T/N$.
There are $T(T-1)$ inverse pairs with both prefixes $m$, and $T^2$ pairs
with first prefix $m$ and any fixed different second prefix. Divide these
counts by $N(N-1)$ and then by $T/N$ for the conditional distribution.
Its deficit at $m$ from $1/q$ is $(q-1)/(q(N-1))$, and each of the
$q-1$ other prefixes has excess $1/(q(N-1))$. Their equal total magnitudes
give the displayed variation distance.
\end{proof}

\subsection{An exact classical posterior factorization}
Consider deterministic outer tampering
\[
 X'=f(X),\qquad Y'=g_1(Y,C),\qquad C'=g_2(Y,C).
\]
Its right conditional distribution is
$P_{Y'C'\mid YC}(y',c'\mid y,c)=
\ind[y'=g_1(y,c)]\ind[c'=g_2(y,c)]$.
In the message-free hybrid, $P^{\rm hyb}_{XYC}=U_n\times U_d\times U_s$.
Write $z(x,y)=\nmext(x,y)$ and $p(x,y)=2^{-n-d}$.
Its hybrid message $M=f_Z(C)$ has probability $1/q$ for every fixed
$x,y$, by the first part of \cref{lem:cl-affine}. Therefore the hybrid
conditioned on $M=m$ has distribution
\begin{equation}
 P[x,y,c\mid M=m]=p(x,y)T^{-1}\ind[f_{z(x,y)}(c)=m],
 \label{eq:cl-real-law}
\end{equation}
exactly the real encoding distribution. Applying the fixed tampering maps
to equal input distributions preserves this equality.

Define $A_{y,y'}=\{c:g_1(y,c)=y'\}$ and $h_{y,y'}=|A_{y,y'}|$.
The sets partition the ciphertext alphabet for each $y$, so
$\sum_{y'}h_{y,y'}=N$. Marginalizing the hybrid joint distribution gives
\[
 P^{\rm hyb}_{Y'\mid Y}(y'\mid y)=\frac{h_{y,y'}}N,\qquad
 P^{\rm hyb}_{C\mid Y,Y'}(c\mid y,y')=
       \frac{\ind[c\in A_{y,y'}]}{h_{y,y'}}
\]
when $h_{y,y'}>0$. For an empty set complete the latter conditional
distribution arbitrarily; its conditioning event has probability zero.
The identity of conditional probabilities
\begin{equation}
 P^{\rm hyb}_{Y'\mid Y}(y'\mid y)
 P^{\rm hyb}_{C\mid Y,Y'}(c\mid y,y')
       =N^{-1}\ind[g_1(y,c)=y']
 \label{eq:cl-posterior}
\end{equation}
therefore holds for every $y,y',c$, including empty sets. It reconstructs
the full hybrid distribution by sampling $X,Y$, then $Y'$ from
$P^{\rm hyb}_{Y'\mid Y}$, then $C$ from $P^{\rm hyb}_{C\mid Y,Y'}$, and setting
$C'=g_2(Y,C)$.

Before the continuation map, compute
\[
 \lambda(x,y,y')=(y,y',\sflag(x,y,y'),\overline z'(x,y,y')),
\]
where $\sflag(x,y,y')=\ind[f(x)=x,\ y'=y]$, and the unmasked value is
$\nmext(f(x),y')$. Let $\mu_{Z\Lambda}(z,\lambda)$ be the joint distribution of this coarse
register and $z(x,y)$. It is the output of a legitimate attack on $\nmext$: the
left operation is $f$, and the right classical channel is
$P^{\rm hyb}_{Y'\mid Y}$, with no ciphertext input. Applying
\eqref{eq:cl-inner}, and then changing a fresh raw uniform key to the ideal
distribution, yields
\begin{equation}
 \sigma_{Z\Lambda}(z,\lambda)=u_{\cK}(z)\mu_\Lambda(\lambda),\qquad
 \Delta(\mu,\sigma)\le\alpha:=\eta+N^{-1}.
 \label{eq:cl-ideal}
\end{equation}
The side labels keep their actual marginal. In both distributions their marginal on $Y,Y'$ is
\begin{equation}
 \mu_{YY'}(y,y')=2^{-d}h_{y,y'}/N.
 \label{eq:cl-yv}
\end{equation}
Indeed, the joint distribution of the coarse variables is
$P^{\rm hyb}_{YY'}(y,y')=2^{-d}P^{\rm hyb}_{Y'\mid Y}(y'\mid y)$.
Marginalizing over the other retained variables leaves this distribution unchanged.

The whole continued distribution is
\[
 P^{\rm hyb}_{Z\Lambda CC'}(z,\lambda,c,c')=
 \mu_{Z\Lambda}(z,\lambda)
 P^{\rm hyb}_{C\mid Y,Y'}(c\mid y,y')\ind[c'=g_2(y,c)].
\]
Thus, the same conditional distribution reconstructs the seed, flag, and
tampered key jointly with $C,C'$. This exact factorization follows from
\eqref{eq:cl-posterior}. Its ideal counterpart replaces only
$\mu_{Z\Lambda}$ by $\sigma_{Z\Lambda}$.

\subsection{A mean-one scaling factor lemma}
The next argument is a purely probabilistic substitute for a naive bound obtained by the largest scaling factor $q$.

\begin{lemma}[Mean-one scaling factors and a conditional output distribution]\label[lemma]{lem:cl-weight}
Let $P,Q$ be probability distributions on the same finite input alphabet
$\mathcal I$, with $\Delta(P,Q)\le\alpha$. Let $\mathcal J$ be a finite
output alphabet, and let $A(j,i)\ge0$ for $i\in\mathcal I$, $j\in\mathcal J$.
For each input value define $w(i)=\sum_{j\in\mathcal J}A(j,i)$.
The function $w:\mathcal I\to[0,\infty)$ is a \emph{scaling factor}, not necessarily
a probability distribution. Define
\[
 (Pw)(i)=P(i)w(i),\qquad (PA)(j)=\sum_{i\in\mathcal I}P(i)A(j,i),
\]
and similarly $Qw,QA$. Suppose $\E_Pw=\E_Qw=1$. Then $Pw,Qw$ are
distributions on $\mathcal I$, $PA,QA$ are distributions on $\mathcal J$, and
\begin{equation}
 \Delta(PA,QA)\le2\alpha+2 d_Q,
 \qquad  d_Q=\E_Q(1-w)_+.
 \label{eq:cl-weight}
\end{equation}
\end{lemma}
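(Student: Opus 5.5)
The plan is to imitate the quantum proof of \cref{thm:weighting}, using the sharper classical identity $\Delta(Pw,P)=\E_P(1-w)_+$ in place of \cref{lem:weight-one}.

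\emph{Normalization.} First check that the objects in the statement are distributions. $Pw$ is nonnegative, and its total mass is $\E_Pw=1$. Likewise $(PA)(j)\ge0$, and $\sum_j(PA)(j)=\sum_iP(i)w(i)=1$. The same holds for $Q$.

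\emph{Factoring $A$ through $w$.} Define a classical channel $K(j\mid i)=A(j,i)/w(i)$ when $w(i)>0$. When $w(i)=0$, let $K(\cdot\mid i)$ be an arbitrary fixed distribution; this is harmless because then $A(\cdot,i)=0$. Then $(PA)(j)=\sum_i(Pw)(i)K(j\mid i)$, and similarly for $Q$. Classical data processing therefore gives
\[
\Delta(PA,QA)\le\Delta(Pw,Qw).
\]

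\emph{The mean-one identity.} For any distribution $P$ with $\E_Pw=1$,
\[
\Delta(Pw,P)=\tfrac12\sum_iP(i)|w(i)-1|.
\]
Because $\sum_iP(i)(w(i)-1)=0$, the positive part and the negative part of $w-1$ have equal $P$-mass. Hence $\tfrac12\E_P|w-1|=\E_P(1-w)_+=:d_P$.

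\emph{Transferring the deficit.} This is the key step, since $w$ may be as large as $q$ and cannot be used as a test directly. Instead use the bounded test $b=(1-w)_+\in[0,1]$. By \eqref{eq:cl-bounded},
\[
d_P\le d_Q+\Delta(P,Q)\le d_Q+\alpha.
\]

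\emph{Chaining.} Use the triangle inequality along $Pw,P,Q,Qw$:
\[
\Delta(Pw,Qw)\le d_P+\alpha+d_Q\le(d_Q+\alpha)+\alpha+d_Q=2\alpha+2d_Q.
\]
Combined with the data-processing step, this proves \eqref{eq:cl-weight}.

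The only step that needs care is the channel factorization when $w(i)=0$, which is handled by the arbitrary completion above.
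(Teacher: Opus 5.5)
Your proposal is correct and matches the paper's proof step for step. Both use the mean-one normalization, the exact identity $\Delta(Pw,P)=\E_P(1-w)_+$, transfer of the bounded deficit $(1-w)_+\in[0,1]$ via \eqref{eq:cl-bounded}, the triangle inequality through $Pw,P,Q,Qw$, and the factorization $A(j,i)=w(i)K(j\mid i)$ (completed arbitrarily where $w(i)=0$) followed by data processing; the only difference is that you present the factorization before the chaining, which is immaterial.
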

\begin{proof}
All entries are nonnegative. The mean-one assumptions give
$\sum_iP(i)w(i)=\sum_iQ(i)w(i)=1$, so $Pw,Qw$ are normalized on the
\emph{input} alphabet. This is pointwise rescaling; there is no change of
alphabet at this stage. Since $\sum_iP(i)(w(i)-1)=0$, positive and negative
signed differences have equal total mass, and hence
\[
 \Delta(Pw,P)=\tfrac12\sum_iP(i)|w(i)-1|=\E_P(1-w)_+.
\]
The same identity holds for $Q$. Because $0\le(1-w(i))_+\le1$,
\eqref{eq:cl-bounded} gives $\E_P(1-w)_+\le d_Q+\alpha$. Thus
\[
 \Delta(Pw,Qw)
 \le\Delta(Pw,P)+\Delta(P,Q)+\Delta(Q,Qw)
 \le2\alpha+2 d_Q.
\]

Now define one conditional output distribution, independently of whether
its input distribution is $Pw$ or $Qw$:
\[
 P_{J\mid I}(j\mid i)=\frac{A(j,i)}{w(i)}\quad\text{when }w(i)>0.
\]
It sums to one over $j$. If $w(i)=0$, nonnegativity implies $A(j,i)=0$
for every $j$, and both $(Pw)(i)$ and $(Qw)(i)$ are zero. Complete
$P_{J\mid I}(\cdot\mid i)$ by any distribution on such inputs.
For all $i,j$ we then have $A(j,i)=w(i)P_{J\mid I}(j\mid i)$. Consequently
\[
 (PA)(j)=\sum_i(Pw)(i)P_{J\mid I}(j\mid i),\qquad
 (QA)(j)=\sum_i(Qw)(i)P_{J\mid I}(j\mid i).
\]
This is the alphabet-changing step: sample $I$ from the normalized scaled
input distribution and then sample $J$ from the \emph{same} classical channel.
In particular, $\sum_j(PA)(j)=\sum_i(Pw)(i)=1$, and likewise for $QA$.
Its total-variation data processing follows directly:
\begin{align*}
 \Delta(PA,QA)
 &=\frac12\sum_j\left|\sum_i\bigl((Pw)(i)-(Qw)(i)\bigr)
                                      P_{J\mid I}(j\mid i)\right|\\
 &\le\frac12\sum_i|(Pw)(i)-(Qw)(i)|\sum_jP_{J\mid I}(j\mid i)\\
 &=\Delta(Pw,Qw)\le2\alpha+2 d_Q.
\end{align*}
\end{proof}
The notation $Pw$ means pointwise multiplication on $\mathcal I$;
$PA$ is the explicitly defined sum over $i$ producing a distribution on
$\mathcal J$. An input slice fixes $i$ and varies the output index $j$. In general $\sum_iw(i)\ne1$. For example, $P=(1/2,1/2)$ and $w=(3/2,1/2)$ give $Pw=(3/4,1/4)$, although $w$ is not a distribution. The bound uses its mean under $P,Q$, not its unscaled sum. Figure~\ref{fig:classical-weight-bbj} shows both stages and their quantum-channel analogue. The scaled array $A$ itself is not assumed to be a classical channel.

\subsection{The complete fixed-message output scaling factors}
For a register value $\lambda=(y,y',\sflag,\overline z')$, define
\[
 \Sym(z,\lambda,c,c')=
 \begin{cases}
 \same,&\sflag=1,\ c'=c,\\
 f_z(c'),&\sflag=1,\ c'\ne c,\\
 f_{\overline z'}(c'),&\sflag=0.
 \end{cases}
\]
As before, assign $\perp$ on unsupported inconsistent labels to make the
rule total. In a real run, let $J_m=\Sym(Z,\Lambda,C,C')$.
Then $\Copy(m,J_m)$ is exactly the actual decoded message. To verify this, if $S=0$ the rule is the real tampered decoder;
if $S=1$ the key is unchanged; then either the whole codeword is unchanged
and gives $m$, or the rule uses precisely the unchanged-key decoder on
$C'$. These cases cover all runs and incur no exceptional event.

For $h_{y,y'}>0$, the following array combines the posterior conditional
distribution with message-selection scaling factors; it is not itself a conditional probability:
\begin{equation}
 A_m(j;z,\lambda)=\frac q{h_{y,y'}}
 \sum_{c\in A_{y,y'}}\ind[f_z(c)=m]
       \ind[j=\Sym(z,\lambda,c,g_2(y,c))].
 \label{eq:cl-kernel}
\end{equation}
Its input scaling factor is
\begin{equation}
 w_m(z,\lambda)=\frac q{h_{y,y'}}
                  \sum_{c\in A_{y,y'}}\ind[f_z(c)=m].
 \label{eq:cl-row-weight}
\end{equation}

\begin{lemma}[The two classical normalizations]\label[lemma]{lem:cl-normalization}
The distribution $\mu A_m$ is the exact real symbol distribution $P_{J_m}$, and
$\E_\mu w_m=\E_\sigma w_m=1$. In the ideal selected experiment, the
refined marginal $(\lambda,c,c')$ is exactly its message-free hybrid
marginal, independently of $m$.
\end{lemma}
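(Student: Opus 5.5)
The plan is to verify the three claims by direct summation against the exact factorization of the previous subsection. I would start with the actual claims. By the continued-distribution formula, $P^{\rm hyb}_{Z\Lambda CC'}=\mu_{Z\Lambda}\,P^{\rm hyb}_{C\mid Y,Y'}\ind[c'=g_2(y,c)]$, and $P^{\rm hyb}_{C\mid Y,Y'}(c\mid y,y')=\ind[c\in A_{y,y'}]/h_{y,y'}$ on supported pairs. Hence
\[
 (\mu A_m)(j)=q\sum_{z,\lambda,c,c'}P^{\rm hyb}_{Z\Lambda CC'}(z,\lambda,c,c')\ind[f_z(c)=m]\ind[j=\Sym(z,\lambda,c,c')].
\]
Next I would pass back through \eqref{eq:cl-posterior} to the source-level hybrid. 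Each source triple $(x,y,c)$ carries mass $p(x,y)N^{-1}$, and the indicator together with the factor $q$ turns this into $p(x,y)T^{-1}\ind[f_{z(x,y)}(c)=m]$, which is exactly \eqref{eq:cl-real-law}. Since $\Lambda$, $C'$ and $\Sym$ are deterministic functions of $(x,y,c)$ under the fixed maps $f,g_1,g_2$, this gives $\mu A_m=P_{J_m}$. Summing over $j$ then gives $\E_\mu w_m=\sum_{x,y}p(x,y)\,(q/N)\,T=1$, using the first part of \cref{lem:cl-affine}. Pairs $(y,y')$ with $h_{y,y'}=0$ have $\mu$-mass zero by \eqref{eq:cl-yv}, so they contribute nothing; I would set $w_m=0$ there, or handle them by any convention.

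For the ideal claims, $\sigma=u_{\cK}\times\mu_\Lambda$ with $z$ independent of $\lambda$. For each fixed $\lambda$ with $h_{y,y'}>0$, I would average $w_m(\cdot,\lambda)$ over $z\sim U_{\cK}$. Exchanging the sums and applying $\Pr_Z[f_Z(c)=m]=1/q$ from \cref{lem:cl-affine} gives
\[
 \E_{z}w_m(z,\lambda)=\frac q{h_{y,y'}}\sum_{c\in A_{y,y'}}\frac1q=1.
\]
Averaging this over $\mu_\Lambda$, which is supported only on pairs with $h_{y,y'}>0$, yields $\E_\sigma w_m=1$.

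For the refined marginal, the ideal selected experiment assigns mass $q\,u_{\cK}(z)\mu_\Lambda(\lambda)h_{y,y'}^{-1}\ind[c\in A_{y,y'}]\ind[f_z(c)=m]\ind[c'=g_2(y,c)]$ to each $(z,\lambda,c,c')$. Summing over $z$ first, the same one-point identity removes both $q$ and the indicator. What remains is $\mu_\Lambda(\lambda)P^{\rm hyb}_{C\mid Y,Y'}(c\mid y,y')\ind[c'=g_2(y,c)]$, the message-free hybrid marginal of $(\Lambda,C,C')$, which does not depend on $m$.

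I expect the main obstacle to be the bookkeeping in the actual-distribution step. There one must justify that regrouping the source-level sum by coarse values $(z,\lambda)$ commutes with the message selection. The point is that the selector $\ind[f_z(c)=m]$ depends only on $z$ and $c$, and $z$ is retained in the coarse register, so no information is lost by first computing $\lambda$ and then continuing. I would make this explicit by writing the sum over $(x,y,y')$, grouping terms with equal $(z(x,y),\lambda)$, and checking that the continuation kernel depends only on $(y,y')$, which is a coordinate of $\lambda$.
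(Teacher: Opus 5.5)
Your proposal is correct and follows essentially the paper's argument. For the actual claims, you reconstruct the hybrid through \eqref{eq:cl-posterior}, so that the factor $q\ind[f_z(c)=m]$ imposes \eqref{eq:cl-real-law} with mass $qN^{-1}T=1$ per source pair. For the ideal claims, you sum out the ideal key using the one-point identity of \cref{lem:cl-affine}. The only cosmetic difference is that you get $\E_\sigma w_m=1$ from the conditional mean at each fixed $\lambda$, whereas the paper reads it off from the total mass of $\sum_z\widetilde\sigma_m=\nu_{\Lambda CC'}$, which is the same computation.
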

\begin{proof}
Refining the actual coarse distribution uses \eqref{eq:cl-posterior} and
reproduces the actual uniform-ciphertext hybrid. Multiplying each refined register scaling factor by
$q\ind[f_z(c)=m]$ therefore conditions exactly on $M=m$, by
\eqref{eq:cl-real-law}. Its output symbol is $J_m$, and its mass is one.
Equivalently, for each $x,y$ the scaling factor sum is
$qN^{-1}T=1$, so the complete mass is $\sum_{x,y}p(x,y)=1$.

Let $\nu_{\Lambda CC'}$ be the hybrid's joint continuation distribution, obtained
by marginalizing $P^{\rm hyb}_{Z\Lambda CC'}$. Before computing the output
symbol, the ideal selected joint distribution is
\[
 \widetilde\sigma_m(z,\lambda,c,c')=
 q\,u_{\cK}(z)\ind[f_z(c)=m]\nu_{\Lambda CC'}(\lambda,c,c').
\]
For every $c$, the sum of its key-dependent factor over $z$ is one by
\cref{lem:cl-affine}. Therefore
$\sum_z\widetilde\sigma_m(z,\lambda,c,c')=
\nu_{\Lambda CC'}(\lambda,c,c')$ on the entire alphabet. Summing proves total mass one, and computing the
symbol preserves it. This proves both assertions for the ideal distribution.
\end{proof}

\subsection{Averaging posterior support sizes}
\begin{lemma}[Classical ideal deficit]\label[lemma]{lem:cl-deficit}
For $\sigma$ in \eqref{eq:cl-ideal},
\begin{equation}
  d_\sigma:=\E_\sigma(1-w_m)_+
             \le\frac12\sqrt{\frac{q2^d}{N}}
             =\frac12\,2^{-(r-d)/2}.
 \label{eq:cl-deficit}
\end{equation}
The estimate is the same for every $m$.
\end{lemma}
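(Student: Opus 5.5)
The argument conditions on the side register $\Lambda$ and averages only over the ideal key. Under $\sigma=u_{\cK}\times\mu_\Lambda$ the key $Z$ is independent of $\Lambda$. So for each fixed $\lambda=(y,y',\sflag,\overline z')$ with $h_{y,y'}>0$, the scaling factor $w_m(Z,\lambda)$ in \eqref{eq:cl-row-weight} is a random variable driven by $Z\leftarrow U_{\cK}$ alone. Write $I_c(z)=\ind[f_z(c)=m]$, so that $w_m=(q/h)\sum_{c\in A}I_c(Z)$ with $A=A_{y,y'}$ and $h=h_{y,y'}$.

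The first step is to compute the conditional moments using \cref{lem:cl-affine}. The one-point probability $1/q$ gives $\E[w_m\mid\Lambda=\lambda]=1$. For the second moment, split the double sum into diagonal and off-diagonal terms and use \eqref{eq:cl-membership}:
\[
 \E[w_m^2\mid\lambda]=\frac{q^2}{h^2}\Bigl(\frac hq+h(h-1)\frac{T(T-1)}{N(N-1)}\Bigr)
 =\frac qh+\frac{h-1}{h}\gamma,
 \qquad \gamma=\frac{N-q}{N-1}\le1 .
\]
Subtracting the squared mean gives $\E[(w_m-1)^2\mid\lambda]\le q/h$. Because the mean is one, $(1-w)_+=\tfrac12(|1-w|+(1-w))$, and the signed part averages to zero. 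Conditional Cauchy--Schwarz then yields
\[
 \E_\sigma[(1-w_m)_+\mid\lambda]=\tfrac12\E|1-w_m|\le\tfrac12\sqrt{q/h_{y,y'}} .
\]
Labels with $h_{y,y'}=0$ have zero probability and can be ignored.

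The second step averages over $\lambda$. The bound depends only on $(y,y')$, and by \eqref{eq:cl-yv} the $\sigma$-marginal of $(Y,Y')$ is $2^{-d}h_{y,y'}/N$. Therefore
\[
 d_\sigma\le\frac{\sqrt q}{2N}\,2^{-d}\sum_y\sum_{y':h_{y,y'}>0}\sqrt{h_{y,y'}} .
\]
For each fixed $y$ the sets $A_{y,y'}$ partition the $N$ ciphertexts into at most $2^d$ parts. Cauchy--Schwarz then gives $\sum_{y'}\sqrt{h_{y,y'}}\le\sqrt{2^dN}$. Summing over the $2^d$ values of $y$ against the factor $2^{-d}$ leaves
\[
 d_\sigma\le\frac12\sqrt{\frac{q2^d}{N}}=\frac12\,2^{-(r-d)/2},
\]
using $N=qT$. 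Nothing in this computation depends on $m$, which proves the final sentence of the statement.

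The main obstacle is the second-moment step, and it rests on two facts. First, the off-diagonal coefficient must come out as exactly $\gamma\le1$, which is what keeps the variance at $q/h$ and not larger. Second, the averaging must be done under the product law $\sigma$ rather than under $\mu$: under $\mu$ the key is correlated with $\lambda$, and the fiber counts would not have these moments. The rest is bookkeeping over zero-probability labels.
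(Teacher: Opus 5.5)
Your proposal is correct and follows the paper's proof essentially step for step: the conditional first and second moments of $w_m$ given $\Lambda$ under the product law $\sigma$, obtained from \eqref{eq:cl-membership}; the variance bound $q/h_{y,y'}$ via $\gamma\le1$; conditional Cauchy--Schwarz on the mean-zero deviation; and then averaging with the marginal \eqref{eq:cl-yv}, followed by a second Cauchy--Schwarz over the at most $2^d$ nonempty posterior sets for each $y$. Your form $q/h+\gamma(h-1)/h$ of the conditional second moment is the same as the paper's $\gamma+(q-\gamma)/H$.
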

\begin{proof}
In the ideal joint distribution $\sigma_{Z\Lambda}$, let
$W=w_m(Z,\Lambda)$ and $H=h_{Y,Y'}$. The random variable $H$ is
positive on its support. Independence of $Z$ from $\Lambda$ and
\eqref{eq:cl-membership} give the conditional moments
\[
 \E_\sigma[W\mid\Lambda]=1,\qquad
 \E_\sigma[W^2\mid\Lambda]
   =\frac q{H}+\gamma\frac{H(H-1)}{H^2}
   =\gamma+\frac{q-\gamma}{H},
 \quad\gamma=\frac{N-q}{N-1}.
\]
Indeed, a posterior set of size $h$ has $h$ diagonal ordered pairs with
membership probability $1/q$ and $h(h-1)$ distinct pairs with probability
$T(T-1)/(N(N-1))$. Multiplication by $q^2/h^2$ gives this formula for
every value of the conditional distribution. Since $0\le\gamma\le1$,
\[
 \E_\sigma[(W-1)^2\mid\Lambda]
             =(\gamma-1)+(q-\gamma)/H\le q/H.
\]
The conditional mean of $W-1$ is zero. Its positive and negative
deviations therefore have equal conditional expectations. Conditional
Cauchy--Schwarz yields
\[
 \E_\sigma[(1-W)_+\mid\Lambda]
 =\tfrac12\E_\sigma[|1-W|\mid\Lambda]
 \le\tfrac12\sqrt{\E_\sigma[(1-W)^2\mid\Lambda]}
 \le\tfrac12\sqrt{q/H}.
\]
These are identities and bounds under the ideal joint distribution. We now average them.

The conditional deficit depends on $\Lambda$ only through $(Y,Y')$.
Taking expectation and using \eqref{eq:cl-yv} gives
\begin{align*}
  d_\sigma
 &\le\sum_y2^{-d}\sum_{y':h_{y,y'}>0}
       \frac{h_{y,y'}}N\cdot\frac12\sqrt{q/h_{y,y'}}\\
 &=\frac{\sqrt q}{2N}\sum_y2^{-d}
               \sum_{y':h_{y,y'}>0}\sqrt{h_{y,y'}}.
\end{align*}
For fixed $y$, there are at most $2^d$ nonempty sets and their sizes sum
to $N$. A second Cauchy--Schwarz inequality gives
\[
 \sum_{y':h_{y,y'}>0}\sqrt{h_{y,y'}}
 \le\sqrt{\bigl|\{y':h_{y,y'}>0\}\bigr|\sum_{y'}h_{y,y'}}
 \le\sqrt{2^dN}.
\]
Substitute and use $\sum_y2^{-d}=1$ to obtain
$ d_\sigma\le\sqrt{q2^d/N}/2$. The identity $N=2^{\ell+r}$,
$q=2^\ell$ proves the last equality. The counts were independent of the
message value, so the bound is uniform in $m$.
\end{proof}
This is why a classical proof needs no spectral smoothing. Even if some outputs are very unlikely, their small probability multiplies the
posterior estimate. The loss is controlled by the number of possible short
outputs and the average posterior size, not by the inverse of the least
positive probability. In the quantum proof, the continuation map must be a state-independent channel on an entire auxiliary-register space; the operator lower bound and smoothing provide a direct way to achieve that stronger requirement.

Applying \cref{lem:cl-weight} with the exact means in
\cref{lem:cl-normalization} yields, for $\Omega_m:=\sigma A_m$,
\begin{equation}
 \Delta(P_{J_m},\Omega_m)
       \le2\alpha+2 d_\sigma
       \le2\alpha+2^{-(r-d)/2}.
 \label{eq:cl-comparison}
\end{equation}
No factor $q\eta$ occurs. Only the message-selection scaling factors depend on $m$;
the two input distributions $\mu,\sigma$ and the bound are the same for all messages.

\subsection{The classical simulator and its exact ideal distribution}
The simulator for fixed deterministic $f,g$ samples $X,Y,C$ independently
uniform, computes $X'=f(X)$ and $(Y',C')=g(Y,C)$, and tests whether
$(X',Y')=(X,Y)$. If not, it outputs $f_{\nmext(X',Y')}(C')$. If yes and
$C'=C$, it outputs $\same$. Otherwise it outputs a fresh uniform
$\ell$-bit string. This is a message-free classical experiment.

Let $w_{\rm same},w_{\rm change}$ be the hybrid probabilities of
$S=1,C'=C$ and $S=1,C'\ne C$. Let $\gamma_u$ be the probability of
$S=0$ and tampered decoded output $u$. All are nonnegative and sum to one.
In the ideal selected experiment, the refined register distribution is unchanged
by \cref{lem:cl-normalization}. Conditional on a refined register with
$S=1$ and $c'\ne c$, the ideal key is uniform subject to $f_z(c)=m$, so
\cref{lem:cl-affine} gives distribution $p_m$ for $f_z(c')$. In the other two
cases, the output does not depend on the fresh key. Consequently
\begin{align*}
 \Omega_m&=w_{\rm same}\delta_{\same}
          +w_{\rm change}p_m+\sum_u\gamma_u\delta_u,\\
 D_{f,g}&=w_{\rm same}\delta_{\same}
          +w_{\rm change}U_\ell+\sum_u\gamma_u\delta_u.
\end{align*}
Their difference is exactly $w_{\rm change}(p_m-U_\ell)$, so
\begin{equation}
 \Delta(\Omega_m,D_{f,g})=w_{\rm change}\epsp\le\epsp.
 \label{eq:cl-ideal-sim}
\end{equation}
This ideal-key counting is not applied to the actual correlated key.
It is applied only after the comparison in \eqref{eq:cl-comparison}
has bounded the distance to the ideal output distribution.

\begin{theorem}[Classical non-malleable code]\label[theorem]{thm:classical}
Under \eqref{eq:cl-inner}, the non-malleable code described in~\eqref{eq:cl-code} has perfect
correctness and is a classical two-split NMC with
\begin{equation}
 \epsn^{\rm cl}
 \le2(\eta+2^{-s})+2^{-(r-d)/2}
                 +\frac{2^\ell-1}{2^\ell(2^s-1)}.
 \label{eq:classical-final}
\end{equation}
It permits arbitrary source-independent shared random tapes. Its rate is
$\ell/(n+d+s)$, with polynomial-time encoding and decoding.
\end{theorem}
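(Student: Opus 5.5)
The plan is to assemble Theorem~\ref{thm:classical} from the appendix lemmas, first for deterministic tampering $(f,g)$ and then for shared random tapes by averaging. Correctness, rate, and complexity are already verified in the construction paragraph before Lemma~\ref{lem:cl-affine}. Perfect correctness holds because $a^\star\ne0$. Only $X,(Y,C)$ are transmitted, and the field operations take polynomial time.

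For fixed deterministic $f,g=(g_1,g_2)$ and a fixed message $m$, I would proceed in four steps.
\begin{enumerate}
\item Use the exact copy representation: the decoded output equals $\Copy(m,J_m)$ with $J_m=\Sym(Z,\Lambda,C,C')$ on every run.
\item Lemma~\ref{lem:cl-normalization} gives $P_{J_m}=\mu A_m$ together with $\E_\mu w_m=\E_\sigma w_m=1$. Combined with the coarse comparison \eqref{eq:cl-ideal} ($\Delta(\mu,\sigma)\le\eta+2^{-s}$), Lemma~\ref{lem:cl-weight} and Lemma~\ref{lem:cl-deficit} yield \eqref{eq:cl-comparison}: $\Delta(P_{J_m},\sigma A_m)\le 2(\eta+2^{-s})+2^{-(r-d)/2}$.
\item The explicit mixture computation \eqref{eq:cl-ideal-sim} gives $\Delta(\sigma A_m,D_{f,g})\le\epsp$. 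Here $D_{f,g}$ is the message-free simulator, defined before $m$ is chosen.
\item The triangle inequality, followed by applying the same deterministic map $\Copy(m,\cdot)$ to both arguments, gives $\Delta(\Dec(f(\LeftShare),g(\RightShare)),\Copy(m,D_{f,g}))\le\epsn^{\rm cl}$ with the bound \eqref{eq:classical-final}. This holds for every $m$ with the same $D_{f,g}$.
\end{enumerate}

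Next, for randomized attacks with a shared tape $\Omega$ independent of the message and encoder coins, I would condition on $\Omega=\omega$. The tampering is then deterministic, $(f_\omega,g_\omega)$, and \eqref{eq:cl-inner} still holds. The premise was stated for randomized local maps, and in any case it applies to each deterministic pair. This gives the bound for each $\omega$ with simulator $D_{f_\omega,g_\omega}$. Define $D=\sum_\omega P_\Omega(\omega)D_{f_\omega,g_\omega}$, which still does not depend on $m$. Then both the real decoded law and $\Copy(m,D)$ are $P_\Omega$-mixtures of the corresponding conditional laws, and convexity of total variation preserves the bound.

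The main obstacle is Step~2, and specifically checking that its hypotheses are genuinely met. The coarse register must come from a legitimate attack on $\nmext$: the left map is $f$ and the right map is the channel $P^{\rm hyb}_{Y'\mid Y}$, which has no ciphertext input. This ensures \eqref{eq:cl-inner} applies before $C$ is recovered. Both means must equal one exactly, so that the mean-one lemma transfers the deficit at cost $\alpha$ and not $q\alpha$. Everything else is bookkeeping of already-proved identities.
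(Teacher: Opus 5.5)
Your proposal is correct and follows the paper's proof essentially step for step. For deterministic attacks it combines \eqref{eq:cl-comparison} with \eqref{eq:cl-ideal-sim} and pushes both through $\Copy(m,\cdot)$ via the exact copy representation; shared tapes are handled by conditioning on $\Omega$ and mixing the per-tape simulators. One small precision: even for a deterministic pair $(f_\omega,g_\omega)$, the coarse attack to which \eqref{eq:cl-inner} is applied uses the randomized right channel $P^{\rm hyb}_{Y'\mid Y}$. The proof therefore needs the randomized-map form of the premise, which you correctly note is available, and not merely a deterministic-pair version.
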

\begin{proof}
For deterministic $f,g$, combine \eqref{eq:cl-comparison} and
\eqref{eq:cl-ideal-sim}, and apply stochastic data processing to the fixed
map $\Copy(m,\cdot)$. The exact real copy representation proves the
security bound for every $m$. The simulator was fixed using only $f,g$
and public parameters, so the required quantifier order holds.
Correctness and complexity were verified immediately after
\eqref{eq:cl-code}.

For randomized attacks, expose the entire pair of source-independent
random tapes $\Omega$ before the code is sampled. Conditional on any
value $\omega$, the two attacks are deterministic maps $f_\omega,g_\omega$.
The proof gives a simulator $D_\omega$ with the same parameter-only error
bound. Define $D$ by sampling $\Omega$ according to its original joint distribution
and then sampling $D_\Omega$. This distribution has no message input. If
$P_m^\omega$ is the real conditional output distribution, the triangle inequality
for finite probability sums gives
\begin{align*}
 &\Delta\left(\E_\Omega P_m^\Omega,
       \E_\Omega\Copy(m,D_\Omega)\right)\\
 &\qquad\le\E_\Omega\Delta(P_m^\Omega,\Copy(m,D_\Omega))\\
 &\qquad\le 2(\eta+2^{-s})+2^{-(r-d)/2}
                 +\frac{2^\ell-1}{2^\ell(2^s-1)}.
\end{align*}
Finite input and output alphabets give finitely many deterministic maps;
any distribution of tapes can be grouped by those map pairs, so the
argument does not require bounded tape descriptions. Shared coins thus
introduce no extra error.
\end{proof}

\subsection{The resulting constant rate, with all limits specified}
Let the classical $\nmext$ have
$d=\delta n+O(1)$ and $s_0=(1/2-\delta)n+O(1)$ output bits, as in
\eqref{eq:cl-inner-params}. Keep $2s=2\lfloor s_0/2\rfloor$ bits, set
$r=2d$, and let $\ell=s-r$. A fixed allowed $0<\delta<1/10$ gives
\[
 \ell=\frac{1-10\delta}{4}n+O(1),\quad r-d=d,\quad
 n+d+s=\frac{5+2\delta}{4}n+O(1).
\]
The first identity follows by halving the $\nmext$ output length and
subtracting $2d$; the last counts exactly the left source and right
source-plus-ciphertext. Dividing by $n$ proves limiting rate
$(1-10\delta)/(5+2\delta)$. Fixed smaller allowed values approach $1/5$.

For an $\nmext$ error $\eta\le2^{-c n^{a_{\rm err}}}$ with fixed $c,a_{\rm err}>0$, the first term in
\eqref{eq:classical-final} is stretched-exponentially small, the padding
term is $2^{-\delta n/2+O(1)}$, and the remaining seed and pair terms
are $2^{-\Omega(n)}$. Their finite sum is stretched-exponentially small
and in particular negligible. Every step of the outer argument is classical; BBJ is invoked only
to instantiate the explicitly stated inner attack.

\section{Elementary quantum facts used in the reduction}
\label{app:quantum-facts}
We give the finite-dimensional arguments behind \cref{lem:tools} and the
CP map formalism. The spectral theorem, basic properties of tensor products,
and ordinary Cauchy--Schwarz are the linear-algebra background. The
arguments in this appendix use only these linear-algebra facts.

\subsection{Kraus representation and probability operators}
Let $\Phi$ be a CP map from matrices on $A$ to matrices on $B$. For a basis
$\{\ket i\}$ of $A$, set $\ket\Omega=\sum_i\ket i\ket i$ and
\[
 J_\Phi=(\Id\otimes\Phi)(\ket\Omega\bra\Omega)
       =\sum_{i,j}\ket i\bra j\otimes\Phi(\ket i\bra j).
\]
Complete positivity makes $J_\Phi$ positive. Its spectral decomposition
can be written $J_\Phi=\sum_k\ket{v_k}\bra{v_k}$ by absorbing each
nonnegative eigenvalue into the corresponding vector. For each $k$ there
is a unique linear map $L_k:A\to B$ with
$\ket{v_k}=\sum_i\ket i\otimes L_k\ket i$. Comparing $(i,j)$ blocks
therefore gives
\[
 \Phi(\ket i\bra j)=\sum_kL_k\ket i\bra jL_k^\dagger.
\]
The matrix units span all matrices, so the Kraus formula holds for every
input. Conversely, a sum of such conjugations remains positive after
adjoining any reference, since it becomes conjugation by $I\otimes L_k$.
This proves the equivalence used in the paper.

Cyclicity of trace gives
$\Tr\Phi(T)=\Tr[(\sum_kL_k^\dagger L_k)T]$.
Equality to $\Tr T$ for all positive $T$ is equivalent to
$\sum_kL_k^\dagger L_k=I$, by testing rank-one projectors. The analogous
inequality is equivalent to $\sum_kL_k^\dagger L_k\preceq I$.
More generally,
$\Phi^*(B)=\sum_kL_k^\dagger B L_k$ satisfies the adjoint identity.
An outcome CP map's probability operator is $\Phi^*(I)$, while its
state update is the whole map $\Phi$.

To illustrate why the distinction matters, consider the qubit CP maps
\[
 \Phi_v(\rho)=\tfrac12\rho,
 \qquad
 \Psi_v(\rho)=\tfrac12\Tr(\rho)\ket v\bra v,
 \quad v\in\{0,1\}.
\]
Both families are instruments and both have probability operators $I/2$.
The first retains its input state on either outcome. The second discards
it and prepares the observed bit. Identical event probabilities do not
specify identical residual states, even without an entangled reference.
The exact map identity in \cref{lem:continuation} controls this additional
information.

\subsection{Variational formula and data processing}
For normalized $\rho,\sigma$, let $A=\rho-\sigma$. Its spectral positive
and negative parts obey $A=A_+-A_-$, $A_+A_-=0$, and
$\Tr A_+=\Tr A_- =\|A\|_1/2$, since $\Tr A=0$. For
$0\preceq B\preceq I$, positivity of traces gives
\[
 -\Tr A_-\le\Tr BA\le\Tr A_+.
\]
The projector onto the positive eigenspace attains the upper bound; the
negative eigenspace attains its absolute counterpart. This proves the
variational identity of \cref{lem:tools} and hence the bound on every
measurement event. The equality uses equal trace and is not silently
extended to unequal-trace subnormalized inputs.

If $\Phi$ is positive and trace-nonincreasing, then for Hermitian $A$,
\begin{align*}
 \|\Phi(A)\|_1
 &\le\|\Phi(A_+)\|_1+\|\Phi(A_-)\|_1\\
 &=\Tr\Phi(A_+)+\Tr\Phi(A_-)
 \le\Tr A_++\Tr A_-=\|A\|_1.
\end{align*}
Every CP trace-nonincreasing map is such a map. This proves data processing
on Hermitian differences. In particular, channels, including those that
discard quantum registers or compute classical functions, cannot increase
trace distance. Selecting
a classical event without renormalizing is a CP trace-nonincreasing map:
conjugate by its classical projector and optionally discard registers.
Normalized conditioning is a different, generally nonlinear operation.

For block-diagonal Hermitian matrices, the eigenvalues of a direct sum
are the union of its blocks' eigenvalues. Summing their absolute values
proves
\[
 \left\|\sum_x\ket x\bra x\otimes A_x\right\|_1
                 =\sum_x\|A_x\|_1.
\]
To tensor a difference with a fixed normalized state, diagonalize that
state as $\sum_jp_j\ket j\bra j$. The same identity gives
$\|A\otimes\theta\|_1=\sum_jp_j\|A\|_1=\|A\|_1$.
These observations prove all processing and classical-block assertions
used in \cref{lem:tools}.

\subsection{Purification and the state-scaled trace bound}
For unit vectors $u,v$, multiply $v$ by a harmless phase so their inner
product $c$ is real and nonnegative. If $c<1$, write
$v=cu+\sqrt{1-c^2}w$ with $w\perp u$. On the span of $u,w$, the difference
of their rank-one projectors has matrix
\[
 \begin{pmatrix}
 1-c^2&-c\sqrt{1-c^2}\\
 -c\sqrt{1-c^2}&-(1-c^2)
 \end{pmatrix}.
\]
Its eigenvalues are $\pm\sqrt{1-c^2}$, and its other eigenvalues are zero.
Thus the half trace norm is $\sqrt{1-c^2}$. If $c=1$, the projectors
are equal and the same formula gives zero. If $u,v$ purify states on a
smaller register, discard their reference and apply data processing. This
proves the pure-state assertion.

A purification of $\theta=\sum_i t_i\ket{e_i}\bra{e_i}$ is
$\Theta=\sum_i\sqrt{t_i}\ket{e_i}\ket i$.
For a positive $H$, applying $\sqrt H$ to its first register gives
squared norm $\Tr\theta H$ and overlap $\Tr\theta\sqrt H$ with
$\Theta$. These equalities follow by expanding the sums and using
orthogonality of $\ket i$. They justify the normalization and overlap
calculation in \cref{lem:weight-one}, without choosing a common eigenbasis
for $\theta$ and $H$.

For a Hermitian $A$, apply Hilbert--Schmidt Cauchy--Schwarz to
$\sqrt\theta$ and $|A|\sqrt\theta$. Their squared norms are
$\Tr\theta=1$ and $\Tr\theta A^2$, and their inner product is
$\Tr\theta|A|$. Hence
\[
 \Tr\theta|A|\le\sqrt{\Tr\theta A^2}.
\]
For a scalar nonnegative random variable $X$, ordinary Cauchy--Schwarz
applied to the vectors with entries $\sqrt{p_i}$ and $\sqrt{p_ix_i}$
gives $\E\sqrt X\le\sqrt{\E X}$. These are the two distinct steps used
in the moment-to-deficit estimate. The normalized trace prevents an
extraneous Hilbert-space dimension factor.

\subsection{No signalling versus selected outcomes}
Let $\omega_{AB}$ be bipartite and $\Phi:A\to A'$ be a channel with
Kraus operators $L_j$. For every test matrix $T_B$,
\[
 \Tr[(I_{A'}\otimes T_B)(\Phi\otimes\Id)(\omega)]
 =\sum_j\Tr[(L_j^\dagger L_j\otimes T_B)\omega]
 =\Tr[(I_A\otimes T_B)\omega].
\]
Equality for all $T_B$ proves that the reduced state on $B$ is unchanged.
This holds for each fixed classical value controlling the channel. It
requires the sum over all outcomes of a local instrument. On a single
selected outcome its map need not be $I$, and the conditional state on the other side can change. This is precisely why the classical posterior proof cannot simply be reused by replacing scalar conditional states with quantum states.

\clearpage
\section{Register diagrams and message conditioning}
\label{app:diagrams}
These original diagrams follow BBJ's use of named registers, local maps,
and intermediate states~\cite[Figures 6 and 11--14]{BBJ23}. They display
our extra message-selection step; they do not reproduce BBJ's figures.

\subsection{Notation and the registered tampering experiment}
\label{app:diagram-notation}
As in BBJ, $E_1,E_2$ denote auxiliary quantum registers and $E'_1,E'_2$
their outputs; $\cA_1,\cA_2$ are the channel forms of its local isometries
$U,V$. A register name specifies a system, whereas $\rho_{E_2}$ names its
state. The coarse map acts on $E_2$ itself, but changes its state.
The only deliberate differences in the payload notation are:
\begin{center}
\small
\begin{tabular}{@{}L{4cm}L{2cm}L{5cm}@{}}
\toprule
Object & BBJ & This paper\\\midrule
Extractor output / permutation key & $R$ & $Z=\nmext(X,Y)$\\[3pt]
Payload ciphertext & $Z$ & $C=\pi_Z(M\|R)$\\[3pt]
Fresh padding & --- & $R\leftarrow U_r$\\
\bottomrule
\end{tabular}
\end{center}
Thus $Z,C$ consistently denote the original key and ciphertext, the two
registers needed by the message test. Our $\mu,\sigma$ are defined here;
they are not identified with similarly named states in BBJ. A superscript
on $U_k^A$ specifies its register $A$, not a power.

\begin{figure}[H]
\centering
\resizebox{\linewidth}{!}{%
\begin{tikzpicture}[x=1cm,y=1cm]
 \foreach \yy/\lab in {1/E_1,0/X,-1/Y,-2/C,-3/E_2} {
  \node[bbjstate,anchor=east] at (.25,\yy) {$\lab$};
 }
 \draw[bbjarr] (.4,1)--(2.2,1); \draw[bbjarr] (3.8,1)--(5.8,1);
 \node[bbjstate,anchor=west] at (5.8,1) {$E'_1$ (discard)};
 \draw[double,double distance=.5pt,line width=.35pt] (.4,0)--(2.2,0);
 \draw[double,double distance=.5pt,line width=.35pt] (3.8,0)--(8.4,0);
 \foreach \yy in {-1,-2} {
  \draw[double,double distance=.5pt,line width=.35pt] (.4,\yy)--(2.2,\yy);
  \draw[double,double distance=.5pt,line width=.35pt] (3.8,\yy)--(8.4,\yy);
 }
 \draw[bbjarr] (.4,-3)--(2.2,-3); \draw[bbjarr] (3.8,-3)--(6.3,-3);
 \node[bbjstate,anchor=west] at (6.3,-3) {$E'_2$ (retained in the proof)};
 \node[bbjbox,minimum width=1.6cm,minimum height=1.55cm] at (3,.5)
       {$\cA_1$\\CPTP};
 \node[bbjbox,minimum width=1.6cm,minimum height=2.55cm] at (3,-2)
       {$\cA_{2,\zeta}$\\CPTP};
 \node[bbjlab] at (6.4,.16) {$X'$};
 \node[bbjlab] at (6.4,-.84) {$Y'$};
 \node[bbjlab] at (6.4,-1.84) {$C'$};
 \node[bbjbox,minimum width=1.6cm,minimum height=2.55cm] at (9.2,-1)
       {$\Dec$\\CPTP};
 \draw[bbjarr,double,double distance=.5pt] (10,-1)--(11.1,-1);
 \node[bbjstate,anchor=west] at (11.1,-1) {$M'$};
 \draw[densely dashed] (1.1,1.8)--(1.1,-3.55);
 \draw[densely dashed] (7.9,.6)--(7.9,-2.55);
 \node[bbjlab] at (1.1,2.05) {$\rho_{\rm enc}$};
 \node[bbjlab] at (7.9,1.65) {$\rho_{\rm tam}$};
\end{tikzpicture}%
}
\caption{The two local quantum channels have no communication link. Double lines
carry classical registers; single lines carry quantum registers. The input on $E_1E_2$ is $\psi_{E_1E_2}$, independent of the encoding
on $X,(Y,C)$. Reserved classical copies are omitted; they are used only
by the analyst. Dashed cuts label states. The smoothing in $\cA_{2,\zeta}$
is a proof device, not part of the encoder.}
\label{fig:register-tampering}
\end{figure}

\Needspace{10\baselineskip}
\subsection{Why a uniform message gives the uniform-ciphertext experiment}
\label{app:uniform-diagram}
For this comparison only, take $M\leftarrow U_\ell$ as well as
$R\leftarrow U_r$, independently of $X,Y$. Then $M\|R$ is uniform on $s$
bits. Every raw seed specifies a bijection, so for every source pair
\[
 \Pr[C=c\mid X=x,Y=y]=N^{-1},\qquad M=f_{z(x,y)}(C).
\]
This is an exact distributional statement, even though the actual key
$z(x,y)$ need not be uniform. The two state-preparation channels below give
the same joint state, not merely the same ciphertext marginal.

\begin{figure}[H]
\centering
\resizebox{\linewidth}{!}{%
\begin{tikzpicture}[x=1cm,y=1cm]
 \node[bbjstate,text width=2.8cm] (a) at (1.4,0)
   {$U_\ell^M\otimes U_r^R$\\$\otimes U_n^X\otimes U_d^Y$};
 \node[bbjbox,text width=3.5cm,minimum height=1.55cm] (e) at (5.75,0)
   {$\mathcal E_{\rm rec}$ (CPTP)\\$Z=\nmext(X,Y)$\\$C=\pi_Z(M\|R)$};
 \node[bbjstate,text width=2.6cm] (o) at (10.15,0)
   {$\omega_{MZXYC}$};
 \draw[bbjarr] (a)--(e); \draw[bbjarr] (e)--(o);
 \node[bbjstate,text width=2.8cm] (b) at (1.4,-2.8)
   {$U_n^X\otimes U_d^Y$\\$\otimes U_s^C$};
 \node[bbjbox,text width=3.5cm,minimum height=1.55cm] (f) at (5.75,-2.8)
   {$\mathcal F_{\rm rec}$ (CPTP)\\$Z=\nmext(X,Y)$\\$M=f_Z(C)$};
 \node[bbjstate,text width=2.6cm] (p) at (10.15,-2.8)
   {$\omega_{MZXYC}$};
 \draw[bbjarr] (b)--(f); \draw[bbjarr] (f)--(p);
 \draw[bbjcmp] (o)--node[bbjlab,right] {exactly\\equal}(p);
\end{tikzpicture}%
}
\caption{Two equivalent preparations before tampering. The upper channel
retains $M,Z,X,Y,C$ and discards its padding register. The lower channel
computes the same $M$ from $Z,C$. Both preserve the indicated original
classical values. Appending the same state $\psi_{E_1E_2}$ and applying the attack of Figure~\ref{fig:register-tampering} preserves equality.
There is no approximation or BBJ invocation in this figure.}
\label{fig:uniform-bbj}
\end{figure}

Indeed their common probability is, with $p(x,y)=2^{-n-d}$,
\[
 \Pr[M=m,X=x,Y=y,C=c]
   =\frac{p(x,y)}N\ind[f_{z(x,y)}(c)=m],
 \qquad Z=z(x,y).
\]
In the upper experiment exactly one padding string produces a given
$c$ with the indicated prefix, and it has probability $1/(qT)=1/N$.
In the lower experiment the predicate simply determines the computed $M$.
The conditional output state of the quantum attack is the same for each
classical input triple, so multiplying it by these identical probabilities
proves equality with $E'_2$ retained.

After the attack, let $\tau^+_{MZ\Lambda CC'E'_2}$ denote this
common state, with $M$ and all original values used by the test reserved.
It obeys $\Pr_{\tau^+}[M=m]=1/q$. It does \emph{not} generally factor as
$U_\ell^M$ times the other registers: the copied message is correlated with
$Z,C$. Selecting $M=m$ gives the fixed-message real experiment, by
\cref{lem:real-hybrid}. Replacing the contents of $M$ by $m$ without
selection would not give that experiment.

\Needspace{10\baselineskip}
\subsection{Exactly where BBJ is invoked}
\label{app:coarse-diagram}
Set $\rho_0=U_n^X\otimes U_d^Y\otimes\psi_{E_1E_2}$. Let
$\mathcal B$ be the following CPTP map: apply the original left operation
and the complete coarse right instrument, compute $Z,S,\overline Z'$ from
reserved classical source and output copies, retain
$\Lambda=(Y,Y',S,\overline Z')$ and $E_2$, and discard the other registers.
The computations of the flag and masked key are analyst operations, not
communication within the attack. Then $\mathcal B(\rho_0)=\mu_{Z\Lambda E_2}$.

\begin{figure}[H]
\centering
\resizebox{\linewidth}{!}{%
\begin{tikzpicture}[x=1cm,y=1cm]
 \node[bbjstate,text width=2.4cm] (in) at (1.2,0) {$\rho_0$\\$XYE_1E_2$};
 \node[bbjbox,text width=3.6cm] (co) at (5.8,0)
  {$\mathcal B$ (CPTP)\\coarse local attack\\then register};
 \node[bbjstate,text width=2.5cm] (mu) at (10.2,0) {$\mu_{Z\Lambda E_2}$\\no $M$ selected};
 \draw[bbjarr] (in)--(co); \draw[bbjarr] (co)--(mu);
 \node[bbjstate,text width=2.4cm] (mu2) at (1.2,-2) {$\mu_{Z\Lambda E_2}$};
 \node[bbjbox,text width=3.6cm] (rr) at (5.8,-2)
  {$\mathcal R^+$ (CPTP)\\retain $Z$\\recover $C,C',E'_2$};
 \node[bbjstate,text width=2.5cm] (tau) at (10.2,-2) {$\tau_{Z\Lambda CC'E'_2}$\\full hybrid state};
 \draw[bbjarr] (mu2)--(rr); \draw[bbjarr] (rr)--(tau);
 \node[bbjstate,text width=2.3cm] (lo) at (1.2,-4) {$\mu$};
 \node[bbjstate,text width=3.7cm] (mid) at (5.8,-4) {$U_{2s}\otimes\mu_{\Lambda E_2}$};
 \node[bbjstate,text width=2.5cm] (hi) at (10.2,-4) {$\sigma$\\$U_{\cK}\otimes\mu_{\Lambda E_2}$};
 \draw[bbjcmp] (lo)--node[bbjlab,above] {BBJ: $\Delta\le\eta$}(mid);
 \draw[bbjcmp] (mid)--node[bbjlab,above] {$\Delta=N^{-1}$}(hi);
\end{tikzpicture}%
}
\caption{Solid arrows apply the indicated CPTP maps. The bottom dashed
connections are distance estimates, not physical operations that transform
one state into the other. BBJ is applied at $\mu$, \emph{before} continuation
and before any message selection. Only the original key is idealized; the
entire side state on $\Lambda E_2$ retains its actual marginal.}
\label{fig:coarse-bbj}
\end{figure}

The maps on the right register are exactly those of
\cref{lem:continuation}:
\[
 \mathcal C_{y'}^y(A)=\sqrt{G_{y'}^y}A\sqrt{G_{y'}^y},\qquad
 \mathcal R^{y,y'}_{c,c'}(A)=N^{-1}\mathcal V^{y,c}_{y',c'}
                  ((G_{y'}^y)^{-1/2}A(G_{y'}^y)^{-1/2}),
\]
\[
 \mathcal R^{y,y'}_{c,c'}\circ\mathcal C_{y'}^y
        =N^{-1}\mathcal V^{y,c}_{y',c'}.
\]
Each map is CP and trace-nonincreasing; the respective complete sums are
trace-preserving. The equality is on all input matrices, so it preserves
correlations with any reference. In $\mathcal R^+$, continuation is controlled
by $Y,Y'$ in $\Lambda$, and the original $Z$ is carried along unchanged.
On arbitrary input matrices its classical labels are first dephased, as in
\eqref{eq:Phi}. Consequently
\[
 \tau=\mathcal R^+(\mu),\qquad
 \tau_*:=\mathcal R^+(\sigma)=U_{\cK}\otimes\nu_{\Lambda CC'E'_2}.
\]
Neither $\tau$ nor $\tau_*$ has yet been conditioned on a message. The
conditional prefix test will need only $Z,C$, both now present. In particular,
there is no attempt to invoke BBJ on already conditioned source registers.

\Needspace{10\baselineskip}
\subsection{Computing a message, selecting it, and normalizing are different maps}
\label{app:selection-diagram}
Define $\mathcal F$ on the continued state by computing the classical
function $M=f_Z(C)$ into a fresh register while retaining all its inputs.
On computational basis states it is the isometry
$\ket{z,c}\mapsto\ket{f_z(c)}_M\ket{z,c}$, extended by the identity
on the other registers. In the earlier notation, $\mathcal F(\tau)=\tau^+$. For $Q_m=\ket m\bra m_M\otimes I$, define
\[
 \mathcal P_m(B)=\Tr_M(Q_m B Q_m).
\]
This is a CP trace-nonincreasing selection map. It retains the event's
probability in the trace. On a state whose classical $M$ was computed by
$\mathcal F$, its output is the same as projecting directly on $Z,C$:
\[
 \mathcal P_m\circ\mathcal F(A)=\Pi_m A\Pi_m,\qquad
 \Pi_m=\sum_{z,c:\,f_z(c)=m}\ket{z,c}\bra{z,c}\otimes I_{\Lambda C'E'_2}.
\]
Register labels identify the tensor factors; canonical register permutations
are implicit when the displayed order changes.
Only the original $Z,C$ are tested; the other registers have the identity
applied to them. On $\tau$ and $\tau_*$, the event has probability exactly
$1/q$. Thus the one fixed linear map
$\mathcal T_m(A)=q\Pi_m A\Pi_m$ produces a normalized output on both states.

\begin{figure}[H]
\centering
\resizebox{\linewidth}{!}{%
\begin{tikzpicture}[x=1cm,y=1cm]
 \node[bbjstate,text width=2.25cm] (a) at (1.15,0) {$\tau$ or $\tau_*$\\trace $1$};
 \node[bbjstate,text width=3.1cm] (b) at (5.8,0) {$\mathcal F(\tau)$ or $\mathcal F(\tau_*)$\\$\Pr[M=m]=1/q$};
 \node[bbjstate,text width=2.7cm] (c) at (10.2,0) {$\Pi_m\tau\Pi_m$\\or $\Pi_m\tau_*\Pi_m$\\trace $1/q$};
 \draw[bbjarr] (a)--node[bbjlab,above] {$\mathcal F$\\CPTP}(b);
 \draw[bbjarr] (b)--node[bbjlab,above] {$\mathcal P_m$\\CP--TNI} node[bbjlab,below] {$M=m$}(c);
 \node[bbjstate,text width=2.25cm] (d) at (1.15,-2.25)
   {selected state\\trace $1/q$};
 \node[bbjstate,text width=3.1cm] (e) at (5.8,-2.25)
   {$\mathcal T_m(\tau)$ or $\mathcal T_m(\tau_*)$\\trace $1$};
 \node[bbjstate,text width=2.7cm] (f) at (10.2,-2.25)
   {$P_{J_m}$ or $\Omega_m$\\trace $1$};
 \draw[bbjarr,rounded corners=2pt] (c.south) -- (10.2,-1.1)
    -- node[bbjlab,above] {same selected state} (1.15,-1.1) -- (d.north);
 \draw[bbjarr] (d)--node[bbjlab,above=5pt] {$A\mapsto qA$\\CP; not TNI}(e);
 \draw[bbjarr] (e)--node[bbjlab,above] {$\mathcal J$\\CPTP}(f);
\end{tikzpicture}%
}
\caption{The top row computes the original message and selects $M=m$; the
bottom row normalizes the selected state and computes the output symbol.
The connector carries the same selected state from the top row to the
bottom row; it performs no additional operation. CP-TNI means
completely positive and trace-nonincreasing. Multiplication by $q$ is an analytical normalization, not an outcome CP map on arbitrary inputs. The
same maps act on the actual and ideal states.}
\label{fig:conditioning-bbj}
\end{figure}

The event probability is the same for two different reasons. In the actual
state $\tau$, equal fiber size gives $T/N=1/q$ for every source pair. In the
ideal state $\tau_*=U_{\cK}\otimes\nu$, for every continued ciphertext $c$,
one-point uniformity gives $\Pr_{Z\leftarrow U_{\cK}}[f_Z(c)=m]=1/q$.
Neither statement asserts probability $1/q$ on every fixed key--transcript
pair. They are normalizations of the two complete states.

The map $\mathcal J$ computes \eqref{eq:j-symbol} and discards its other
registers. It still needs $Z$ when $S=1,C'\ne C$, which is why the key is
not discarded by the preceding maps. In particular,
\[
 \Phi_m=\mathcal J\circ\mathcal T_m\circ\mathcal R^+,
 \quad \Phi_m(\mu)=P_{J_m},\quad \Phi_m(\sigma)=\Omega_m.
\]

\Needspace{10\baselineskip}
\subsection{The two-state comparison through the same message-selection map}
\label{app:comparison-diagram}
A dashed comparison is not a sampling instruction. In particular, the top
estimate below is proved by BBJ and the seed restriction; it cannot be
interpreted as granting an attacker the ability to replace a derived key.
All classical labels in $\Lambda$ keep their actual joint marginal with
$E_2$. They are not recomputed from the independent ideal key.

\begin{figure}[H]
\centering
\resizebox{\linewidth}{!}{%
\begin{tikzpicture}[x=1cm,y=1cm]
 \node[bbjstate,text width=4.4cm] (a) at (2.35,0) {$\mu_{Z\Lambda E_2}$\\actual intermediate state};
 \node[bbjstate,text width=4.4cm] (b) at (9.35,0) {$\sigma=U_{\cK}\otimes\mu_{\Lambda E_2}$\\ideal original key};
 \draw[bbjcmp] (a)--node[bbjlab,above] {$\Delta\le\alpha$}(b);
 \node[bbjstate,text width=4.4cm] (c) at (2.35,-2.1) {$\tau=\mathcal R^+(\mu)$\\$Z\Lambda CC'E'_2$};
 \node[bbjstate,text width=4.4cm] (d) at (9.35,-2.1) {$\tau_*=U_{\cK}\otimes\nu$\\$Z\Lambda CC'E'_2$};
 \draw[bbjarr] (a)--node[bbjlab,right] {$\mathcal R^+$\\CPTP}(c);
 \draw[bbjarr] (b)--node[bbjlab,left] {$\mathcal R^+$\\CPTP}(d);
 \draw[bbjcmp] (c)--node[bbjlab,above] {$\Delta\le\alpha$}(d);
 \node[bbjstate,text width=4.4cm] (e) at (2.35,-4.8) {$P_{J_m}=\Phi_m(\mu)$\\normalized distribution};
 \node[bbjstate,text width=4.4cm] (f) at (9.35,-4.8) {$\Omega_m=\Phi_m(\sigma)$\\normalized distribution};
 \draw[bbjarr] (c)--node[bbjlab,right,text width=2.8cm] {$\mathcal J\circ\mathcal T_m$\\CP; not TNI}(e);
 \draw[bbjarr] (d)--node[bbjlab,left,text width=2.8cm] {$\mathcal J\circ\mathcal T_m$\\CP; not TNI}(f);
 \draw[bbjcmp] (e)--node[bbjlab,above] {$\Delta\le\beta$}(f);
\end{tikzpicture}%
}
\caption{The bottom estimate is the mean-one scaling factor theorem, not ordinary
data processing of $\mathcal T_m$. The two solid paths compose to the same CP
map $\Phi_m$. Only this map depends on the requested $m$; the states
$\mu,\sigma,\nu$ and the continuation were defined without it. All distances
compare states on the same register space.}
\label{fig:comparison-bbj}
\end{figure}

The middle comparison is ordinary CPTP data processing. Applying that argument
to the subsequent rescaled selector would be incorrect: $\mathcal T_m$
can increase trace on other states. Instead the proof applies
\cref{thm:weighting} to the \emph{complete} $\Phi_m$, with
\[
 \alpha=\eta+N^{-1},\qquad
 H=\Phi_m^*(I),\qquad \Tr H\mu=\Tr H\sigma=1,
 \qquad d_\sigma\le\frac12\sqrt{2^{d-r}/\zeta}.
\]
It yields $\beta=\alpha+\sqrt{2\alpha}+2(2^{d-r}/\zeta)^{1/4}$.
The factor $q$ occurs inside the selector, but does not multiply $\eta$.

The final diagrammatic path can be read as the following distributional
comparisons, using the same deterministic $\Copy(m,\cdot)$ on each symbol:
\[
 \Dec(\cA(\Enc(m)))
 \ \stackrel{\zeta}{\approx}\ \Copy(m,P_{J_m})
 \ \stackrel{\beta}{\approx}\ \Copy(m,\Omega_m)
 \ \stackrel{\epsp}{\approx}\ \Copy(m,D_{\cA,\zeta}).
\]
Here $\stackrel{a}{\approx}$ means total variation distance at most $a$;
it is not a channel arrow. The distribution $D_{\cA,\zeta}$ is sampled
from the message-free hybrid, independently of $m$. The only residual
message dependence of $\Omega_m$ is the explicitly charged permutation
correction. Thus the diagrams preserve $\forall\cA\,\exists D\,\forall m$.

\subsection{The classical scaling factor lemma: one alphabet, then a new alphabet}
\label{app:classical-diagram}
In \cref{lem:cl-weight}, $P,Q$ are distributions on a finite set
$\mathcal I$. The function $w$ is a nonnegative \emph{scaling factor function} on
that set. The assumptions $\E_Pw=\E_Qw=1$ normalize the pointwise products
$Pw,Qw$; they do not state that $w$ is itself a probability distribution.
The array $A(j,i)$ additionally describes outputs indexed by another finite
set $\mathcal J$.

\begin{figure}[H]
\centering
\resizebox{\linewidth}{!}{%
\begin{tikzpicture}[x=1cm,y=1cm]
 \node[bbjstate,text width=2.3cm] (a) at (1.2,0) {$P$ on $\mathcal I$};
 \node[bbjstate,text width=2.6cm] (b) at (5.8,0) {$Pw$ on $\mathcal I$};
 \node[bbjstate,text width=2.5cm] (c) at (10.3,0) {$PA$ on $\mathcal J$};
 \node[bbjstate,text width=2.3cm] (d) at (1.2,-1.7) {$Q$ on $\mathcal I$};
 \node[bbjstate,text width=2.6cm] (e) at (5.8,-1.7) {$Qw$ on $\mathcal I$};
 \node[bbjstate,text width=2.5cm] (f) at (10.3,-1.7) {$QA$ on $\mathcal J$};
 \draw[bbjarr] (a)--node[bbjlab,above] {scaling factor $\times w$\\not a channel}(b);
 \draw[bbjarr] (d)--node[bbjlab,below] {same scaling factor $w$\\both means one}(e);
 \draw[bbjarr] (b)--node[bbjlab,above] {$P_{J\mid I}$\\classical channel}(c);
 \draw[bbjarr] (e)--node[bbjlab,below] {the same $P_{J\mid I}$\\classical channel}(f);
 \draw[bbjcmp] (a)--node[bbjlab] {$\Delta\le\alpha$}(d);
 \draw[bbjcmp] (b)--node[bbjlab] {$\Delta\le2\alpha+2 d_Q$}(e);
 \draw[bbjcmp] (c)--node[bbjlab] {no larger\\distance}(f);
\end{tikzpicture}%
}
\caption{Pointwise rescaling stays on the input alphabet. The second
operation samples a new value from a conditional distribution, so it may
change the alphabet. The last pair of arrows contracts total variation;
the first pair need not. The proof bounds the first stage by bounded deficits.}
\label{fig:classical-weight-bbj}
\end{figure}

For $w(i)>0$, $P_{J\mid I}(j\mid i)=A(j,i)/w(i)$. Zero-scaling factor input slices can
be completed arbitrarily because both rescaled inputs assign them mass
zero. The output formula is
\[
 (PA)(j)=\sum_{i\in\mathcal I}P(i)w(i)P_{J\mid I}(j\mid i),
\]
not the pointwise product $P(j)A(j)$. The data processing in
\cref{lem:cl-weight} is precisely the inequality
$\Delta(PA,QA)\le\Delta(Pw,Qw)$.

\heading{Its exact CP-map analogue.}
Identify a distribution on $\mathcal I$ with a diagonal density matrix.
Put $D_w=\sum_iw(i)\ket i\bra i$. The map
$\mathcal W(B)=\sqrt{D_w}B\sqrt{D_w}$ is CP and sends
$\operatorname{diag}(P)$ to $\operatorname{diag}(Pw)$.
The classical channel is the CPTP map
\[
 \mathcal K(B)=\sum_{i,j}P_{J\mid I}(j\mid i)
       \langle i|B|i\rangle\ket j\bra j.
\]
Its Kraus operators $K_{ji}=\sqrt{P_{J\mid I}(j\mid i)}\ket j\bra i$
satisfy $\sum_{i,j}K_{ji}^\dagger K_{ji}=I$. Therefore
$(\mathcal K\circ\mathcal W)(\operatorname{diag}P)
=\operatorname{diag}(PA)$. This is the classical version of factoring a
CP map through its output-scaling factor operator. It is not a claim that
pointwise scaling itself is trace-preserving.

For example, $P=(1/3,1/3,1/3)$ and $w=(0,1,2)$ give
$Pw=(0,1/3,2/3)$. Taking the three conditional output distributions to
be $(1,0)$, $(1/2,1/2)$, and $(0,1)$ gives $PA=(1/6,5/6)$.
The input alphabet has three values, the output alphabet two;
$\sum_iw(i)=3$ has no normalization role.

\end{document}